\newtheorem{theorem}{Theorem}
\newtheorem{proposition}{Proposition}
\newtheorem{corollary}{Corollary}
\newtheorem{lemma}{Lemma}
\newtheorem{definition}{Definition}
\newtheorem{obv}{Observation}
\newtheorem{rmk}{Remark}
\newcommand{\mf}[1]{#1}
\newcommand{\tr}{\mathrm{tr}}
\newcommand{\one}{\mathbbm{1}}
\newcommand{\diag}{\mathrm{diag}}
\newcommand{\id}{\mathrm{id}}
\newcommand{\Ad}{\mathrm{Ad}}
\newcommand{\UH}{\mc{U}(\cH)}
\newcommand{\SH}{\mc{S}(\cH)}
\newcommand{\LH}{\mc{L}(\cH)}
\newcommand{\LHsa}{{\LH_\mathrm{sa}}}
\newcommand{\ra}{\ensuremath{\rightarrow}}
\newcommand{\zz}{\mathbb Z}
\newcommand{\C}{\mathbb C}
\newcommand{\Z}{\mathbb Z}
\newcommand{\N}{\mathbb N}
\newcommand{\bi}{\mathbf{i}}
\newcommand{\bj}{\mathbf{j}}
\newcommand\mc[1]{\mathcal{#1}}
\newcommand{\cC}{\mc{C}}
\newcommand{\cD}{\mc{D}}
\newcommand{\cH}{\mc{H}}
\newcommand{\cO}{\mc{O}}
\newcommand{\cP}{\mc{P}}
\newcommand{\cU}{\mc{U}}
\newcommand{\SC}{\mc{SC}}
\newcommand{\suchthat}{\;\ifnum\currentgrouptype=16 \middle\fi|\;}
\begin{document}

\title{No quantum solutions to linear constraint systems from monomial measurement-based quantum computation in odd prime dimension}
\author{Markus Frembs}
\email{markus.frembs@itp.uni-hannover.de}
\affiliation{Institut f\"ur Theoretische Physik, Leibniz Universit\"at Hannover, Appelstraße 2, 30167 Hannover, Germany}
\author{Cihan Okay, Ho Yiu Chung}
\affiliation{Department of Mathematics, Bilkent University, Ankara, Turkey}

\begin{abstract}
    We combine the study of resources in measurement-based quantum computation (MBQC) with that of quantum solutions to linear constraint systems (LCS). Contextuality of the input state in MBQC has been identified as a key resource for quantum advantage, and in a stronger form, underlies algebraic relations between (measurement) operators which obey classically unsatisfiable (linear) constraints. Here, we compare these two perspectives on contextuality, and study to what extent they are related. More precisely, we associate a LCS to certain MBQC which exhibit strong forms of state-dependent contextuality, and ask if the measurement operators in such MBQC give rise to state-independent contextuality in the form of quantum solutions of its associated LCS. Our main result rules out such quantum solutions for a large class of MBQC. This both sharpens the distinction between state-dependent and state-independent forms of contextuality, and further generalises results on the non-existence of quantum solutions to LCS in finite odd (prime) dimension.
\end{abstract}
\maketitle

\vspace{-.75cm}

\section{Introduction}\label{sec: introduction}

Imagine you know the solution to some hard problem or computational task. You, the `prover', want to convince another party, the `verifier', of the fact that you really do know this solution. The complexity class $\mathrm{IP}$ of so-called interactive proofs represents the class of problems for which you are able to succeed in this task, and it corresponds with $\mathrm{IP=PSPACE}$, roughly the class of problems which are solvable with a deterministic Turing machine of polynomial-sized memory (running in possibly exponential time) \cite{Shamir1990}.\footnote{Since in the following we will not be concerned with complexity-theoretic matters, we refer the interested reader for precise definitions of (the definitions involved in) these statements to the cited references, specifically Ref.~\cite{JiEtAl2020}.} The situation changes if the verifier can not only interrogate a single, but multiple provers, and thus gains access to the correlations between their responses. In this case, of so-called multi-prover interactive proofs, $\mathrm{MIP}$, she can verify solutions to problems in $\mathrm{MIP=NEXP}\supsetneq\mathrm{PSPACE}$: non-deterministic, exponential time Turing machines \cite{BabaiEtAl1991}. Multi-prover interactive proofs can be interpreted as nonlocal games, which are winnable by provers who are not allowed to communicate, but may share a common (classical) strategy \cite{Mermin1990,Peres1991,CleveEtAl2004,Arkhipov2012,CleveMittal2014}.

As first demonstrated by Bell \cite{Bell1964,Bell1975}, and later verified experimentally \cite{FreedmanClauser1972,AspectEtAl1981,ZeilingerEtAl2015,ShalmEtAl2015}, access to entangled quantum states allows to outperform the best classical strategies in nonlocal games. This manifests in the fact that, once the provers are allowed to share entanglement, the otherwise analogously defined verification task defines the complexity class $\mathrm{MIP}^*$ which is dramatically more expressive than $\mathrm{MIP}$, as a verifier can now be convinced of any instance of a problem in the complexity class $\mathrm{MIP^*=RE}\supsetneq\mathrm{NEXP}$, the class of recursively enumerable languages, which includes the infamous halting problem \cite{JiEtAl2020}. This result, which among other things implies a negative resolution to Tsirelson's conjecture \cite{Tsirelson1980,Tsirelson2006} and the related Connes' embedding conjecture \cite{JungeEtAl2011,Fritz2012}, relies on the existence of linear constraint systems (systems of linear equations) over the ring $\zz_p$, which admit no classical solution (over $\zz_p$), but so-called `quantum solutions', which instead assign operators to variables, namely operators corresponding to measurements performed by nonlocal provers on a shared entangled state \cite{CleveMittal2014,CleveLiuSlofstra2017}. The study of quantum solutions to linear constraint systems thus lies within the remarkably profound intersection of quantum foundations, operator algebras and complexity theory.\footnote{This applies to the study of classical vs quantum solutions to more general constraint satisfaction problems, see e.g. \cite{AtseriasKolaitisSeverini2019,BulatovStanislav2025}.}

Examples of quantum solutions to binary, classically unsatisfiable linear constraint systems (LCS) can be constructed from the $n$-qubit Pauli group \cite{Mermin1990,Peres1991,Arkhipov2012,TrandafirLisonekCabello2022,MullerGiorgetti2025} (see Fig.~\ref{fig: Mermin(Peres) square and star} and Sec.~\ref{sec: MBQC and LCS}). What is more, known separation results establish the existence of quantum solutions to classically unsatisfiable LCS in the commuting operator paradigm, that is, based on possibly infinite-dimensional entanglement \cite{Slofstra2019,Slofstra2019b,SlofstraPaddock}. However, the glaringly large gap between ($n$-qubit) Pauli group-based constructions and infinite-dimensional quantum systems remains very little understood. In particular, at present, not a single example of a quantum solution on a finite-dimensional Hilbert space to a classically unsatisfiable LCS over $\zz_p$ with $p\neq 2$ odd is known.

Here, we approach this open problem from the perspective of contextuality (nonlocality) in measurement-based quantum computation (MBQC) \cite{RaussendorfBriegel2001}. More specifically, in Sec.~\ref{sec: MBQC and LCS} we recall that the Mermin-Peres square magic game \cite{Mermin1990,Peres1991} and Mermin's star \cite{Mermin1993} (see Fig.~\ref{fig: Mermin(Peres) square and star}) can be cast as MBQCs for which contextuality (nonlocality) results in a computational advantage over classical computation. MBQCs with this property are not restricted to qubit systems, but are readily constructed for arbitrary qudit systems (see Thm.~\ref{thm: AB qudit computation}), in which case they necessarily require measurement operators outside the Pauli group \cite{FrembsRobertsCampbellBartlett2022}. While the notion of contextuality (nonlocality) in MBQC inherently depends on the (entangled) resource state, in Thm.~\ref{thm: LCS = MBQC for Paulis} we prove that every MBQC based on Pauli observables constitutes a quantum solution to a LCS naturally associated to it (via Def.~\ref{def: associated LCS}). While this generalises MBQCs based on (generalisations of) the Mermin-Peres square and Mermin star to LCS over $\zz_p$ with $p$ odd (prime), the quantum solutions in the latter case turn out to be classical \cite{QassimWallman2020}. We then ask whether the generalised measurement operators involved in contextual MBQC for qudits contains quantum solutions to LCS over $\zz_p$. To this end, in Sec.~\ref{sec: from MBQC to LCS} we define a natural generalisation of the shift operators in the qudit Pauli group (and the Clifford hierarchy) to general permutations (see Def.~\ref{def: K-group}). The resulting operators naturally embed within the generalised stabiliser formalism based on so-called `monomial operators', studied in Ref.~\cite{VanDenNest2011,vanDenNest2013}. Our main result, Thm.~\ref{thm: FCO-2} shows that even within this larger group of unitaries, $\cP^{\otimes n}(p)<N^{\otimes n}_{SU(p)}(p)<SU(p)^{\otimes n}$, every quantum solution to a LCS over $\zz_p$ with $p$ odd prime reduces to a classical solution. Mathematically, this follows from the existence of a map $\phi:N^{\otimes n}_{SU(p)}(P)\ra\zz_p$ that restricts to a homomorphism in $p$-torsion abelian subgroups; its construction extends our earlier results in Ref.~\cite{FrembsChungOkay2022}.\footnote{As emphasised below, the notion of homomorphisms in $p$-torsion abelian subgroups is the relevant mathematical structure for the study of quantum solutions to (classically unsatisfiable) LCS.} Physically, it implies that state-dependent contextuality in MBQC \mf{(under natural assumptions laid out in Sec.~\ref{sec: from MBQC to LCS})} never lifts to state-independent contextuality for $p$ odd prime:
\begin{equation}\label{eq: mismatch state-(in)dep contextuality}
    \begin{tikzcd}[column sep=4cm]
        \mathrm{monomial\ \zz_p-MBQC\ (Def.~\ref{def: Z_p-MBQC},\ref{def: monomial Z_p-MBQC})} \arrow[rightarrow]{d}[swap]{\substack{(state-dependent)\\ contextuality}}
        \arrow[rightarrow,dashed]{r}{\mathrm{commutativity\ constraint}}
        & \mathrm{associated\ LCS\ (Def.~\ref{def: associated LCS})}
        \arrow[rightarrow]{d}{\substack{(state-independent)\\ contextuality}} \\
        \mathrm{nonclassical\ computation} 
        \arrow[rightarrow,"/"{anchor=center,sloped},red]{r}[yshift=0.1cm]{\mathrm{Thm.~}\ref{thm: no qsol from MBQC}}
        &\mathrm{quantum\ solution}
    \end{tikzcd}\tag{$*$}
\end{equation}

\section{Contextuality and MBQC}\label{sec: MBQC and LCS}

\subsection{Contextual MBQC for qubits}\label{sec: contextual MBQC for qubits}

In Def.~\ref{def: associated LCS} below, we will associate a linear constraint system (LCS) to every (deterministic, non-adaptive) measurement-based quantum computation (MBQC) with qudits of dimension $d$. To motivate this definition, it is instructive to first recall two well-known examples in the qubit case ($p=2$) - Mermin-Peres square and Mermin's star (see Fig.~\ref{fig: Mermin(Peres) square and star}) - as well as qudit generalisations thereof (see Sec.~\ref{sec: Contextual computation in qudit MBQC}). Throughout, we distinguish general (mostly odd) from odd prime integers in our notation by writing $d$ and $p$, respectively.\\

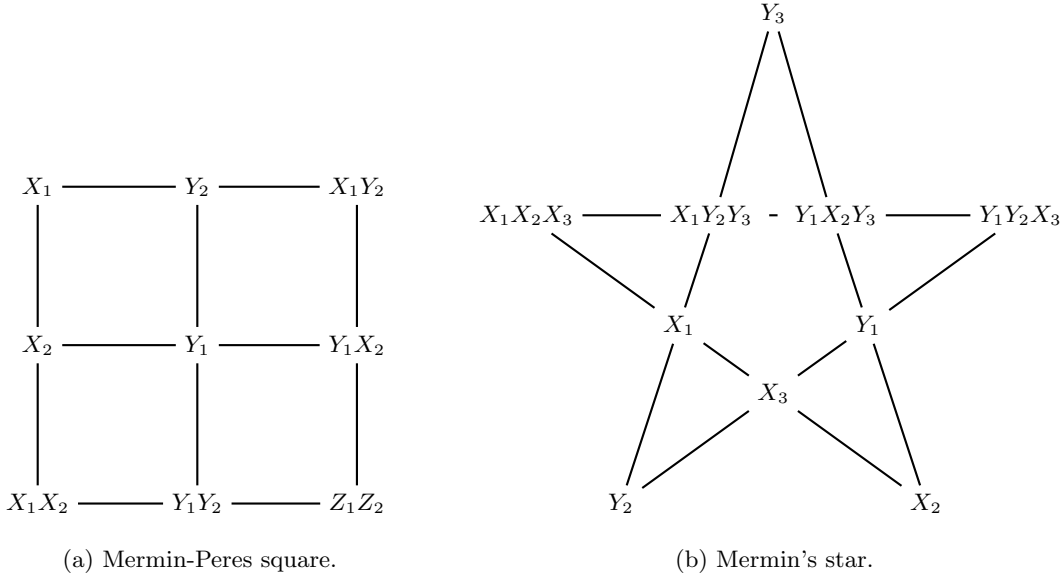
\begin{figure}[!htb]
    \centering
    \vspace{-0.5cm}
    \scalebox{1}{
    \hspace{-1cm}
    \begin{minipage}[b]{.4\textwidth}
        \centering
        \begin{tikzpicture}[x=0.07mm,y=0.07mm]
            
            \node at (0,900)   (A) {$X_1$};
            \node at (300,900)   (B) {$Y_2$};
            \node at (600,900)   (C) {$X_1Y_2$};
            \node at (0,600)   (D) {$X_2$};
            \node at (300,600)   (E) {$Y_1$};
            \node at (600,600)   (F) {$Y_1X_2$};
            \node at (0,300)   (G) {$X_1X_2$};
            \node at (300,300)   (H) {$Y_1Y_2$};
            \node at (600,300)   (J) {$Z_1Z_2$};
            
            \draw[thick] (A) -- (B);
            \draw[thick] (B) -- (C);
            \draw[thick] (A) -- (D);
            \draw[thick] (B) -- (E);
            \draw[thick] (C) -- (F);
            \draw[thick] (D) -- (E);
            \draw[thick] (E) -- (F);
            \draw[thick] (D) -- (G);
            \draw[thick] (E) -- (H);
            \draw[thick] (F) -- (J);
            \draw[thick] (G) -- (H);
            \draw[thick] (H) -- (J);
        \end{tikzpicture}
        \vspace{0.1cm}
        \subcaption{Mermin-Peres square.}
    \end{minipage}%
    \begin{minipage}[b]{.5\textwidth}
        \centering
        \begin{tikzpicture}[x=0.065mm,y=0.065mm]
            
            \node at (500,1000)   (A) {$Y_3$};
            \node at (0,588)   (B) {$X_1X_2X_3$};
            \node at (382,588)   (C) {$X_1Y_2Y_3\ $};
            \node at (618,588)   (D) {$\ Y_1X_2Y_3$};
            \node at (1000,588)   (E) {$Y_1Y_2X_3$};
            \node at (309,363)   (F) {$X_1$};
            \node at (691,363)   (G) {$Y_1$};
            \node at (500,225)   (H) {$X_3$};
            \node at (191,0)   (J) {$Y_2$};
            \node at (809,0)   (K) {$X_2$};
            
            \draw[thick] (A) -- (C);
            \draw[thick] (A) -- (D);
            \draw[thick] (B) -- (C);
            \draw[thick] (C) -- (D);
            \draw[thick] (D) -- (E);
            \draw[thick] (B) -- (F);
            \draw[thick] (C) -- (F);
            \draw[thick] (D) -- (G);
            \draw[thick] (E) -- (G);
            \draw[thick] (F) -- (H);
            \draw[thick] (F) -- (J);
            \draw[thick] (G) -- (H);
            \draw[thick] (G) -- (K);
            \draw[thick] (H) -- (J);
            \draw[thick] (H) -- (K);
        \end{tikzpicture}
        \vspace{0.1cm}
        \subcaption{Mermin's star.}
    \end{minipage}}
    \vspace{0.2cm}
    \caption{Qubit operators (tensor products and identity operators omitted) used in the contextuality proofs in (a) the Mermin-Peres square, and (b) Mermin's star \cite{Mermin1990,Mermin1993}.}
    \label{fig: Mermin(Peres) square and star}
\end{figure}

\textbf{A LCS from Mermin-Peres square.} Let $\cP^{\otimes 
n}(d)=\langle X,Y,Z \rangle^{\otimes n}$ denote the $n$-qudit Pauli group. The nine two-qubit Pauli operators $x_k\in\cP^{\otimes 2}(2)$, $k\in\{1,\cdots,9\}$ in Fig.~\ref{fig: Mermin(Peres) square and star} (a) satisfy the following algebraic relations: (i) every operator is a square root of the identity $x_k^2 = \one \in \cP^{\otimes 2}(2)$ for all $k \in \{1,\cdots,9\}$, (ii) the operators in every row and column commute, that is, $[x_k,x_{k'}] = 0$ whenever $A_{ik} = 1 = A_{ik'}$ for some $i\in\{1,\cdots,6\}$, and (iii) the operators in every row and column obey the multiplicative constraints
\begin{equation*}
    \prod_{k=1}^9 x_k^{A_{ik}} = (-1)^{b_i}\one \quad\quad\forall i\in\{1,\cdots,6\}\; ,
\end{equation*}
where $A \in M_{6,9}(\zz_2)$ and $b \in \zz^6_2$ are defined as follows
\begin{equation}\label{eq: MP-square LCS}
    A = \begin{pmatrix}
         1 & 1 & 1 & 0 & 0 & 0 & 0 & 0 & 0  \\
         0 & 0 & 0 & 1 & 1 & 1 & 0 & 0 & 0  \\
         0 & 0 & 0 & 0 & 0 & 0 & 1 & 1 & 1  \\
         1 & 0 & 0 & 1 & 0 & 0 & 1 & 0 & 0  \\
         0 & 1 & 0 & 0 & 1 & 0 & 0 & 1 & 0  \\
         0 & 0 & 1 & 0 & 0 & 1 & 0 & 0 & 1  \\
    \end{pmatrix} \quad \quad \quad b = \begin{pmatrix}
         0 \\ 0 \\ 0 \\ 0 \\ 0 \\ 1
    \end{pmatrix} \; .
\end{equation}
Assume that there exists a \emph{noncontextual value assignment}, $\tilde{x}_k \in \zz^9_2$ with $x_k = (-1)^{\tilde{x}_k}\one$ such that
\begin{equation}\label{eq: LCS relation}
    \prod_{k=1}^9 x_k^{A_{ik}} = (-1)^{\sum_{k=1}^9 A_{ik}\tilde{x}_k}\one = (-1)^{b_i}\one\; ,
\end{equation}
Then the system of linear constraints $A\tilde{x}=b\mod 2$ admits a classical solution. Yet, note that no such classical solution exists: the Mermin-Peres square is an example of \emph{state-independent contextuality} \cite{Mermin1990,Peres1991}.

Generalising this example, one defines a \emph{linear constraint system (LCS)} `$Ax=b \mod d$' for any $A\in\mathrm{Mat}_{r,s}(\zz_d)$ and $b\in\zz^r_d$. A \emph{classical solution} to a LCS is a vector $\tilde{x}\in\zz_d^s$ solving the system of linear equations $A\tilde{x}=b\mod d$, whereas a \emph{quantum solution} is an assignment of unitary operators $x_k\in\UH$ such that (i) $x_k^d=\one$ for all $k\in [n]:=\{1,\cdots,n\}$ (`$d$-torsion'), (ii) $[x_k,x_{k'}] := x_kx_{k'}x_k^{-1}x_{k'}^{-1} = \one$ whenever $(k,k')$ appear in the same row of $A$, i.e., $A_{jk}\neq0\neq A_{jk'}$ for some $j\in[r]$ (`commutativity'), and (iii) $\prod_{k=1}^s x^{A_{jk}}=\omega^{b_j}\one$ for all $j\in [r]$, and where $\omega=e^{\frac{2\pi i}{d}}$ (`constraint satisfaction'). By identifying sets of commuting unitaries in a quantum solution with \emph{contexts},\footnote{More commonly, `contexts' are defined as subsets or subalgebras of commuting quantum observables, represented by self-adjoint operators. The present situation arises under the identification of self-adjoint operators with unitaries under exponentiation.} quantum solutions to LCS generalise the state-independent contextuality proof of Mermin-Peres square. For more details on linear constraint systems, see Ref.~\cite{Arkhipov2012,CleveMittal2014,CleveLiuSlofstra2017,Slofstra2019,QassimWallman2020,FrembsChungOkay2022}.\\

\textbf{Mermin's star and contextual MBQC with qubits.} Following Refs.~\cite{AndersBrowne2009,Raussendorf2013}, we encode the constraints in Mermin's star (Fig.~\ref{fig: Mermin(Peres) square and star} (b)) in computational form. Fix the three-qubit resource GHZ-state $|\psi\rangle=\frac{1}{\sqrt{2}}(|000\rangle+|111\rangle)$, and define local measurement operators $M_k(0)=X_k$, $M_k(1)=Y_k$, where $X_k,Y_k,Z_k\in \cP(2)$ denote the (local/single-qubit) Pauli measurement operators in Fig.~\ref{fig: Mermin(Peres) square and star} (b).\footnote{When it is clear from context, we will at times omit the subscript $k$ to avoid clutter.} Let $\mathbf{i} \in \zz_2^2$ be the input of the computation and define the local measurement settings by the (pre-processing) functions $f_1(\mathbf{i})=i_1$, $ f_2(\mathbf{i})=i_2$, and $f_3(\mathbf{i})=i_1+i_2\mod 2$. Further, the output of the computation is defined as $o=\sum_{k=1}^3 m_k\mod 2$, where $m_k\in\zz_2$ denotes the local measurement outcome of the $k$-th qubit. Note that the $f_k$ and $o$ are linear functions in the input $\mathbf{i}\in\zz_2^2$. It follows that Mermin's star defines a deterministic, non-adaptive $l2$-MBQC, where `deterministic' means that the output is computed with probability $1$, `non-adaptive' that the measurement settings $f_k=f_k(\mathbf{i})$ are functions of the input only (but not of previous measurement outcomes $m_k$), and `$l2$-MBQC' means that the classical side-processing (that is, pre-processing of local measurement settings via $f_k$ and post-processing of outcomes $m_k$ in the output function $o$) is restricted to $\zz_2$-linear computation (for details, see Ref.~\cite{HobanCampbell2011, Raussendorf2013,FrembsRobertsCampbellBartlett2022}).

Now, note that the resource state $|\psi\rangle$ is a simultaneous eigenstate of $X_1X_2X_3$, $-X_1Y_2Y_3$, $-Y_1X_2X_3$, and $-Y_1Y_2X_3$--- the three-qubit Pauli operators in the horizontal context of Mermin's star in Fig.~\ref{fig: Mermin(Peres) square and star} (b).\footnote{As in Fig.~\ref{fig: Mermin(Peres) square and star}, we use the common shorthand omitting tensor products in e.g. $X_1X_2X_3 := X_1 \otimes X_2 \otimes X_3$.} The output function $o: \zz_2^2 \rightarrow \zz_2$ is then the nonlinear OR gate. The nonclassical correlations in Mermin's star thus give rise to nonlinearity in the $l2$-MBQC, thereby boosting the classical computer beyond its (limited to $\zz_2$-linear side-processing) capabilities. In fact, nonlinearity becomes a witness of \emph{state-dependent contextuality} (nonlocality) in $l2$-MBQC more generally \cite{HobanCampbell2011,Raussendorf2013,FrembsRobertsBartlett2018,FrembsRobertsCampbellBartlett2022}.

We can capture the constraints of this $l2$-MBQC in the form of a LCS, too. Note that there are 10 observables in Mermin's star: $nd = 6$ local and $d^l=4$ nonlocal ones ($n=3$, $d=l=2$).\footnote{We will write $n$ for the number of qudit systems in an MBQC and $l$ for the length of its input dit-string $\bi\in\zz^l_d$.} Moreover, there are $d^l=4$ local constraints, each relating one global measurement operator to its $n$ local measurements, and one global constraint, relating the respective measurements. These correspond to the horizontal context in Fig.~\ref{fig: Mermin(Peres) square and star} (b). Taken together, this suggests to associate a LCS $Ax=b\mod 2$ to Mermin's star, given by
\begin{equation}\label{eq: M-star LCS}
     A =
        \begin{pNiceMatrix}[first-row]
            M(00) & M(01) & M(10) & M(11) & X_1 & Y_1 & X_2 & Y_2 & X_3 & Y_3 \\
            1 & 0 & 0 & 0 & 1 & 0 & 1 & 0 & 1 & 0 \\
            0 & 1 & 0 & 0 & 1 & 0 & 0 & 1 & 0 & 1 \\
            0 & 0 & 1 & 0 & 0 & 1 & 1 & 0 & 0 & 1 \\
            0 & 0 & 0 & 1 & 0 & 1 & 0 & 1 & 1 & 0 \\
            1 & 1 & 1 & 1 & 0 & 0 & 0 & 0 & 0 & 0
        \end{pNiceMatrix}
        \in M_{5\times 10}(\zz_2) \quad \quad \quad  b = \begin{pmatrix}
         0 \\ 0 \\ 0 \\ 0 \\ 1
    \end{pmatrix} \in \zz_2^5\; .
\end{equation}
Here, the first four rows correspond with the decomposition of (global) measurement operators into local ones
\begin{equation*}
    M(\bi)
    =\otimes_{k=1}^3 M_k(\bi) \quad\forall\bi\in\zz^2_2\; ,
\end{equation*}
whereas the last row represents the `parity constraint' in the horizontal context in Fig.~\ref{fig: Mermin(Peres) square and star} (b),
\begin{align}\label{eq: horizontal context M-star}
    \prod_{\bi\in\zz^2_2} M(\bi) = \omega^{\sum_{\bi\in\zz^2_2} o(\bi)}\one\; .
\end{align}
In particular, $b_\bi=0$ for all $\bi\in\zz^2_2$ and $b_5=\sum_{\bi\in\zz^2_2} o(\bi)\mod 2$, and for all but the last row, there is exactly one non-zero entry in every pair of $(4+(k-1)2+1,4+k2)$-th columns for all $k \in \{1,2,3\}$. \mf{Denoting these columns by $j_k \in \zz_2$ (and identifying $X_k=M_k(j_k=0)$ and $Y_k=M_k(j_k=1)$) we recover the measurement settings via $\delta_{f_k(\bi)j_k}$, that is, up to the last row, the $4+(k-1)2+(j_k+1)$-th column of $A$ is given by $f_k(\bi)=j_k$ for $\bi\in\zz^m_2$.}

It is easy to see that the LCS in Eq.~(\ref{eq: M-star LCS}) has no classical solution: subtracting the first four rows from the last yields a row of all zeros, and thus an unsatisfiable equation given the constraint $b_5=1$. This expresses the fact that (the MBQC based on) Mermin's star is contextual: it admits no noncontextual value assignment. Crucially, in general the notion of contextuality is state-dependent, as (global) measurements in MBQC are required to commute only on the subspace defined by the resource state $|\psi\rangle$ of the computation.

\subsection{Contextual MBQC for qudits}\label{sec: Contextual computation in qudit MBQC}

In this section we construct a (state-dependent) generalisation of the contextuality proof in Mermin's star \cite{Mermin1993} (see Fig.~\ref{fig: Mermin(Peres) square and star} (b)) from qubit to qudit systems in odd prime dimension. As in Sec.~\ref{sec: contextual MBQC for qubits}, the latter assumes a computational character within the scheme of measurement-based computation \cite{RaussendorfBriegel2001,BriegelRaussendorf2001,AndersBrowne2009}, which underlies one of the few known examples of provable quantum over classical computational advantage \cite{BravyiGossetKoenig2018,BravyiGossetKoenig2020}. The resource behind this advantage is contextuality, i.e., the non-existence of a noncontextual value assignment to all measurement operators. Sharp thresholds for state-dependent contextuality can be derived under the restriction of linear side-processing, sometimes denoted by $ld$-MBQC \cite{HobanCampbell2011,Raussendorf2013,FrembsRobertsBartlett2018,FrembsRobertsCampbellBartlett2022}.\\

\textbf{A contextual qudit example.} We generalise the contextual computation in Ref.~\cite{AndersBrowne2009,Raussendorf2013} based on the operators in Mermin's star to qudit systems of arbitrary odd prime dimension $p$. Define the following measurement operators by their action on computational basis states $\{|q\rangle\}_{q=0}^{p=1}$ and $\omega=e^{\frac{2\pi i}{p}}$, as follows:
\begin{equation}\label{eq: qudit measurement operators}
    M(\bi)|q\rangle
    :=\theta(\bi)\omega^{\bi q^{p-1}} |q+1\rangle, \quad \bi\in\zz_p
\end{equation}
Note first that $M(\bi)^p=\one$ if we set $\theta(\bi)^p=\omega^\bi$ and thus $\theta(\bi)^p\theta(\bj)^p\theta(p-\bi-\bj)^p=1$. A particular choice for $\theta(\bi)$ is given by $\theta(\bi)=e^{\frac{\bi 2\pi i}{p^2}}$. Similar to the qubit scenario in Sec.~\ref{sec: contextual MBQC for qubits}, we also specify $\zz_p$-linear functions, $f_1(\mathbf{i}):=i_1$, $f_2(\mathbf{i}):=i_2$, and $f_3(\mathbf{i}):=-i_1-i_2\mod p$ for $\mathbf{i}=(i_1,i_2)\in\zz_p^2$ and take the resource state to be given by
\begin{equation}\label{eq: qudit resource state}
    |\psi\rangle
    =\frac{1}{\sqrt{p}} \sum_{q=0}^{p-1}|q\rangle^{\otimes 3}\; .
\end{equation}

\begin{theorem}\label{thm: AB qudit computation}
    The $lp$-MBQC for $p$ prime with local measurements $M_k(f_k=f_k(\mathbf{i}))$ in Eq.~(\ref{eq: qudit measurement operators}), where the input $\mathbf{i}=(i_1,i_2)\in\zz_p^2$ defines measurement operators via the $\zz_p$-linear (pre-processing) functions $f_1(\mathbf{i}):=i_1$, $f_2(\mathbf{i}):=i_2$, and $f_3(\mathbf{i}):=-i_1-i_2\mod p$, is contextual when evaluated on the resource GHZ-state in Eq.~(\ref{eq: qudit resource state}).
\end{theorem}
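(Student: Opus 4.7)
The plan is to follow the standard two-step template for establishing (state-dependent) contextuality of a deterministic $lp$-MBQC: first, verify that the resource state $|\psi\rangle$ is an eigenstate of the global measurement $M(\bi) = \otimes_{k=1}^3 M_k(f_k(\bi))$ for every input $\bi \in \zz_p^2$ (so the computation is deterministic); second, read off the output function $o:\zz_p^2\to\zz_p$ from the eigenvalue and show that it is not $\zz_p$-linear; finally, invoke the nonlinearity criterion from \cite{HobanCampbell2011,Raussendorf2013,FrembsRobertsCampbellBartlett2022}, according to which a deterministic $lp$-MBQC with nonlinear output is necessarily (state-dependently) contextual.

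For determinism, I would substitute the explicit form from Eq.~(\ref{eq: qudit measurement operators}) into the action on product basis states,
\begin{equation*}
M(\bi)|q\rangle^{\otimes 3} = \Big(\prod_{k=1}^3 \theta(f_k(\bi))\Big)\,\omega^{(f_1(\bi)+f_2(\bi)+f_3(\bi))\,q^{p-1}}\,|q+1\rangle^{\otimes 3}\; ,
\end{equation*}
and use the balance relation $f_1(\bi)+f_2(\bi)+f_3(\bi)\equiv 0 \mod p$ built into the pre-processing to collapse the $q$-dependent phase to $1$ uniformly in $q$ (the prefactor $f_1+f_2+f_3$ is a multiple of $p$, so the full exponent is too, regardless of $q^{p-1}$). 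Summing over $q$ and using cyclicity of the shift then yields $M(\bi)|\psi\rangle = \omega^{o(\bi)}|\psi\rangle$ with eigenvalue $\prod_k \theta(f_k(\bi)) = e^{2\pi i (f_1+f_2+f_3)/p^2}$. Taking representatives $f_k\in\{0,\ldots,p-1\}$, the integer $f_1+f_2+f_3$ is a nonnegative multiple of $p$, so this eigenvalue is a $p$-th root of unity and $o(\bi) = (f_1(\bi)+f_2(\bi)+f_3(\bi))/p \mod p$ is a well-defined element of $\zz_p$.

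For the nonlinearity, I would enumerate the output directly: $o(0,0)=0$; $o(i_1,i_2)=1$ whenever $0<i_1+i_2\le p$; and $o(i_1,i_2)=2$ whenever $p<i_1+i_2\le 2(p-1)$ -- in other words, $o$ is the \emph{carry} of the base-$p$ addition of $i_1$ and $i_2$. Any $\zz_p$-linear function $o_{\mathrm{lin}}(i_1,i_2)=a_1 i_1+a_2 i_2$ agreeing with $o$ on $(1,0)$ and $(0,1)$ forces $a_1=a_2=1$, and yet then $o_{\mathrm{lin}}(1,p-1)\equiv p\equiv 0 \mod p$ contradicts $o(1,p-1)=1$; hence $o$ is not $\zz_p$-linear, and contextuality follows by the cited criterion. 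The main obstacle is the phase bookkeeping in the first step, in particular verifying that the balance condition propagates through the $q^{p-1}$-weighted phase uniformly in $q$; once this is secured, the remaining steps are routine arithmetic and a direct appeal to the nonlinearity--contextuality equivalence.
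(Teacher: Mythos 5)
Your first two steps are fine and essentially reproduce what the paper leaves implicit behind ``one easily computes'': the balance relation $f_1(\bi)+f_2(\bi)+f_3(\bi)\equiv 0\bmod p$ kills the $q$-dependent phase, and the eigenvalue $\prod_k\theta(f_k(\bi))=e^{2\pi i(f_1+f_2+f_3)/p^2}$ yields exactly the carry function $o(\bi)$ stated in Eq.~(\ref{eq: general non-linear function}). The gap is in your final step. The criterion ``deterministic $lp$-MBQC with nonlinear output is contextual'' holds only for $p=2$. For $p$ odd prime, Thm.~1 of Ref.~\cite{FrembsRobertsBartlett2018} -- which is what this paper actually invokes -- only guarantees contextuality when the output function has degree at least $p$ (as a polynomial over $\zz_p$). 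Nonlinear functions of degree between $2$ and $p-1$ are computable by \emph{noncontextual} $lp$-MBQC; indeed $(i_1+i_2)^{p-1}$, which is nonlinear for every odd prime $p$, is precisely such an example. So showing that $o$ is not $\zz_p$-linear does not establish contextuality for $p>2$.

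What is needed, and what the paper supplies, is a lower bound $\deg(o)\geq p$. The paper's argument is an interpolation count: the number of monomials $i_1^a i_2^b$ with $a+b\leq p-1$ equals the number of points $(i_1,i_2)$ with $i_1+i_2\leq p-1$ (both equal $p(p+1)/2$), so the values of $o$ on that region uniquely determine a polynomial of degree at most $p-1$, namely $(i_1+i_2)^{p-1}$, with which $o$ agrees there; but $o(\bi)=1\neq 0=(i_1+i_2)^{p-1}$ whenever $i_1+i_2=p$, so no degree-$(p-1)$ polynomial represents $o$ and hence $\deg(o)\geq p$. Your carry computation already gives you all the values of $o$ needed to run this argument, so the fix is to replace the appeal to nonlinearity with this degree bound.
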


\begin{proof}
    One easily computes the output function of the MBQC to be of the form:
    \begin{equation}\label{eq: general non-linear function}
        o(\mathbf{i}) = \begin{cases} 0 &\mathrm{if} \ i_1=i_2=0 \\
        1 &\mathrm{if} \ i_1 + i_2 \leq p \\
        2 &\mathrm{if} \ i_1 + i_2 > p \end{cases}
    \end{equation}
    By Thm.~1 in Ref.~\cite{FrembsRobertsBartlett2018}, the computation is contextual if $o(\mathbf{i})$ is at least of degree $p$. Note that the number of monomials (in the variables $i_1,i_2$) of degree at most $p-1$ is the same as the number of constraints for $i_1+i_2 \leq p-1$ in Eq.~(\ref{eq: general non-linear function}). If the computation is non-contextual, the latter will thus uniquely fix a function of degree at most $p-1$, namely $(i_1+i_2)^{p-1}$. However, $o$ does not satisfy this relation for $i_1+i_2\geq p$, that is, $o(\bi)\neq (i_1+i_2)^{p-1}$ and the output function must contain at least one term of degree $p$ or greater.
\end{proof}

Clearly, Thm.~\ref{thm: AB qudit computation} closely mimics the contextual computation in Sec.~\ref{sec: contextual MBQC for qubits}. It suggests that quantum solutions to (classically unsatisfiable) LCS may arise from (contextual) MBQC not only for $p=2$ but also for $p$ odd prime.\\

\textbf{Associated LCS.} As with Mermin's star in Eq.~(\ref{eq: M-star LCS}), we capture the constraints of the above \mf{$lp$-MBQC} in the form of a LCS. Note that there are $p^2$ operators $M(\bi)=\otimes_{k=1}^n M_k(\bi)$ for all $\bi\in\zz^2_p$, each built from $np$ local measurement operators with $n=3$. These satisfy $p^l$ \emph{locality constraints}, each relating one global measurement to its $n$ local ones, and one constraint relating the respective global measurement operators,
\begin{align}\label{eq: horizontal context qudit M-star}
    \prod_{\bi\in\zz^2_p} M(\bi) = \omega^{\sum_{\bi\in\zz^2_p} o(\bi)} \mathbbm{1}\; .
\end{align}
We therefore obtain a LCS $Ax=b \mod p$, in terms of the constraint matrix $A\in M_{(p^2+1)\times (p^2+3p)}(\zz_p)$ given by
\begin{equation}\label{eq: qudit M-star LCS}
     A =
        \begin{pNiceMatrix}[first-row]
            M(00) & M(01) & \cdots & M((p-1)(p-1)) & M_1(0) & \cdots & M_1(p-1) & M_2(0) & \cdots & M_3(p-1) \\
            1 & 0 & \cdots & 0 & \delta_{f_1(\bi=0),0} & \cdots & \delta_{f_1(\bi=0),p-1} & \delta_{f_2(\bi=0),0} & \cdots & \delta_{f_3(\bi=0),p-1} \\
            0 & 1 & \cdots & 0 & \delta_{f_1(\bi=1),0} & \cdots & \delta_{f_1(\bi=1),p-1} & \delta_{f_2(\bi=1),0} & \cdots & \delta_{f_3(\bi=1),p-1} \\
            0 & 0 & \ddots & 0 & \vdots & \ddots & \vdots & \vdots & \ddots & \vdots \\
            0 & 0 & \cdots & 1 & \delta_{f_1(\bi=p^2-1),0} & \cdots & \delta_{f_1(\bi=p^2-1),p-1} & \delta_{f_2(\bi=p^2-1),0} & \cdots & \delta_{f_3(\bi=p^2-1),p-1} \\
            1 & 1 & \cdots & 1 & 0 & 0 & 0 & 0 & 0 & 0
        \end{pNiceMatrix}\; ,
\end{equation}
as well as the constraint vector $b\in\zz^{p^2+1}_p$, given by $b_{\bi+1}=0$ for all $\bi\in\zz^2_p$ and $b_{p^2+1} = \sum_{\bi\in\zz^2_p} o(\bi)\mod p$. Again, the last row corresponds to the constraint between global measurement operators in Eq.~(\ref{eq: horizontal context qudit M-star}), whereas the remaining constraints correspond with the decomposition of global measurement operators into local ones. Moreover, for all but the last row, there is exactly one non-zero entry in every set of $p$ columns $p^l+(k-1)p+1,\cdots,p^l+kp$ for all $k \in [n]$. Denoting these columns by $j_k\in\zz_p$ we thus recover the measurement settings via $f_k(\bi)=j_k$.

As with the LCS associated to Mermin's star in Eq.~(\ref{eq: M-star LCS}), also the LCS in Eq.~(\ref{eq: qudit M-star LCS}) has no classical solution: multiplying each of the first $p^2$ rows by $p-1$ and adding them to the last yields a row of all zeros since $(p-1)\sum_{\bi\in\zz^2_p} \delta_{f_k(\bi),j_k}=p(p-1)=0 \mod p$ for all $j_k\in\zz_p$ since $l_k$ is a linear function. However, from Eq.~(\ref{eq: general non-linear function}),
\begin{align*}
    b_{2^p+1}
    = \sum_{\bi\in\zz^2_p} o(\bi) \mod p
    = \sum_{\bi\in\zz^2_p\mid 0<i_1+i_2 \leq p} + 2\sum_{\bi\in\zz^2_p \mid i_1+i_2 > p} \mod p
    = p-1\; .
\end{align*}
Motivated by the ongoing search for quantum solutions to classically unsatisfiable LCS over $\zz_p$ for $p$ odd (prime), this motivates the following question:
\begin{center}
    \emph{Are there contextual MBQC with qudits which define quantum solutions to their associated LCS?}
\end{center}
\mf{In the next section, we will turn this informal question into a well-defined mathematical problem, by considering a natural class of MBQC, and defining its associated notion of LCS. We then solve this problem in Sec.~\ref{sec: solution}, thereby providing a (negative) answer to the above question.}

\section{From (monomial) MBQC to LCS}\label{sec: from MBQC to LCS}

\subsection{$\zz_d$-MBQC and associated LCS}\label{sec: constraints on MBQC}

In this section, we generalise the examples of contextual MBQC in Sec.~\ref{sec: MBQC and LCS}, by associating a LCS $A(M)x=b(o) \mod d$ to every deterministic, non-adaptive MBQC $M$ with (local measurement) outcomes in $\zz_d$ more generally.

\begin{definition}\label{def: Z_p-MBQC}
    A \emph{$\zz_d$-MBQC, with $n$-qudit resource state $|\psi\rangle\in\mathbb{CP}^{nd-1}$}, is defined by the relation
    \begin{equation}\label{eq: output function in MBQC}
        M(\bi)|\psi\rangle
        =\otimes_{k=1}^n M^{c_k(\bi)}_k(\bi)|\psi\rangle
        = \omega^{o(\bi)}|\psi\rangle\; ,
    \end{equation}
    where $M_k(\bi)\in U(d)$ with $M^d_k(\bi)=\one_k$ and $M(\bi)\in U^{\otimes n}(d)$ denote the local, respectively global `measurement operator' defined by (pre-processing) functions $M_k:I\ra U(d)$ and $c_k:I\ra\zz_d$ for all $k\in[n]$ with respect to a fixed set of inputs $I$, and output function $o:I\rightarrow\zz_d$ given by $o(\bi)=\sum_{k=1}^n c_k(\bi)m_k\mod d$ for all $\bi\in I$.
\end{definition}

Note that the measurement outcomes $m_k\in\zz_d$ of $M_k$ are individually random, yet their post-processed sum is deterministic by assumption. Next, we associate a LCS to every $\zz_d$-MBQC $M$ as follows.

\begin{definition}\label{def: associated LCS}
    Let $M$ be a $\zz_d$-MBQC with output function $o:I\ra\zz_d$ as in Def.~\ref{def: Z_p-MBQC}. Then the \emph{linear constraint system $A(M)x = b(o)\mod d$ associated with $M$} is defined, for all inputs $\bi,\bj\in I$ of cardinality $|I|$, by
    \begin{equation}\label{eq: associated TG-LCS}
     A(M):=
        \begin{pNiceMatrix}[first-row]
            M(\bj) & M_k(\bj) \\
            \delta_{\bi,\bj} & c_k(\bi)\delta_{\bi,\bj} \\
            (1^{|I|})^T & (0^{n|I|})^T \\
        \end{pNiceMatrix} \in M_{(|I|+1) \times (|I|+n|I|)}(\zz_d) \hspace{1.25cm}
    b(o):= 
        \begin{pNiceMatrix}
            0^{|I|} \\ 
            \sum_{\bi\in I} o(\bi)\mod d
        \end{pNiceMatrix}
        \in\zz^{|I|+1}_d\; .
    \end{equation}
\end{definition}

Clearly, Def.~\ref{def: associated LCS} generalises Eq.~(\ref{eq: M-star LCS}) and Eq.~(\ref{eq: qudit M-star LCS}). We add some further comments on Def.~\ref{def: Z_p-MBQC} and Def.~\ref{def: associated LCS}.

\begin{itemize}    
    \item \textbf{determinism.} $\zz_d$-MBQC is chosen to be deterministic in order to compare with (deterministic) LCS.

    \item \textbf{non-adaptive MBQC.} By Def.~\ref{def: Z_p-MBQC}, the input set $I$ is fixed prior to the computation, consequently it does not include previous measurements outcomes, and thus makes $\zz_d$-MBQC non-adaptive. This is natural from the perspective of (quantum solutions to) LCS, which merely require commutativity of operators, and thus express simultaneous measurability without any temporal order. Despite this stark restriction of MBQC, it is worth pointing out that the setting of non-adaptive MBQC exhibits strong forms of (state-dependent) contextuality (see  Refs.~\cite{HobanCampbell2011,Raussendorf2016,FrembsRobertsBartlett2018,FrembsRobertsCampbellBartlett2022}).

    \item \textbf{classical side-processing.} Eq.~(\ref{eq: associated TG-LCS}) poses no restriction on classical side-processing, in fact, Def.~\ref{def: Z_p-MBQC} (and Def.~\ref{def: associated LCS}) do not specify any algebraic structure on inputs (nor post-processing on measurement outcomes $m_k$). This is in contrast to the linear side-processing in the examples in Sec.~\ref{sec: MBQC and LCS},\footnote{We reiterate that this restriction is of little practical value, but mainly serves to magnify the resourcefulness of contextuality.\label{fn: linear side-processing}} which imposes additional constraints on the maps $M_k:I\ra U(d)$, specifying local measurement operators in Eq.~(\ref{eq: output function in MBQC}).\footnote{A common choice is $M_k(\bi)=U(\bi)M_kU^\dagger(\bi)$ \cite{Raussendorf2016,FrembsRobertsBartlett2018,FrembsRobertsCampbellBartlett2022} with $U(\bi)$ a (projective) representation of the group $I=\zz^l_d$ and $l\in\mathbb{N}$.}

    \item \textbf{local symmetry.} Note that a generic set of linear equations is invariant under arbitrary permutations of its rows and columns. The LCSs associated with $\zz_d$-MBQC in Eq.~(\ref{eq: associated TG-LCS}) have a more refined invariance property due to the local structure of measurement operators. Symmetries of LCS in the form of Eq.~(\ref{eq: associated TG-LCS}) must respect the decomposition into (local) subsystems in the tensor product decomposition $\cH=\otimes_{k=1}^n\cH_k$ of measurement operators, and are thus constrained to be of the form $S_{|I|}\times S_n$.\footnote{Notably, this symmetry group is generally not compatible with the restriction to linear side-processing in $ld$-MBQC.$^{\ref{fn: linear side-processing}}$}
\end{itemize}

Def.~\ref{def: associated LCS} associates a LCS to $\zz_d$-MBQC with qudits in arbitrary dimension and with general (local) measurement operators, and thus establishes a relationship between these two concepts on a general level. In the following, we will only consider a particular class of (deterministic, non-adaptive) $\zz_d$-MBQC.

\subsection{Monomial MBQC}

In this section, we further restrict the class of MBQC to be studied below. More precisely, we will restrict the type of measurement operators in $\zz_d$-MBQC. There are several reasons for imposing such a restriction.

First, we note that the construction of contextual MBQC in Ref.~\cite{FrembsRobertsBartlett2018} only requires unitary operators that are semi-Clifford (see Sec.~\ref{sec: relation with Clifford group}), and thus are of a very special form. Second, while Def.~\ref{def: Z_p-MBQC} remains agnostic about the (classical) pre-processing of measurement operators, it is natural to demand certain compatibility conditions between classical side-processing and the type of map $I\ra U^{\otimes n}(d)$ (cf. Ref.~\cite{Raussendorf2013}). What is more, when going over to the general (adaptive) setting, this should be accompanied by computational constraints that allow for efficient `feed-forward', that is, updating of future measurements by accounting for `by-products' \cite{RaussendorfBriegel2001}. A third reason is more economical in nature. Since the search of quantum solutions to LCS in odd (prime) dimensions remains a challenging open problem, a reasonable approach is to study restricted instances of this problem, and a natural restriction is to consider quantum solutions that take values in proper subgroups of the full unitary group only. From this perspective, allowing for general measurement operators in $\zz_d$-MBQC is a rather weak restriction - especially when compared with the fact that contextual MBQC requires unitaries of a highly restricted form only Ref.~\cite{Raussendorf2013,Raussendorf2016,FrembsRobertsBartlett2018,FrembsRobertsCampbellBartlett2022}.

Given the above considerations, a natural class of $\zz_d$-MBQC arises from considering measurement operators within the monomial stabiliser formalism developped in Ref.~\cite{VanDenNest2011,vanDenNest2013}.\footnote{Another natural restriction is to consider Clifford unitaries only. We will analyse this case in Thm.~\ref{thm: Clifford noncontextual} in Sec.~\ref{sec: relation with Clifford group}.}

\begin{definition}\label{def: monomials}
    A unitary $U\in U(d)$ is called \emph{monomial (with respect to the basis $\{|q\rangle\}_{q\in\zz_d}$) of $\cH=\C^d$} if
    \begin{align*}
        U=\Pi_\sigma S_\xi\; ,
    \end{align*}
    where \mf{$S_\xi=\mathrm{diag}(\xi(0),\cdots,\xi(d-1))$, $\xi:\zz_d\ra U(1)$} is diagonal in the basis (and called a \emph{generalised phase gate}), and $\Pi_\sigma$ is the matrix representation of the permutation $\sigma\in S_d$ acting on the elements in the basis.
    
    A state $|\Psi\rangle\in\mathbb{CP}^{d-1}$ is called an \emph{$M$-state} if it is the common $+1$-eigenspace of a set of monomial unitaries.
\end{definition}

Equivalently, a unitary is monomial if it has a single non-zero entry in every row and column. It is easy to see that monomial unitaries form a proper subgroup of the unitary group $N_{U(d)}(d)\leq U(d)$ (see Eq.~(\ref{ses: N})).

\begin{definition}\label{def: monomial Z_p-MBQC}
    A $\zz_d$-MBQC is \emph{monomial}, if its resource state is a generalised stabiliser state (`$M$-state') and measurement operators are monomial unitaries with respect to the product computational basis (see Def.~\ref{def: monomials}).
\end{definition}

Indeed, restricting resource states to generalised stabiliser states, so-called `$M$-states', and measurements to monomial unitaries, ensures both that updated resource states remain $M$-states and allows for efficient feed-forward. Notably, unlike the heavily restricted stabiliser formalism, its monomial generalisation encompasses many cases of physical interest (as discussed in detail in Ref.~\cite{VanDenNest2011}). This includes, in particular, the setting of (contextual) MBQC studied in Ref.~\cite{FrembsRobertsCampbellBartlett2022}.\footnote{It is also worth mentioning that Def.~\ref{def: monomial Z_p-MBQC} does not impose a restriction to linear side-processing, as is often found in previous work on (the resourcefulness of) contextuality in MBQC \cite{AndersBrowne2009,Raussendorf2013,FrembsRobertsBartlett2018,FrembsRobertsCampbellBartlett2022}.} The latter establishes a link between the degree of contextuality (defined via the degree of the output function in Def.~\ref{def: Z_p-MBQC}) and the (semi-)Clifford hierarchy. For this reason, it useful to contrast monomial unitaries with the Clifford group and hierarchy in more detail.

\subsection{Comparison with Clifford group and hierarchy}\label{sec: relation with Clifford group}

We compare the group of local measurement operators in monomial $\zz_d$-MBQC (see Def.~\ref{def: monomial Z_p-MBQC}) with the Clifford group and hierarchy. Recall that the \emph{$n$-qudit Clifford group}, $\cC^n(d)=N_{\cU(\C^{d^n})}(\cP^n(d))$ is the normaliser of the Pauli group $\cP^n(d)$. The \emph{Clifford hierarchy} is defined recursively by
\begin{align*}
    \cC^n_m(d)
    &:=\{U\in\cU(\C^{d^n})\mid UPU^{-1}\in\cC^n_{m-1}(d)\ \forall P\in\cP^n(d)\} &
    \cC(d)
    &:=\bigcup_{m=1}^\infty\cC^n_m(d)\; ,
\end{align*}
with $\cC^n_1(d)=\cP^n(d)$ and $\cC^n_2(d)=\cC^n(d)$. $(\cC_m(d))^{\otimes n}\leq\cC^n_m(d)$ is called the \emph{local Clifford group}. Note that for $m\geq 3$ (and $n,d$ arbitrary), $\cC^n_m(d)$ is not a group and its characterisation remains an open problem. Under the restriction to \emph{diagonal elements} (in the eigenbasis of the Pauli-Z operator), one obtains a restricted hierarchy of groups $\cD^n_m(d)$ which has been characterised in Ref.~\cite{CuiGottesmanKrishna2017}. The hierarchy of \emph{semi-Clifford operators} \cite{ZengChenChuang2008} is defined by $\SC^n_m(d)=\{U\in\cC^n_m(d)\mid \exists C,C'\in\cC^n(d):\ CUC'\in\cD^n_{m-1}(d)\}$, and (together with $\cC^n_m(d)$) plays a key role in fault-tolerant quantum computation via gate teleportation protocols with magic states \cite{GottesmanChuang1999,ZhouLeungChuang2000,ZengChenChuang2008,BeigiShor2010,deSilva2021,ChenDeSilva2024}.

In the case of qudits of odd dimension, the resourcefulness of magic states has been identified with contextuality \cite{howard2012qudit}. Moreover, contextuality also acts as a resource in the architecture of measurement-based quantum computation \cite{RaussendorfBriegel2001,BriegelRaussendorf2001}. In particular, in Ref.~\cite{FrembsRobertsCampbellBartlett2022} it is shown that under the restrictions to linear side-processing and non-adaptivity, computing a function of degree $D$ generally necessitates measurement operators that are $D$-th level, local semi-Clifford. The measurement operators in (contextual) MBQC thus seem natural candidates in the search for quantum solutions to linear constraint systems over $\zz_d$ with $d$ odd.

Yet, since quantum solutions to LCS consist of unitaries of $d$-torsion not all Clifford operators qualify. For instance, within the generating set of the single-qudit Clifford group (see e.g. Ref.~\cite{deBeaudrap2013}),
\begin{align}\label{eq: Clifford generators}
    S&=\sum_{q=0}^{d-1}\tau^{q^2}|q\rangle\langle q| &
    F&=\frac{1}{\sqrt{d}}\sum_{q,q'=0}^{d-1}\omega^{qq'}|q'\rangle\langle q| &
    M_a&=\sum_{q=0}^{d-1}|aq\rangle\langle q|\quad\mathrm{for}\ a\in\zz^*_d\; ,
\end{align}
where $\tau^2=\omega=e^{\frac{2\pi i}{d}}$ and $|q'\rangle$ is an eigenstate of $X$, that is, $X|q'\rangle=\omega^{q'}|q'\rangle$, no element has order $p$ for $d=p$ odd prime. What is more, $(\cC(p))^{\otimes n}$ already exhausts all permutation gates in the local Clifford hierarchy.\footnote{Since SWAP is a Clifford gate, the full (non-local) Clifford group $\cC^n$ also contains arbitrary permutations between tensor factors, which includes permutations of $p$-torsion. Here, we do not consider such permutations.}

\begin{lemma}\label{lm: Clifford permutations}
    Let $U=U_\sigma$ for $\sigma\in S_p$ for $p$ odd prime be the permutation matrix on the computational register, $U_\sigma|q\rangle=|\sigma(q)\rangle$ (with $\{|q\rangle\}_{q=0}^{p-1}$ defined via $Z|q\rangle=\omega^q|q\rangle$). Then $U\in\cC_m(p)$, $m\geq 2$ if and only if $U\in\cC_2(p)=\cC(p)$.
\end{lemma}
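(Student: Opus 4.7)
The reverse containment $\cC_2(p)\subseteq\cC_m(p)$ is immediate from the definition, so the content of the lemma lies in showing that any permutation matrix $U_\sigma\in\cC_m(p)$ already sits in $\cC_2(p)$. My plan is to proceed by induction on $m\geq 2$, proving simultaneously that $U_\sigma\in\cC_2(p)$ holds if and only if $\sigma\in S_p$ is an affine permutation $q\mapsto aq+c$ with $a\in\zz_p^*$, $c\in\zz_p$.

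For the base case $m=2$, I would compute directly that $U_\sigma Z U_\sigma^{-1}=\sum_q\omega^{\sigma^{-1}(q)}|q\rangle\langle q|$ is diagonal, while $U_\sigma X U_\sigma^{-1}=U_\pi$ is the permutation matrix associated to $\pi(q)=\sigma(\sigma^{-1}(q)+1)$. Requiring the first to be an element of $\cP(p)=\langle\omega,X,Z\rangle$ forces $\sigma^{-1}$ to be an affine map (the only case where $\omega^{\sigma^{-1}(q)}$ equals $\omega^{aq+c}$ for all $q$), so $\sigma$ is affine; conversely, one checks that any affine $\sigma$ normalises $\cP(p)$ and thus lies in $\cC(p)$.

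For the inductive step at level $m\geq 3$, assume the claim at level $m-1$. If $U_\sigma\in\cC_m(p)$ then $U_\sigma X U_\sigma^{-1}=U_\pi$ with $\pi=\sigma\tau\sigma^{-1}$ (writing $\tau(q)=q+1$) lies in $\cC_{m-1}(p)$. Since $U_\pi$ is itself a permutation matrix, the inductive hypothesis applied to $U_\pi$ yields that $\pi$ is affine: $\pi(q)=aq+b$ for some $a\in\zz_p^*$, $b\in\zz_p$. It remains to deduce that $\sigma$ itself is affine.

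The key observation is that $\pi$ is conjugate to the $p$-cycle $\tau$, and therefore has no fixed points. An affine map $q\mapsto aq+b$ with $a\neq 1$ has the fixed point $q_0=b(1-a)^{-1}\in\zz_p$, and the identity map is excluded because $\sigma^{-1}(q)+1\neq\sigma^{-1}(q)$; hence we must be in the translation case $a=1$, $b\neq 0$. Substituting back, $\sigma(r+1)=\sigma(r)+b$ for all $r\in\zz_p$, which iterates to $\sigma(r)=\sigma(0)+br$, so $\sigma$ is affine and $U_\sigma\in\cC_2(p)$ as required. The main obstacle to watch out for is precisely the case $a\neq 1$ in the affine classification of $\pi$: without the fixed-point-free structure inherited from $\tau$, such rotational affine maps could a priori produce genuinely new permutations at higher levels of the hierarchy, and I expect writing out this exclusion carefully to be the delicate step.
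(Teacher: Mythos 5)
Your proposal is correct and follows essentially the same route as the paper: conjugate $X$ by $U_\sigma$, use that the only permutation matrices in $\cC_2(p)$ are the affine ones, and then exclude the case $a\neq 1$ because conjugation preserves the cycle structure of the $p$-cycle $\tau$ (you phrase this via fixed-point-freeness, the paper via the order of $t_{a,b}$ dividing $p-1$ versus $X$ having order $p$ — the same fact). Your version is in fact slightly more complete, since you carry out the induction for general $m$ explicitly, whereas the paper only spells out the step from $\cC_3(p)$ to $\cC_2(p)$.
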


\begin{proof}
    Since $\cC_2(p)\leq\cC_m(p)$ for $m\geq 2$ one direction is obvious. For the other direction, consider the operator $U_\sigma XU^\dagger_\sigma=U_\sigma U_tU^{-1}_\sigma=U_{\sigma t\sigma^{-1}}$, where $U_t=X$ and $t\in S_p$ denotes the shift permutation $t.q=q+1\mod p$. It is well known that $U_\sigma\in\cC(p)$ if and only if $U_\sigma$ is an affine permutation, that is, of the form $\sigma=t_{a,b}$ where $t_{a,b}.q=aq+b\mod p$ \cite{deBeaudrap2013}. In particular, compared to the Pauli group ($m=1$), the Clifford group ($m=2$) contains the elements $U_{t_a}=\sum_{q=0}^{p-1}|aq\rangle\langle q|$ for $a\in\zz^*_p$. Next, let $U_\sigma\in\cC_3(p)$ and consider the condition $U_\sigma XU^\dagger_\sigma\in\cC_2(p)$. From above, $U_\sigma XU^\dagger_\sigma=U_{t_{a,b}}$. Yet, since $t_{a,b}$ has order a multiple of $a$ with $a|p-1$, while $X$ has order $p$, this implies $a=1$. The constraint is thus the same as in the second level, and thus $U_\sigma\in\cC_2(p)$.
\end{proof}

While the measurement operators in monomial $\zz_p$-MBQC do not include all (local) Clifford operators $\cC_m(p)$ for $m>2$ (since $F\notin N(Z)$), it turns out that they do contain all (local) semi-Clifford $p$-torsion elements.

\begin{lemma}\label{lm: d-torsion semi-Cliffords}
    Let $U\in\SC_m(p)$ for some $m\in\zz$ and $p$ prime. If $U^p=\one$, then $U\in \mf{K}=T(Z)\rtimes\zz_p$.
\end{lemma}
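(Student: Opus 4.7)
\emph{Proof plan.} The plan is to use the semi-Clifford decomposition of $U$ together with the $p$-torsion constraint, reducing matters to the $p$-Sylow structure of $SL_2(\zz_p)$.

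First, I would apply the definition of $\SC_m(p)$ to write $U=C_1DC_2^{-1}$ with $C_1,C_2\in\cC(p)$ Clifford and $D\in\cD_{m-1}(p)$ diagonal in the computational basis. A cyclic rearrangement of $U^p=\one$ gives $W^p=\one$ where $W:=DV$ with $V:=C_2^{-1}C_1\in\cC(p)$, and moreover $U=C_1WC_1^{-1}$. It therefore suffices to analyse the $p$-torsion element $W=DV$, with the Clifford conjugation by $C_1$ to be absorbed at the end.

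Next, project $V$ via the short exact sequence $1\to\cP(p)\to\cC(p)\to SL_2(\zz_p)\to 1$ (valid for $p$ odd prime) to an element $M_V\in SL_2(\zz_p)$, writing $VZV^{-1}=\omega^\chi Z^\alpha X^\gamma$ with $(\alpha,\gamma)^T=M_V\cdot(1,0)^T$. Since $D$ commutes with $Z^\alpha$ but satisfies $DX^\gamma D^{-1}=\tilde D_\gamma X^\gamma$ for an explicit ``shift--difference'' diagonal $\tilde D_\gamma$, one computes $WZW^{-1}=\omega^\chi Z^\alpha\tilde D_\gamma X^\gamma$. The crucial intermediate claim is $\gamma=0$: iterating conjugation, $W^pZW^{-p}=Z$ forces both $M_V^p\cdot(1,0)^T=(1,0)^T$ in $SL_2(\zz_p)$ and the vanishing of the product of accumulated diagonal twists. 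Since $|SL_2(\zz_p)|=p(p^2-1)$ has $p$-part exactly $p$, with the $p$-Sylow equal to the upper-unipotent subgroup, this pins $M_V$ into the upper-Borel, i.e., $\gamma=0$.

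With $\gamma=0$, $V$ normalises $\langle Z\rangle$, and combined with $D$ diagonal, $W=DV$ normalises $\langle Z\rangle$ and is therefore monomial in the computational basis. The condition $W^p=\one$ together with the observation (cf.~Lemma~\ref{lm: Clifford permutations}) that affine permutations of $S_p$ of order dividing $p$ are exactly the pure shifts then forces $W$'s permutation part to be a cyclic shift, placing $W\in K=T(Z)\rtimes\zz_p$; absorbing the Clifford conjugation $U=C_1WC_1^{-1}$ using the freedom in the semi-Clifford decomposition yields $U\in K$.

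The main obstacle will be the step forcing $\gamma=0$: one has to verify that the accumulated diagonal factors $\tilde D_{k\gamma}$ over $p$ iterations cannot conspire with a non-unipotent $M_V$ to still produce $W^p=\one$. This is where the arithmetic of $p$ being an odd prime is essential, entering via the explicit $p$-Sylow structure of $SL_2(\zz_p)$, and where earlier results on the Clifford hierarchy (such as the CGK characterisation of $\cD_m(p)$) may have to be invoked to control the admissible diagonals $D$.
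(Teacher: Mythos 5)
Your plan follows essentially the same route as the paper's proof: decompose the semi-Clifford $U$ into a diagonal gate times a Clifford, pass to the induced element of $SL_2(\zz_p)\cong\mathrm{Sp}(2,\zz_p)$, and use $p$-torsion to force that element to be unipotent --- your appeal to the $p$-Sylow subgroup of $SL_2(\zz_p)$ is exactly the paper's observation that $S_V^p=\one$ forces $(S_V-\one)^p=0$, i.e.\ $S_V-\one$ nilpotent. The differences lie in how the two bookkeeping steps are organised, and this is where your plan has a concrete gap.

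First, the closing step ``absorbing the Clifford conjugation $U=C_1WC_1^{-1}$'' does not work as stated: $K=T(Z)\rtimes\zz_p$ is not normal in $\cC(p)$ (conjugating a generic diagonal $S_\xi$ by the Fourier gate $F$ destroys monomiality with respect to the computational basis), so $W\in K$ does not yield $C_1WC_1^{-1}\in K$, and you do not say which ``freedom in the semi-Clifford decomposition'' would let you choose $C_1$ in the normaliser of $K$. The paper avoids the outer conjugation altogether by invoking the Zeng--Chen--Chuang characterisation directly on $U$: a semi-Clifford gate maps the maximal diagonal Pauli subgroup to another maximal abelian Pauli subgroup, so one writes $U=VD$ with $D$ diagonal and analyses $\Ad_U$ itself rather than a conjugate of $U$. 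Second, the ``accumulated diagonal twists'' computation that you flag as the main obstacle is genuinely required in your formulation but is bypassed in the paper's: there one has $U^p=(\Ad_V\circ D)^p=\Ad_V^p\circ\prod_{i=0}^{p-1}D(i)$ with each $D(i)$ diagonal, so $U^p=\one$ splits cleanly into $\Ad_V^p=\id$ (giving unipotency) and $\prod_{i=0}^{p-1}D(i)=\one$, with no interaction between the symplectic part and the diagonal twists. A related caution for your $\gamma=0$ step: unipotency of $M_V$ only places it in \emph{some} Borel subgroup, not automatically the one fixing $\langle Z\rangle$, so the triangularisation must be tracked against the basis in which $D$ is diagonal. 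Reorganising your argument along the paper's lines resolves these points; as written, the conjugation step is the missing piece.
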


\begin{proof}
    By definition (see Ref.~\cite{,ZengChenChuang2008}), $U$ being semi-Clifford means that it maps a maximal diagonal subgroup of $\cP^n(d)$ to another maximal abelian subgroup of $\cP^n(d)$. Wlog, we may write $U=VD$, where $D\in\cD_m(p)$ for some $m$ and $V$ acts on the lattice of abelian subgroups of $\cP^n(d)$ by conjugation, explicitly it maps $D$ to the conjugate subgroup $D_V=\Ad_V(D)=VDV^{-1}$. Since for every $D_{V^i}\in V^i\cD_m(p)V^{-i}$ there exists $D(i)\in\cD_m(p)$ such that $D(i)=V^{-i}D_{V^i}V^i$, $p$-torsion implies $U^p=(\Ad_V\circ D)^p=\Ad^p_V\circ\prod_{i=0}^{p-1}D(i)=\one$ which therefore requires that both $\Ad^p_V=\mathrm{id}$ and $\prod_{i=0}^{p-1}D(i)=\one$. Under the identification of Pauli operators (up to phase) with vectors $v=(x,z)\in\zz^{2n}_p$ given for $n=1$ by $W(v)=\tau^{-xz}Z^zX^x$, the action of $\Ad_V$ is given by a symplectic transformation $S_V\in\mathrm{Sp}(2,\zz_p)$ \cite{deBeaudrap2013}. As a consequence, we must have $S^p_V=\one$, equivalently $(S_V-\one)^p=0$, that is, $S_V-\one$ is a nilpotent matrix and thus, up to conjugation, of the form $S_V-\one=\begin{pmatrix}0& c\\ 0& 0\end{pmatrix}$ for $c\in\zz_p$. Consequently, $S_V$ is an affine transformation, hence, $U$ factorises as an affine permutation and a diagonal matrix, and thus $U\in K$.
\end{proof}

As a consequence, we find that deterministic, non-adaptive $\zz_p$-MBQC restricted to the local Clifford group in odd prime dimension $p$ does not give rise to quantum solutions of its associated LCS.

\begin{theorem}\label{thm: Clifford noncontextual}
    Let $\Gamma$ be the solution group of a LCS over $\zz_p$ for $p$ odd prime. Then the LCS admits a quantum solution $\eta:\Gamma\ra \cC^{\otimes n}(p)$ if and only if it admits a classical solution $\eta:\Gamma\ra\zz_p$.
\end{theorem}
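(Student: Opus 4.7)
The plan is to reduce this statement to the monomial case covered by Thm.~\ref{thm: FCO-2}. The ``if'' direction is immediate, since any classical solution $\tilde{x}:\Gamma\to\zz_p$ lifts to the quantum solution $\eta(x_k)=\omega^{\tilde{x}_k}\one\in\cC^{\otimes n}(p)$. For the converse, my goal is to show that every $p$-torsion local-Clifford quantum solution automatically factors through the local monomial subgroup $K^{\otimes n}\subseteq N^{\otimes n}_{SU(p)}(p)$, after which Thm.~\ref{thm: FCO-2} supplies the required classical solution $\tilde{\eta}:\Gamma\to\zz_p$.

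To carry this out, I decompose $\eta(x_k)=\bigotimes_{j=1}^n U_k^{(j)}$ with $U_k^{(j)}\in\cC(p)$. The $p$-torsion condition $\eta(x_k)^p=\one$ forces $(U_k^{(j)})^p=c_k^{(j)}\one$ with $\prod_j c_k^{(j)}=1$, so by redistributing appropriate $p$-th roots among the tensor factors I may assume $(U_k^{(j)})^p=\one$ for every $j$ \emph{without altering the operator $\eta(x_k)$}, and therefore without disturbing row-commutativity or constraint satisfaction in the LCS. Each $U_k^{(j)}$ is then a $p$-torsion single-qudit Clifford element, which is trivially semi-Clifford: taking $C=\one$ and $C'=(U_k^{(j)})^{-1}$ gives $CU_k^{(j)}C'=\one\in\cD_1$. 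Lemma~\ref{lm: d-torsion semi-Cliffords} (a single-qudit statement, as its proof using $\mathrm{Sp}(2,\zz_p)$ makes clear) then applies factor-by-factor and places each $U_k^{(j)}$ in $K=T(Z)\rtimes\zz_p$. Consequently $\eta(\Gamma)\subseteq K^{\otimes n}\subseteq N^{\otimes n}_{SU(p)}(p)$, and Thm.~\ref{thm: FCO-2} finishes the argument.

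The step I expect to require the most care is the rephasing: one must verify that multiplying each tensor factor $U_k^{(j)}$ by a scalar $\lambda_k^{(j)}$ with $\prod_j\lambda_k^{(j)}=1$ leaves $\eta(x_k)$, and hence every LCS relation, unchanged. This is immediate from the multilinearity of the tensor product, but it is the only place where a chosen factorisation (rather than the operator itself) enters the argument, and so deserves to be flagged explicitly. Beyond this, the proof is a clean composition of the factor-wise application of Lemma~\ref{lm: d-torsion semi-Cliffords} with the monomial reduction Thm.~\ref{thm: FCO-2}; the role of Lemma~\ref{lm: Clifford permutations} is essentially packaged inside Lemma~\ref{lm: d-torsion semi-Cliffords} via the unipotent form of the symplectic action, so no extra ingredient is needed.
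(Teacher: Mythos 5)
Your proof is correct and follows essentially the same route as the paper's: both observe that $\SC_2(p)=\cC(p)$, apply Lm.~\ref{lm: d-torsion semi-Cliffords} factor-by-factor to land in $K^{\otimes n}$, and then invoke the monomial no-go result (the paper cites Thm.~2 of \cite{FrembsChungOkay2022}, i.e.\ Thm.~\ref{thm: FCO}, where you cite the more general Thm.~\ref{thm: FCO-2}; this is immaterial since $K^{\otimes n}$ lies in both target groups). The tensor-factor rephasing you flag is indeed the only step left implicit in the paper's one-line proof, and you handle it correctly.
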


\begin{proof}
    Since $\SC_2(p)=\cC(p)$, this follows from Lm.~\ref{lm: d-torsion semi-Cliffords}, together with Thm.~2 in Ref.~\cite{FrembsChungOkay2022}.
\end{proof}

Thm.~\ref{thm: Clifford noncontextual}, Lm.~\ref{lm: Clifford permutations} and the fact that $\cC_m(d)$ and $\SC_m(d)$ do not form groups for $m>2$ suggests to consider different extensions of the qudit Pauli group. Note that for $d$ odd the Pauli group coincides with the Heisenberg-Weyl group, $\cP(d)\cong H(d)$ which is a semi-direct product of phase and shift operators (see Eq.~(\ref{ses: H})), and where the latter act as automorphisms on the former (see Eq.~(\ref{ses: H})). This perspective suggests a generalisation from shifts to more general permutations (of order $d$). The group $L(d)$ (see Eq.~(\ref{ses: L}) and App.~\ref{app: comparison}) generalises the Heisenberg-Weyl group to arbitrary permutations.\footnote{Since for $d=2$, $\langle X\rangle\cong S_2$, $L(d)$ may be seen as an alternative generalisation of the qubit Pauli group to qudits.} More generally, the group of monomial matrices $N(d)$ is the group that arises by generalising the group $K$ (see Def.~\ref{def: K-group} below),\footnote{$K$ is the group generated by conjugating the Pauli-$X$ operator with generalised phase gates, e.g. in $\cD_m(d)$ \cite{FrembsRobertsCampbellBartlett2022,FrembsChungOkay2022}.} to arbitrary permutations.

\section{Solutions of LCS associated to $\zz_p$-MBQC}\label{sec: solution}

\subsection{Noncontextual $\zz_p$-MBQC and classical solutions to its associated LCS}

We first relate noncontextuality of a $\zz_p$-MBQC with classical solvability of its associated LCS.

\begin{lemma}\label{lm: classical solutions}
    Let $M$ be noncontextual $\zz_d$-MBQC. Then its associated LCS has a classical solution.
\end{lemma}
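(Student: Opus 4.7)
The plan is a direct construction: translate a noncontextual value assignment for $M$ into a classical solution of $A(M)\tilde x = b(o)\mod d$, then verify the two blocks of rows of $A(M)$ separately.

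First, I would unpack what noncontextuality means at the level of Def.~\ref{def: Z_p-MBQC}. A noncontextual value assignment for the $\zz_d$-MBQC $M$ consists of values $m_k(\bi)\in\zz_d$ attached to each local measurement operator $M_k(\bi)$ and values $O(\bi)\in\zz_d$ attached to each global measurement operator $M(\bi)$, jointly satisfying the two deterministic relations underlying the MBQC: the local factorisation $M(\bi)=\bigotimes_{k=1}^n M_k(\bi)^{c_k(\bi)}$ on $|\psi\rangle$, which at the level of values reads $O(\bi)\equiv\sum_k c_k(\bi)\,m_k(\bi)\pmod d$, and the global output relation $M(\bi)|\psi\rangle=\omega^{o(\bi)}|\psi\rangle$, which forces $O(\bi)\equiv o(\bi)\pmod d$. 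Combining these, noncontextuality is equivalent to the existence of $m_k(\bi)\in\zz_d$ with $o(\bi)\equiv\sum_k c_k(\bi)\,m_k(\bi)\pmod d$ for every $\bi\in I$.

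Given such an assignment, I propose $\tilde x_{M(\bj)}:=o(\bj)$ on the $|I|$ global columns and $\tilde x_{M_k(\bj)}:=-m_k(\bj)\pmod d$ on the $n|I|$ local columns, where the sign is chosen to match the LCS convention $\prod_j x_j^{A_{ij}}=\omega^{b_i}\one$ (with $x_j=\omega^{\tilde x_j}\one$) applied to the locality rows of Def.~\ref{def: associated LCS}. To verify $A(M)\tilde x=b(o)\mod d$ I would evaluate the two blocks of rows in turn. For row $\bi$ (with $\bi\in I$), the only non-zero entries of $A(M)_\bi$ are a $1$ in column $M(\bi)$ and a $c_k(\bi)$ in each column $M_k(\bi)$, so
\[ (A(M)\tilde x)_\bi \;=\; o(\bi) + \sum_{k=1}^n c_k(\bi)\bigl(-m_k(\bi)\bigr) \;=\; 0 \;=\; b_\bi, \]
which is precisely the noncontextuality identity just derived. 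For the last row, all global columns carry a $1$ and all local columns carry $0$, so $(A(M)\tilde x)_{|I|+1}=\sum_{\bj\in I}o(\bj)=b_{|I|+1}$ by the very definition of $b(o)$ in Def.~\ref{def: associated LCS}. Hence $\tilde x\in\zz_d^{|I|+n|I|}$ is a classical solution.

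There is no substantive obstacle: the lemma is essentially a bookkeeping statement that Def.~\ref{def: associated LCS} faithfully encodes the two algebraic relations defining a deterministic, non-adaptive $\zz_d$-MBQC, and the verification reduces to a one-line check once the convention $x_j=\omega^{\tilde x_j}\one$ and the sign in the local-column assignment are fixed. Crucially, the argument uses no structural assumption on the measurement operators (monomiality, Cliffordness, $d$-torsion, etc.), only the two defining relations of Def.~\ref{def: Z_p-MBQC}. This is in sharp contrast to the converse direction --- which the discussion around Eq.~(\ref{eq: M-star LCS}) and Eq.~(\ref{eq: qudit M-star LCS}) hints at but which is not claimed here --- where the state-dependence of contextuality and the ambient group of unitaries become essential, as in the main Thm.~\ref{thm: FCO-2}.
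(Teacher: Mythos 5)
Your proof is correct and follows the same route as the paper's: a noncontextual value assignment $v$ on $\mc{O}=\{M(\bi)\}\cup\{M_k(\bi)\}$ with $o(\bi)=v(M(\bi))=\sum_k c_k(\bi)v(M_k(\bi))\bmod d$ is read off directly as a solution vector of $A(M)x=b(o)\bmod d$, verifying the locality rows and the final parity row separately. You are in fact slightly more careful than the paper, whose proof ends with a ``clearly'' and does not make explicit the sign $\tilde x_{M_k(\bj)}=-m_k(\bj)$ on the local columns that is needed for the locality rows of Def.~\ref{def: associated LCS} (which carry $+1$ on the global column and $+c_k(\bi)$ on the local ones) to sum to zero.
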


\begin{proof}
    $M$ being noncontextual means there exists a value assignment $v:\mc{O}\ra\zz_d$ with $\mc{O}=\{M(\bi)\mid\bi\in I\}\cup\{M_k(\bi)\mid k\in[n],\bi\in I\}$ for $|\mc{O}|=(n+1)|I|$
    such that $o(\bi)=v(M(\bi))=\sum_{k=1}^n v(M_k(\bi))\mod d$ for all $\bi\in I$. Clearly, $v\in\zz^{(n+1)|I|}_d$ therefore is a solution to the associated LCS $A(M)v=b(o)\mod d$ in Def.~\ref{def: associated LCS}.
\end{proof}

However, note that the existence of a classical solution to the LCS $Ax=b\mod d$ associated to a $\zz_d$-MBQC does not automatically imply that it is noncontextual. There are two reasons for this:

First, a classical solution may not lift to a value assignment for the measurement operators in a $\zz_d$-MBQC since its associated LCS in Eq.~(\ref{eq: associated TG-LCS}) may not encode all algebraic relations between (subsets of commuting) operators that constitute a quantum solution to it. For instance, recall from Ref.~\cite{FrembsRobertsBartlett2018} that a deterministic, non-adaptive $lp$-MBQC for $d=p$ odd prime is contextual if its output function has degree $\deg(o)>p$. Yet, as long as $\sum_{\bi\in\zz^l_p}o(\bi)=0 \mod p$, which may be the case even for $\deg(o)>p$, the LCS associated to it is classically solvable. Second, the obvious candidate value assignment $v:\cO\ra\zz_d$ defined by $v(O_i)=x_i$ for all $O_i\in\cO$ in the $\zz_d$-MBQC may not obey the spectral constraint, $v(O)\notin\mathrm{sp}(O)$, that is, $v(O)$ may not be an eigenvalue of $O$. Clearly, this constraint is satisfied if all measurement operators $\cO$ have spectrum equal to $\zz_d$.

Despite these differences between (non)contextual $\zz_d$-MBQC and the existence of classical solutions to its associated LCS, contextuality of an $\zz_d$-MBQC still indicates that its associated LCS is not classically solvable.

\begin{lemma}\label{lm: contextual solution}
    For every function $o: \zz^l_p\ra\zz_p$ with $\sum_{\bi\in\zz^l_p}o(\bi)\neq 0 \mod p$ and $p$ prime there exists a deterministic, non-adaptive $\zz_p$-MBQC implementing this function such that its associated LCS has no classical solution.
\end{lemma}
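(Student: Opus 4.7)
The plan is to construct, for the given $o$, an explicit non-adaptive $\zz_p$-MBQC of generalised Mermin-star type (mimicking Thm.~\ref{thm: AB qudit computation}) that implements $o$ and whose associated LCS admits no classical solution by exactly the cancellation argument used after Eq.~(\ref{eq: qudit M-star LCS}). Without loss of generality I may assume $l \geq 2$: for $l = 1$, embed $o$ as $\tilde{o}(i_1,i_2) := o(i_1)\delta_{i_2,0}$ on $\zz^2_p$, which preserves the sum constraint $\sum \tilde{o} = \sum o \not\equiv 0 \mod p$ and trivially implements $o$ on the restriction $i_2 = 0$.

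Concretely, I take $n$ qudits (with $n \geq (p^l-1)/(p-1)$) in the GHZ state $|\psi\rangle = \frac{1}{\sqrt p}\sum_q |q\rangle^{\otimes n}$, pick linear surjective functions $f_k: \zz^l_p \to \zz_p$ so that $(f_k)_{k\in[n]}$ exhausts every $1$-dimensional subspace (equivalently, every non-zero character) of $\zz^l_p$, set $c_k(\bi) := 1$, and choose monomial local measurements $M_k(j)|q\rangle = \xi_k(j,q)|q+1\rangle$ of order $p$. The phases $\xi_k(j,q)$ must be chosen so that $M(\bi)|\psi\rangle := \bigotimes_k M_k(f_k(\bi))|\psi\rangle = \omega^{o(\bi)}|\psi\rangle$; writing $\xi_k(j,q) = \omega^{a_k(j) + q b_k(j)}$ (times an overall normalisation that makes $M_k(j)^p = \one$, as in the choice $\theta(\bi) = e^{\frac{2\pi i \bi}{p^2}}$ of Thm.~\ref{thm: AB qudit computation}), this reduces to the $q$-independence constraints $\sum_k b_k(f_k(\bi)) \equiv 0 \mod p$ (for all $\bi$) and the eigenvalue condition $\sum_k a_k(f_k(\bi)) \equiv o(\bi) \mod p$. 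The latter admits a solution for arbitrary $o$ because, by character theory on $\zz^l_p$, the family $\{\bi \mapsto a\circ f_k(\bi)\}_{k,a}$ spans all of $\zz_p^{\zz^l_p}$ once the $f_k$ run over every non-zero direction; the $b_k$ can then be set to zero.

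Having constructed such an MBQC implementing $o$, I then read off its associated LCS $A(M)x = b(o)\mod p$ (with the natural identifications $M_k(\bi) = M_k(\bi')$ whenever $f_k(\bi) = f_k(\bi')$, exactly as in Eq.~(\ref{eq: qudit M-star LCS})) and reproduce the calculation immediately after Eq.~(\ref{eq: qudit M-star LCS}) verbatim. Summing the $p^l$ factorisation rows produces, in each $M_k(j)$ column, the coefficient $\sum_{\bi:\,f_k(\bi)=j} 1 = p^{l-1} \equiv 0 \mod p$ (using $l \geq 2$ and surjectivity of $f_k$), so the summed row forces $\sum_\bi x_{M(\bi)} \equiv 0 \mod p$, contradicting the last row $\sum_\bi x_{M(\bi)} = \sum_\bi o(\bi) \not\equiv 0 \mod p$. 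Hence no classical solution exists.

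The main obstacle is the existence argument in the second paragraph: ensuring that a choice of monomial phases $\xi_k(j,q)$ actually implements the prescribed $o$. This reduces to the character-theoretic interpolation that every $o\in\zz_p^{\zz_p^l}$ decomposes as $\sum_k a_k\circ f_k$, which is straightforward provided $l \geq 2$ and the $f_k$ cover every direction, but trivialises the case $l = 1$ (only one non-trivial direction is available) and is precisely what motivates the embedding step at the outset.
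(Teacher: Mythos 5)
Your reduction of ``no classical solution'' to the row-summation argument is fine and matches the paper's (which multiplies the first $p^l$ rows by $p-1$ instead of summing, an equivalent manipulation, and which likewise implicitly identifies the columns $M_k(\bi)$ with $M_k(\bi')$ whenever $f_k(\bi)=f_k(\bi')$). The problem is the first half: the existence of a deterministic, non-adaptive $\zz_p$-MBQC implementing an \emph{arbitrary} $o$ is the genuinely hard part, and your argument for it is incorrect. The paper does not reprove this; it cites Lm.~1 of Ref.~\cite{FrembsRobertsCampbellBartlett2022}, whose construction uses measurement operators with phases that are \emph{nonlinear} in the computational-basis label $q$ (as in Eq.~(\ref{eq: qudit measurement operators}), where the exponent contains $q^{p-1}$).

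Concretely, your ansatz $\xi_k(j,q)=\omega^{a_k(j)+qb_k(j)}$ with the $q$-independence constraint forces the output to be $o(\bi)=\sum_k a_k(f_k(\bi))$ (the $b_k$ drop out), and the claim that the family $\{\bi\mapsto a\circ f_k(\bi)\}_{k,a}$ spans all of $\zz_p^{\zz_p^l}$ is false. Writing $a$ as a polynomial of degree $\le p-1$, every $a\circ f_k$ is a polynomial in $\bi$ of \emph{total} degree $\le p-1$; conversely polarization shows the span is exactly the polynomials of total degree $\le p-1$, which is a proper subspace (dimension $\binom{l+p-1}{l}<p^l$ for $l\ge 2$). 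Worse, this subspace misses precisely the functions the lemma is about: $\sum_{\bi\in\zz_p^l}o(\bi)\neq 0\bmod p$ forces $o$ to contain the monomial $(i_1\cdots i_l)^{p-1}$ of total degree $l(p-1)\ge p$, so no choice of $a_k$ can realise such an $o$. (This is consistent with the paper's own results: your construction lands in the Heisenberg--Weyl group, and by Cor.~\ref{cor: qudit Pauli MBQC} such MBQC is noncontextual, whereas implementing a degree-$\ge p$ function in a deterministic non-adaptive $lp$-MBQC requires contextuality by Ref.~\cite{FrembsRobertsBartlett2018}.) To repair the proof you must either allow nonlinear-in-$q$ phase functions $\xi_k(j,q)$ and redo the interpolation for the pair of constraints ($q$-independence plus eigenvalue condition) --- which is essentially reproving Lm.~1 of Ref.~\cite{FrembsRobertsCampbellBartlett2022} --- or simply cite that result as the paper does.
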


\begin{proof}
    For any given output function $o:\zz^l_p\ra\zz_p$, Lm.~1 in Ref.~\cite{FrembsRobertsCampbellBartlett2022} constructs a deterministic, non-adaptive $lp$-MBQC $M$ with linear pre-processing functions $f_k:\zz^l_p\ra\zz_p$ (and trivial post-processing) that implements $o$ by action on the $n$-qudit GHZ state $|\Psi\rangle=\frac{1}{\sqrt{p}}\sum_{q=0}^{p-1} \otimes_{k=1}^n|q\rangle_k$ for some $n\leq p^l$ with measurement operators in $N^{\otimes n}_{SU(p)}(p)$. Consider the constraint matrix of its associated LCS in Eq.~(\ref{eq: associated TG-LCS}). By multiplying each of the first $p^l$ rows by $p-1$ and adding these to the last row, we obtain a row of only zeros. Yet, since $\sum_{\bi\in\zz^l_p} o(\bi)\neq 0\mod p$ by assumption, the LCS $A(M)x=b(o)$ has no classical solution.
\end{proof}

\subsection{$\zz_d$-MBQC with Pauli measurements}\label{sec: MBQC as quantum solutions to LCS for Pauli observables}

We observe the following general relation between $\zz_d$-MBQC and quantum solutions of its associated LCS.

\begin{lemma}\label{lm: MBQC + comm = LCS}
    A $\zz_d$-MBQC $M$ defines a quantum solution to its associated LCS in Def.~\ref{def: associated LCS} if and only if the (global) measurement operators $\{M(\bi)\}_{\bi\in I}$ pairwise commute.
\end{lemma}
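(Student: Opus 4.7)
The plan is to verify the three defining conditions of a quantum solution to $A(M)x = b(o) \mod d$---$d$-torsion, row-wise commutativity, and multiplicative constraint satisfaction---under the natural identification of each LCS variable with its namesake operator in the $\zz_d$-MBQC. The ``only if'' direction is then essentially by definition: row-wise commutativity applied to the last row of $A(M)$, which has a $1$ in every $M(\bj)$ column, forces $[M(\bi),M(\bi')]=\one$ for all $\bi,\bi'\in I$, and this is the only row of $A(M)$ in which the global operators simultaneously appear.

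For the ``if'' direction, I would assume pairwise commutativity of the global measurement operators and check each quantum-solution condition in turn. The $d$-torsion $M_k(\bi)^d=\one_k$ is built into Def.~\ref{def: Z_p-MBQC} and propagates to $M(\bi)^d=\otimes_k M_k(\bi)^{dc_k(\bi)}=\one$ via the tensor decomposition. For row-wise commutativity in the per-input rows, the relevant variables are $M(\bi),M_1(\bi),\ldots,M_n(\bi)$, which pairwise commute since the single-site operators act on disjoint tensor factors and $M(\bi)=\otimes_k M_k(\bi)^{c_k(\bi)}$ lies in the commutative algebra they generate; the last row is the standing hypothesis.

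The substantive step is multiplicative constraint satisfaction. The per-input constraints reduce, up to the sign convention implicit in $A(M)$, to $M(\bi)=\otimes_k M_k(\bi)^{c_k(\bi)}$, which is precisely the defining relation of Def.~\ref{def: Z_p-MBQC}. The global last-row constraint $\prod_{\bi\in I}M(\bi)=\omega^{\sum_\bi o(\bi)}\one$ is where the commutativity hypothesis really does the work: it lets us simultaneously diagonalise the $M(\bi)$'s, so that the determinism $M(\bi)|\psi\rangle=\omega^{o(\bi)}|\psi\rangle$ pins the product's eigenvalue on $|\psi\rangle$ to $\omega^{\sum_\bi o(\bi)}$. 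The main obstacle will be lifting this eigenvalue identity on $|\psi\rangle$ to an operator identity on all of $\cH$; I expect this to be addressed either by interpreting the LCS constraints relative to the representation determined by $|\psi\rangle$ (as is standard for quantum solutions in the nonlocal games literature), or by exploiting additional structure of the measurement operators (such as monomiality of Def.~\ref{def: monomial Z_p-MBQC}) to force $\prod_\bi M(\bi)$ to act as $\omega^{\sum_\bi o(\bi)}$ throughout the invariant subspace containing $|\psi\rangle$.
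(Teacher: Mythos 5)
Your proposal follows essentially the same route as the paper's proof, which simply checks the three defining conditions: $d$-torsion from $M_k^d(\bi)=\one_k$, commutativity in the per-input rows from the tensor-product decomposition, and commutativity in the last row by hypothesis. The paper's proof is in fact terser than yours --- it verifies only torsion and commutativity and does not discuss constraint satisfaction at all --- so the obstacle you flag at the end is real and is not resolved by the paper either: determinism gives $\prod_{\bi\in I}M(\bi)|\psi\rangle=\omega^{\sum_{\bi}o(\bi)}|\psi\rangle$ only as an eigenvalue relation on $|\psi\rangle$, and pairwise commutativity of the $M(\bi)$ does not by itself force $\prod_{\bi\in I}M(\bi)$ to be a scalar multiple of $\one$ (commuting diagonal operators with a common eigenvector need not have scalar product). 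Your two proposed remedies --- reading the constraints relative to the representation determined by $|\psi\rangle$, or restricting to operators with extra structure --- are exactly the kind of assumption one would need to make the lemma literally true as an operator identity; the same remark applies to the sign/inverse convention in the per-input rows, which you also correctly note the paper leaves implicit.
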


\begin{proof}
    This follows immediately from Def.~\ref{def: associated LCS}. More precisely, $d$-torsion holds since local (hence, global) measurement operators in $\zz_d$-MBQC satisfy $M^d_k(\bi)=\one_k$. Commutativity of operators in rows labelled by $\bi\in I$ follows from the tensor product decomposition $M(\bi)=\otimes_{k=1}^n M^{c_k(\bi)}_k(\bi)$, and holds by assumption for the global measurement operators in the $|I|+1$-th row. $M$ thus defines a quantum solution to its associated LCS.
\end{proof}

Lm.~\ref{lm: MBQC + comm = LCS} therefore suggests a path for constructing quantum solutions to LCS from $\zz_d$-MBQC. Note however that unlike in Mermin's star in Fig.~\ref{fig: Mermin(Peres) square and star} (b), whose three-qubit Pauli operators commute in every context---including the horizontal one---measurement operators in ($\zz_d$-)MBQC generally only commute on the resource state, as is the case in the examples in Sec.~\ref{sec: Contextual computation in qudit MBQC}.\footnote{As such, deterministic, non-adaptive $\zz_d$-MBQC naturally only gives rise to \emph{operator solutions} of LCS (without the commutativity constraint). These are classical if and only if they define a local hidden variable model for the $\zz_d$-MBQC.} Still, the commutativity constraint may hold in certain instances of contextual $\zz_d$-MBQC and thus give rise to quantum solutions of its associated LCS also for $d$ odd.

In order to analyse such cases, we start with the following straightforward observation.

\begin{lemma}\label{lm: commutativity up to phase is state-independent}
    Let $a,b\in\UH$ with $[a,b]=aba^{-1}b^{-1}=z\one$ for $z\in U(1)$, and $\rho\in\SH$ with $\tr[\rho ab]\neq 0$.\footnote{Here, $\SH=\{\rho\in\LHsa\mid\rho>0,\tr[\rho]=1\}$ denotes the set of quantum states, represented in terms of density matrices.} Then
    \begin{align}
        [a,b]_\rho
        :=\tr[\rho ab]-\tr[\rho ba]=0 \quad \Longleftrightarrow \quad [a,b]=0\; .
    \end{align}
\end{lemma}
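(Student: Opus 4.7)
The plan is to rewrite the state-dependent commutator $[a,b]_\rho$ as a scalar multiple of $\tr[\rho ab]$ using the hypothesis that $a$ and $b$ commute up to a phase, and then invoke the non-vanishing assumption $\tr[\rho ab]\neq 0$ to conclude.

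First, I would unpack the hypothesis $aba^{-1}b^{-1}=z\one$. Right-multiplying by $ba$ yields the simple relation $ab=z\,ba$, equivalently $ba=z^{-1}ab$ (note $z\in U(1)$ so $z^{-1}=\bar z$ makes sense). Substituting this into the second term of $[a,b]_\rho$ gives
\begin{equation*}
    [a,b]_\rho
    = \tr[\rho ab] - \tr[\rho\,z^{-1}ab]
    = (1-z^{-1})\,\tr[\rho ab]\; .
\end{equation*}
Since by hypothesis $\tr[\rho ab]\neq 0$, the vanishing $[a,b]_\rho=0$ is equivalent to $1-z^{-1}=0$, i.e., $z=1$.

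Finally, I translate $z=1$ back to the operator statement: $aba^{-1}b^{-1}=\one$ is precisely $[a,b]=0$. Conversely, if $[a,b]=0$ then trivially $\tr[\rho ab]=\tr[\rho ba]$ and $[a,b]_\rho=0$, closing the equivalence. The argument is entirely algebraic and contains no real obstacle; the only point worth emphasising is the role of the assumption $\tr[\rho ab]\neq 0$, which is exactly what is needed to rule out the degenerate possibility that $[a,b]_\rho=0$ for an uninteresting reason (namely $\tr[\rho ab]=0$) while $z\neq 1$. In that sense the lemma says: on any state which has non-trivial overlap with $ab$, state-dependent commutation forces genuine operator commutation, lifting a state-dependent condition to a state-independent one.
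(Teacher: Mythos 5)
Your proof is correct and follows essentially the same route as the paper: both rewrite one of the two traces using $ab=z\,ba$ and then use $\tr[\rho ab]\neq 0$ to force $z=1$. The only (cosmetic) difference is that you factor out $\tr[\rho ab]$ directly, whereas the paper factors out $\tr[\rho ba]$, which makes your version marginally cleaner since the non-vanishing hypothesis is stated for $\tr[\rho ab]$.
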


\begin{proof}
    Clearly, if $[a,b]=0$, then also $[a,b]_\rho=\tr[\rho ab]-\tr[\rho ba]=\tr[\rho ba]-\tr[\rho ba]=0$. Conversely, we have
    \begin{align*}
        0=[a,b]_\rho
        =\tr[\rho ab]-\tr[\rho ba]
        =z\tr[\rho ba]-\tr[\rho ba]
        =(z-1)\tr[\rho ba]\; ,
    \end{align*}
    from which it follows that $z=1$, since we assumed $\tr[\rho ab]\neq 0$, hence, $[a,b]=\one$ equivalently $ab=ba$.
\end{proof}

Lm.~\ref{lm: commutativity up to phase is state-independent} naturally applies to the $n$-qudit Pauli group $\cP^{\otimes n}$, since $[\cP^{\otimes n},\cP^{\otimes n}]=Z(\cP^{\otimes n})=\{\omega^c\one\mid c\in\zz_d\}$ where $\omega=e^{\frac{2\pi i}{d}}$ as before. In particular, if two Pauli operators share a common eigenstate then they also commute.

\begin{theorem}\label{thm: LCS = MBQC for Paulis}
    Every $\zz_d$-MBQC $M$ with output function $o:I\ra\zz_d$ (with $d$ arbitrary) and restricted to Pauli measurements defines a quantum solution to its associated LCS $A(M)x=b(o)\mod d$ in Def.~\ref{def: associated LCS}.
\end{theorem}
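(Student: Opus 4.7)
The plan is to reduce the claim, via Lemma \ref{lm: MBQC + comm = LCS}, to the statement that the global measurement operators $\{M(\bi)\}_{\bi\in I}$ pairwise commute as operators on all of $\cH$. Once this is shown, the remaining conditions in the associated LCS are automatic: $d$-torsion of measurement operators and commutativity of local operators within a fixed row $\bi$ of $A(M)$ follow immediately from Def.~\ref{def: Z_p-MBQC} (in particular from the tensor-product decomposition $M(\bi)=\otimes_{k=1}^n M_k^{c_k(\bi)}(\bi)$), and the last-row `parity constraint' $\prod_\bi M(\bi) = \omega^{\sum_\bi o(\bi)}\one$ follows from $M(\bi)|\psi\rangle = \omega^{o(\bi)}|\psi\rangle$ once the $M(\bi)$ commute (so that their product is diagonalised jointly on the resource state and has its eigenvalue on $|\psi\rangle$ extended to the whole Hilbert space by simultaneous spectral decomposition).

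For pairwise commutativity of $\{M(\bi)\}_{\bi\in I}$, I would apply Lemma \ref{lm: commutativity up to phase is state-independent} with $\rho=|\psi\rangle\langle\psi|$, $a=M(\bi)$, $b=M(\bj)$. The restriction to Pauli measurements is crucial here: since $M(\bi),M(\bj)\in\cP^{\otimes n}(d)$, their group-theoretic commutator lies in the center $Z(\cP^{\otimes n}(d))=\{\omega^c\one\mid c\in\zz_d\}$, so $[M(\bi),M(\bj)]=\omega^c\one$ for some $c\in\zz_d$, satisfying the hypothesis of the lemma. Determinism of the $\zz_d$-MBQC gives $M(\bi)|\psi\rangle=\omega^{o(\bi)}|\psi\rangle$, hence $\langle\psi|M(\bi)M(\bj)|\psi\rangle=\omega^{o(\bi)+o(\bj)}\neq 0$, so the non-degeneracy assumption $\tr[\rho ab]\neq 0$ holds. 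Moreover, both orderings produce the same eigenvalue on $|\psi\rangle$, so the state-dependent commutator $[M(\bi),M(\bj)]_\rho$ vanishes. Lemma \ref{lm: commutativity up to phase is state-independent} then forces the phase to be trivial, i.e.\ $[M(\bi),M(\bj)]=\one$.

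There is essentially no serious obstacle once the correct lemmas are lined up: the only non-trivial content is the lift from state-dependent to state-independent commutativity, and this is already encapsulated in Lemma \ref{lm: commutativity up to phase is state-independent}. The whole point of the theorem is to observe that Pauli measurements are special precisely because their pairwise commutators are central phases, so that commutativity on any state with non-vanishing overlap with $M(\bi)M(\bj)|\psi\rangle$ propagates to commutativity on all of $\cH$ --- a property that, as emphasised in the introduction and later exploited in Thm.~\ref{thm: no qsol from MBQC}, fails in general for the monomial generalisations considered in the next section.
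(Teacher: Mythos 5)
Your proposal is correct and follows exactly the paper's own (one-line) proof, which simply combines Lm.~\ref{lm: MBQC + comm = LCS} with Lm.~\ref{lm: commutativity up to phase is state-independent}; you merely spell out the verification of the hypotheses of the latter (centrality of Pauli commutators, $\tr[\rho\, M(\bi)M(\bj)]=\omega^{o(\bi)+o(\bj)}\neq 0$, vanishing of the state-dependent commutator). The remaining conditions of the associated LCS are, as you note, absorbed into Lm.~\ref{lm: MBQC + comm = LCS}, so no further argument is needed.
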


\begin{proof}
    This follows immediately from Lm.~\ref{lm: MBQC + comm = LCS} and Lm.~\ref{lm: commutativity up to phase is state-independent}.
\end{proof}

In the qubit case, Thm.~\ref{thm: LCS = MBQC for Paulis} generalises the examples in Sec.~\ref{sec: contextual MBQC for qubits}, as well as related results \cite{Arkhipov2012,TrandafirLisonekCabello2022,MullerGiorgetti2025}. In the qudit case with $d$ odd, it relates the fact that LCS with quantum solutions in the $n$-qudit Pauli group are classically solvable \cite{QassimWallman2020}, with the fact that MBQC restricted to Pauli measurements is noncontextual \cite{Gross2006}.

\begin{corollary}\label{cor: qudit Pauli MBQC}
    Every $\zz_d$-MBQC with $d$ odd and restricted to Pauli measurements is noncontextual.
\end{corollary}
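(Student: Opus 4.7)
The plan is to derive the corollary by chaining three ingredients: Thm~\ref{thm: LCS = MBQC for Paulis}, the reduction of Pauli quantum solutions to classical ones in odd dimension \cite{QassimWallman2020}, and the positivity of the discrete Wigner function of stabilizer resource states for $d$ odd \cite{Gross2006}.

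Concretely, by Thm~\ref{thm: LCS = MBQC for Paulis} any $\zz_d$-MBQC $M$ with measurements in $\cP^{\otimes n}(d)$ defines a quantum solution of its associated LCS with values in the Pauli group. For $d$ odd, the Qassim--Wallman result replaces this quantum solution by a classical one, $\tilde{x}: \cO \to \zz_d$. The two potential obstructions to promoting $\tilde{x}$ to a value assignment on the measurement operators flagged after Lm.~\ref{lm: classical solutions} both evaporate in the Pauli setting: every nontrivial element of $\cP^{\otimes n}(d)$ in odd dimension has order $d$ and spectrum $\{\omega^c : c \in \zz_d\}$, so the spectral constraint is automatic; and within a maximal commuting context of $\cP^{\otimes n}(d)$ the multiplicative product relations encoded row-by-row in $A(M)$ generate \emph{all} algebraic relations among the operators in that context. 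Hence $\tilde{x}$ promotes to an assignment $v: \cO \to \zz_d$ compatible with every commuting-context relation appearing in the MBQC.

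The principal obstacle is bridging the final gap between classical solvability of the LCS and noncontextuality of the MBQC: the LCS enforces only the aggregate identity $\sum_\bi v(M(\bi)) \equiv \sum_\bi o(\bi) \bmod d$, whereas determinism of the MBQC demands $v(M(\bi)) = o(\bi)$ for each individual $\bi \in I$, since $M(\bi)|\psi\rangle = \omega^{o(\bi)}|\psi\rangle$. To close this gap I would appeal to Gross's theorem: in odd dimension the discrete Wigner function of any $M$-stabilizer resource state $|\psi\rangle$ is a genuine probability distribution on phase space, and sampling a phase-space point $\lambda$ furnishes a joint eigenvalue assignment $v_\lambda$ to all Paulis under which the eigenvalue equations on $|\psi\rangle$ force $v_\lambda(M(\bi)) = o(\bi)$. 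Matching $v_\lambda$ with the classical LCS solution above (both are $\zz_d$-valued homomorphisms on each maximal commuting Pauli subgroup, so they agree up to a global sector) yields the required noncontextual hidden variable model and establishes the corollary.
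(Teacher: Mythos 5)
Your skeleton coincides with the paper's: Thm.~\ref{thm: LCS = MBQC for Paulis} turns the Pauli $\zz_d$-MBQC into a quantum solution of its associated LCS, Thm.~2 of Ref.~\cite{QassimWallman2020} converts this into a classical one, and the spectral obstruction is void because Pauli operators have spectrum $\zz_d$. Where you diverge is in closing the final gap. You correctly observe that the associated LCS enforces only the aggregate relation $\sum_{\bi}v(M(\bi))=\sum_{\bi}o(\bi)$, whereas noncontextuality in the sense of Lm.~\ref{lm: classical solutions} requires $v(M(\bi))=o(\bi)$ for each $\bi$; the paper handles this only implicitly, by enlarging the LCS to contain all product constraints among Pauli observables and asserting that the resulting classical solution is a value assignment for the full Pauli group. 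Your explicit closure via Gross's phase-space value assignments is the right idea and arguably more honest, but it is invoked slightly too strongly: Def.~\ref{def: Z_p-MBQC} does not require the resource state to be a stabiliser state (only Def.~\ref{def: monomial Z_p-MBQC} does), it only forces $|\psi\rangle$ to lie in a joint eigenspace of the commuting Paulis $M(\bi)$, so you cannot appeal to positivity of the Wigner function of $|\psi\rangle$ itself. What you actually need --- and what is true for odd $d$ --- is weaker: the operators $\omega^{-o(\bi)}M(\bi)$ generate a phase-free abelian subgroup stabilising $|\psi\rangle$, and every character of such an isotropic subgroup (in particular the trivial one) is realised by some Gross value assignment $v_\lambda$ on $\cP^{\otimes n}(d)$, which is additive on all commuting pairs and hence satisfies the locality rows of $A(M)$ automatically. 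That $v_\lambda$ gives $v_\lambda(M(\bi))=o(\bi)$ and $v_\lambda(M(\bi))=\sum_k c_k(\bi)v_\lambda(M_k(\bi))$ directly, so your vaguer final step of ``matching $v_\lambda$ with the LCS solution up to a global sector'' can simply be dropped. Net assessment: same theorem chain as the paper, with a needed and essentially correct (modulo the stabiliser-state caveat) explicit treatment of the last step that the paper leaves terse.
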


\begin{proof}
    By Thm.~\ref{thm: LCS = MBQC for Paulis}, a $\zz_d$-MBQC $M$ constitutes a quantum solution to its associated LCS, which admits a classical solution by Thm.~2 in Ref.~\cite{QassimWallman2020}. Since Pauli operator have spectrum $\zz_d$, every classical solution assigns valid spectral values. Hence, by extending the LCS associated with $M$ to one which contains all product constraints between Pauli observables, Ref.~\cite{QassimWallman2020} in particular assures the existence of a classical solution which defines a valid value assignment for the full $n$-qudit Pauli group.
\end{proof}

Next, we turn to LCS with quantum solutions in the full group of measurement operators in $\zz_d$-MBQC.

\subsection{Quantum solutions to LCS associated with monomial $\zz_p$-MBQC}

Since every LCS with a quantum solution in the $n$-qudit Pauli group for $d$ odd is classically solvable \cite{QassimWallman2020}, we extend our search to unitary operators beyond the Pauli group. While quantum solutions can ultimately be built from arbitrary unitary operators, here, we consider a restricted version of the problem: motivated by the existence of contextual MBQC within proper subgroups of (tensor products of) $U(d)$, we ask whether (deterministic, non-adaptive) monomial $\zz_d$-MBQC in Def.~\ref{def: monomial Z_p-MBQC} gives rise to quantum solutions of its associated LCS. From now on, we will restrict to the case of odd prime (qudit) dimension, and write $d=p$.\\

Linear side-processing and (Clifford-restricted) $lp$-MBQC. To this end, recall first that Ref.~\cite{FrembsRobertsCampbellBartlett2022} proves that any function $o:\zz^l_p\ra\zz_p$ for $p$ prime can be computed within deterministic, non-adaptive $lp$-MBQC on a GHZ resource state using measurement operators restricted to the following unitary subgroup.\footnote{The linear side-processing restriction in $lp$-MBQC further takes inputs $I=\zz^l_p$ and pre-processing functions to be given by unitary conjugation, $M_k(\bi)=U(f_k(\bi))M_{k0}U^\dagger(f_k(\bi))$, with $f_k:\zz^l_p\ra\zz_p$ a linear function.\label{fn: lp-MBQC}} 

\begin{definition}\label{def: K-group}
    Let $X\in\cP(p)$ be the Pauli operator defined by $X|q\rangle=|q+1\mod p\rangle$ with $\{|q\rangle\}_{q\in\zz_p}$ a basis of $\cH=\C^p$ and let $S_\xi$ be the generalised phase gate defined by $S_\xi|q\rangle = \xi(q)|q\rangle$ where $\xi:\Z_p\to U(1)$.\footnote{We remark that our notation here slightly deviates from the one in Ref.~\cite{FrembsChungOkay2022}.\label{fn: xi convention}} Let\footnote{$T(p)$ is a maximal torus in the special unitary group. The latter naturally incorporates the $d$-torsion constraint (locally).}
    \begin{align*}
        T(p)
        =\{S_\xi\mid\xi:\zz_p\ra U(1),\ \det(S_\xi)=1\}\; ,
    \end{align*}
    and define the group
    \begin{align}\label{def: K-group}
        K(p)
        =\langle S_\xi X^b\mid b\in\zz_p, S_\xi\in T(p)\rangle<SU(p)\; .
    \end{align}
\end{definition}

In Ref.~\cite{FrembsChungOkay2022}, we further define the subgroups $K_Q(p)=\langle S_\xi X^b \mid b \in \zz_p, S_\xi\in Q \rangle<SU(p)$, where $T_{(p)}\leq Q\leq T_{(p^m)}$ for $T_{(p^m)}=\{S_\xi\in T(p)\suchthat S_\xi^{p^m}=\one\}$ the subgroups of elements with $p^m$-torsion and $m\in\N$. The groups $K_Q(p)$ extend the $n$-qudit Pauli group under the diagonal Clifford hierarchy \cite{CuiGottesmanKrishna2017,deSilva2021,FrembsChungOkay2022}.

Motivated by Thm.~\ref{thm: LCS = MBQC for Paulis}, and recalling the generic construction of (contextual) $lp$-MBQC in Ref.~\cite{FrembsRobertsCampbellBartlett2022} (which, in particular, are monomial $\zz_p$-MBQC), one may expect to obtain examples of LCS over $\zz_p$ with $p$ odd prime that admit quantum but no classical solutions, e.g. from LCS in Eq.~(\ref{eq: associated TG-LCS}) associated to contextual (deterministic, non-adaptive) $lp$-MBQC. In Ref.~\cite{FrembsChungOkay2022}, we obtain a partial negative result for this approach.

\begin{theorem}[Frembs-Chung-Okay \cite{FrembsChungOkay2022}]\label{thm: FCO}
    Let $\Gamma$ be the solution group of a LCS over $\zz_p$ for $p$ odd prime. Then the LCS admits a quantum solution $\eta: \Gamma \ra K^{\otimes n}_Q(p)$ if and only if it admits a classical solution $\eta:\Gamma\ra\zz_p$.
\end{theorem}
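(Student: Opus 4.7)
The plan is to show that any quantum solution descends to a classical one via post-composition with a universal map $\phi:K_Q^{\otimes n}(p)\to\zz_p$. First I would reformulate both sides group-theoretically: a classical solution is a homomorphism $\Gamma\to\zz_p$ sending the central relator $J$ to $1$, while a quantum solution is a homomorphism $\eta:\Gamma\to K_Q^{\otimes n}(p)$ with $\eta(J)=\omega\one$. It therefore suffices to construct $\phi$ satisfying (a) $\phi(\omega\one)=1$, and (b) $\phi$ restricts to a group homomorphism on every $p$-torsion abelian subgroup of $K_Q^{\otimes n}(p)$. Given such a $\phi$, the composition $\tilde\eta:=\phi\circ\eta$ is a classical solution: the variables appearing together in any constraint row commute and have order $p$ in $\Gamma$, so their $\eta$-images lie in a common $p$-torsion abelian subgroup of $K_Q^{\otimes n}(p)$ on which $\phi$ is additive, and the constraint $\prod_k\eta(x_k)^{A_{jk}}=\omega^{b_j}\one$ then yields $\sum_k A_{jk}\tilde\eta(x_k)=b_j\bmod p$. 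The converse implication is immediate via the inclusion $\zz_p\hookrightarrow K_Q^{\otimes n}(p)$, $a\mapsto\omega^a\one$.

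For the construction of $\phi$, I would exploit the semidirect structure $K_Q(p)\cong Q\rtimes\zz_p$, where $\langle X\rangle\cong\zz_p$ acts on $Q$ by cyclic shift of diagonal entries. Writing each element as $U=S_\xi X^b$, the shift projection $\pi(S_\xi X^b):=b$ is already a homomorphism $K_Q(p)\to\zz_p$, and extends coordinate-wise to $\vec\pi:K_Q^{\otimes n}(p)\to\zz_p^n$; this handles the permutation part. The diagonal phase content must be encoded into a complementary map $\psi:K_Q^{\otimes n}(p)\to\zz_p$. A direct calculation gives $(S_\xi X^b)^p=\det(S_\xi)\,\one$ for $b\neq 0$ (the shift orbit exhausts $\zz_p$) and equals $S_\xi^p$ for $b=0$; hence $U^p=\one$ for $U=\otimes_k S_{\xi_k}X^{b_k}\in K_Q^{\otimes n}(p)$ forces $\prod_k\det(S_{\xi_k})=1$ and further $\xi_k(q)^p=1$ on those factors where $b_k=0$. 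Under these constraints one can extract a canonical $\zz_p$-valued phase from the diagonal parts---for example via $\omega^{\psi(U)}=\prod_k\xi_k(q_k^*)$ evaluated on distinguished orbit representatives $q_k^*$ determined by $\vec\pi(U)$---and assemble $\phi$ as a $\zz_p$-linear combination of $\vec\pi$ and $\psi$ calibrated so that $\phi(\omega\one)=1$.

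The main obstacle is verifying condition (b): that $\phi(UV)=\phi(U)+\phi(V)\bmod p$ for every commuting pair of $p$-torsion elements $U,V\in K_Q^{\otimes n}(p)$. Additivity of $\vec\pi$ is immediate, reducing the task to the phase cocycle arising from the reordering identity $S_\xi X^b\cdot S_\eta X^c=S_{\xi\cdot(\eta\circ t^{-b})}X^{b+c}$, where $t$ denotes the unit shift $q\mapsto q+1$. The relation $[U,V]=\one$ translates factor-wise into the functional equation $\xi_k(q)\eta_k(q-b_k)=\eta_k(q)\xi_k(q-c_k)$, and it is precisely the interplay of this equation with the $p$-torsion determinant constraint that forces the chosen phase representative to transform additively, i.e.\ makes the cocycle defect vanish modulo $p$ exactly on commuting pairs. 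This bookkeeping is the technical heart of the argument and is the content worked out in Ref.~\cite{FrembsChungOkay2022}, generalising the argument sketched for the local Clifford case in Lm.~\ref{lm: d-torsion semi-Cliffords} and Thm.~\ref{thm: Clifford noncontextual}.
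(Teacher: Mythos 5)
Your overall strategy is the right one and coincides with the framework this paper itself articulates: a quantum solution $\eta:\Gamma\ra G$ descends to a classical one precisely when there is a map $\phi:G\ra\zz_p$ with $\phi(\omega\one)=1$ that restricts to a homomorphism on every $p$-torsion abelian subgroup, and the converse via $a\mapsto\omega^a\one$ is indeed immediate. Note, however, that the statement is imported here from Ref.~\cite{FrembsChungOkay2022} without proof; the closest this paper comes to reproving it is the machinery of App.~\ref{app: comparison} and App.~\ref{app: old lemmata}, which builds $\phi$ in two stages: first a map $\phi^K_H$ into the Heisenberg--Weyl group $H$ obtained by \emph{linearising} the phase function --- $\xi(q)=\omega^{\sum_j a_jq^j}\mapsto\omega^{a_0+(\sum_{j\geq 1}a_j)q}$, i.e.\ retaining $\xi(0)$ and $\xi(1)/\xi(0)$ --- whose additivity on commuting $p$-torsion pairs is Thm.~\ref{thm:K to H}, followed by composition with the Qassim--Wallman value assignment on $H^{\otimes n}$; general $Q\leq T_{(p^m)}$ is then handled by the reduction homomorphism of Def.~\ref{def: q homomorphism}. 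Your proposal instead aims directly at $\zz_p$, which is a legitimate repackaging of the same idea.

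The gap is exactly where you locate it, but you do not close it. The concrete recipe you offer for the phase part --- $\omega^{\psi(U)}=\prod_k\xi_k(q_k^*)$ for a single distinguished representative $q_k^*$ determined by the shift exponents --- is not in general additive on commuting $p$-torsion pairs: already for $U=S_\xi X$ and $V=U^2$ in a single tensor factor, additivity would require $\xi(q_1^*)\,\xi(q_2^*)\,\xi(q_2^*+1)=\xi(q_3^*)\,\xi(q_3^*+1)\,\xi(q_3^*+2)$ for all admissible $\xi$, which fails for generic $\xi\in T_{(p^m)}$ with $m\geq 2$ no matter how the representatives are chosen. The working construction extracts the full affine part of the exponent polynomial (two evaluation points, not one) and uses the $p$-torsion hypothesis to confine all commutation-up-to-phase to conjugates of $H$ before reading off a $\zz_p$ value. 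Since you explicitly defer this verification to Ref.~\cite{FrembsChungOkay2022}, what you have is an accurate reduction of the theorem to its technical core plus a citation of that core, not a proof of it.
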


Thm.~\ref{thm: FCO} applies to contextual deterministic, non-adaptive $lp$-MBQC in Ref.~\cite{FrembsRobertsCampbellBartlett2022} as a special case. More precisely, state-dependent contextuality in $lp$-MBQC with operators in $K^{\otimes n}_Q(p)$ - which compute any function $o: \zz^l_p\ra\zz_p$ (with input group $I=\zz^l_p$) by Ref.~\cite{FrembsRobertsCampbellBartlett2022} - do not lift to state-independent contextuality (of their associated LCS).\footnote{However, there might be other (deterministic, non-adaptive) $lp$-MBQC, with measurement operators not constrained to $K^{\otimes n}_Q(p)$, that also compute arbitrary functions $o: \zz^l_p\ra\zz_p$. This further motives the generalisation to $\zz_p$-MBQC in Def.~\ref{def: Z_p-MBQC}.} Note, however, that neither the restriction (i) to phase gates in the Clifford hiearchy, nor (ii) to linear side-processing in MBQC, is necessary for the correspondence with LCS via Def.~\ref{def: associated LCS}.\\

\textbf{General side-processing in monomial $\zz_p$-MBQC.} Does Thm.~\ref{thm: FCO} generalise beyond the restriction to linear side-processing in $lp$-MBQC (with measurement operators given by unitary conjugation by elements in the Clifford hierarchy $K^{\otimes n}_Q(p)$) to monomial $\zz_p$-MBQC in Def.~\ref{def: monomial Z_p-MBQC}? More generally, do quantum solutions to (classically unsatisfiable) LCS exist within the subgroup $N_{SU(p)}(p)<SU(p)$? In other words, does Thm.~\ref{thm: FCO} still hold if $K_Q(p)$ is extended to $N_{SU(p)}(p)$? Our main results below will answer these questions affirmatively.

To get a sense for what this generalisation entails, note that $K(p)$ fits within the split exact sequence
\begin{align*}
    1 \ra T(p) \ra K(p) \ra \langle X\rangle\cong\zz_p \ra 1\; .
\end{align*}
It thus extends the Heisenberg-Weyl group, denoted by $H(p)$, by generalising the Pauli-$Z$ operator to more general phase gates $S_\xi$ (see Def.~\ref{def: K-group}), as is evident by writing $H(p)$ in terms of the split exact sequence
\begin{align}\label{ses: H}
    1\to \langle S_\xi\suchthat \mf{\xi(q)=\omega^{a+bq}},\  a,b\in\Z_p\rangle\to H(p)\to \langle X\rangle\cong\zz_p\to 1\; .
\end{align}
On the other hand, $K(p)$ is itself a subgroup of the normaliser $N_{SU(p)}(p)<SU(p)$ in the split exact sequence
\begin{equation}{\label{ses: N}}
    1 \ra T(p) \ra N_{SU(p)}(p) \ra S_p \ra 1\; ,
\end{equation}
which further generalises the shift permutation (Pauli-$X$) in Eq.~(\ref{def: K-group}) to general permutations $S_p$ on $p$ elements. As cited before, Ref.~\cite{QassimWallman2020} shows that every quantum solution to a LCS over $p$ odd restricted to the Heisenberg-Weyl group is classical. Our main technical contribution pushes this result to the normaliser subgroup.

\begin{theorem}\label{thm: FCO-2}
    Let $\Gamma$ be the solution group of a LCS over $\zz_p$ for $p$ odd prime. Then the LCS admits a quantum solution $\eta: \Gamma \ra N^{\otimes n}_{SU(p)}(p)$ if and only if it admits a classical solution $\eta:\Gamma\ra\zz_p$.
\end{theorem}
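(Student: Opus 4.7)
The \emph{if} direction is immediate: any classical solution $\tilde x:\Gamma\to\zz_p$ yields a quantum solution via the scalar embedding $g\mapsto\omega^{\tilde x(g)}\one$, whose image lies in the center $\langle\omega\one\rangle\subset N^{\otimes n}_{SU(p)}(p)$. For the converse, the plan is to follow the blueprint of Thm.~\ref{thm: FCO}: construct a (set-theoretic) map $\phi:N^{\otimes n}_{SU(p)}(p)\to\zz_p$ such that (a) $\phi(\omega^c\one)=c$ for all $c\in\zz_p$, and (b) $\phi|_A$ is a group homomorphism for every abelian $p$-torsion subgroup $A<N^{\otimes n}_{SU(p)}(p)$. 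With such a $\phi$ in hand, the composite $\tilde x:=\phi\circ\eta:\Gamma\to\zz_p$ is a classical solution: in each context of the LCS, the generators of $\Gamma$ are sent by $\eta$ into an abelian $p$-torsion subgroup on which $\phi$ is a homomorphism, so the commutativity and product constraints pass through, while (a) reproduces the constants $b_j\in\zz_p$.

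The main work is the construction of $\phi$, which proceeds by reducing $p$-torsion in $N_{SU(p)}(p)$ to the $K(p)$-case already treated in Thm.~\ref{thm: FCO}. Using the split exact sequence \eqref{ses: N}, any element $U=\Pi_\sigma S_\xi\in N_{SU(p)}(p)$ with $U^p=\one$ forces $\sigma^p=e$ in $S_p$, so, since $p$ is prime, $\sigma$ is either the identity or a single $p$-cycle. In the first case $U=S_\xi\in T(p)\subset K(p)$ directly. In the second, conjugation by a permutation brings $\sigma$ to the standard shift $t$, and a short computation gives $(S_\xi X^b)^p=\bigl(\prod_{q=0}^{p-1}\xi(q)\bigr)\one$; since $\det X^b=1$ for $p$ odd, the $p$-torsion condition $U^p=\one$ is then equivalent to the $SU(p)$ determinant constraint $\prod_q\xi(q)=1$, and $U$ sits inside a permutation-conjugate of $K(p)$.

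Promoting this pointwise fact to subgroups uses that the centraliser of a $p$-cycle in $S_p$ is the cyclic group it generates. Hence any abelian $p$-torsion subgroup $A<N_{SU(p)}(p)$ containing a non-diagonal element has image in $S_p$ contained in a single $\langle\sigma\rangle\cong\zz_p$, and a single permutation conjugation brings all of $A$ inside $K(p)$ simultaneously. Performing this reduction factor-by-factor on the tensor product embeds any abelian $p$-torsion subgroup of $N^{\otimes n}_{SU(p)}(p)$ inside $K^{\otimes n}(p)$ via a local basis change, and I would then define $\phi|_A$ as the pullback of the homomorphism supplied by Thm.~\ref{thm: FCO} (applied with $Q=T_{(p)}$, which already contains every diagonal $p$-torsion element).

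The principal obstacle is well-definedness and consistency of $\phi$ across different subgroups. Concretely, one must check that (i) the pulled-back value is independent of the choice of conjugating permutation, which should follow from $S_p$-equivariance of the FCO map under the natural action on $T(p)$; and (ii) the definitions on overlapping abelian $p$-torsion subgroups agree, so that $\phi\circ\eta$ descends to a single function on $\Gamma$. A secondary subtlety is that global $p$-torsion of a tensor product $U=U_1\otimes\cdots\otimes U_n$ only forces $U_i^p=c_i\one$ with $\prod_ic_i=1$ rather than $U_i^p=\one$ at each factor; the resulting scalar ambiguities lie in the common center $\langle\omega\one\rangle$ and should cancel against one another in the LCS constraints via condition (a). Once these compatibility points are settled, $\phi\circ\eta$ is the sought classical solution.
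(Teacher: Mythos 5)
Your overall architecture matches the paper's proof in App.~\ref{app: proof of main theorem}: use $p$-torsion to force the permutation part of each local factor to be trivial or a single $p$-cycle, use the fact that commuting (up to phase) elements have permutation parts in a common $\zz_p$-subgroup of $S_p$ (the paper argues via the absence of order-$p^2$ subgroups; your centraliser argument is equivalent), conjugate each such subgroup into $K(p)$ by a single canonical permutation (Lm.~\ref{lem: sp generator}, Def.~\ref{defn: tildeK_sp}), and pull back a value assignment that is a homomorphism on $p$-torsion abelian subgroups. The paper pulls back the Qassim--Wallman map on $H^{\otimes n}$ rather than the map of Thm.~\ref{thm: FCO}; that difference is cosmetic. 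Your "if" direction and the reduction of the LCS statement to the existence of such a $\phi$ are also as in the paper.

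The genuine gap is your consistency step (i). The map constructed in Ref.~\cite{FrembsChungOkay2022} is \emph{not} equivariant under the relevant $S_p$-action on $T_{(p)}$: its diagonal component extracts the coefficient $a_1$ of the linear monomial in the exponent polynomial $\xi(q)=\omega^{\sum_j a_j q^j}$, and permuting the entries of a diagonal matrix does not preserve $a_1$ (the paper exhibits an explicit $p=5$ counterexample in the remark following Lm.~\ref{lem: action invariant} in App.~\ref{app: comparison}). So the pulled-back value of an element of $T_{(p)}$ lying in two different conjugate copies of $H$ would genuinely depend on the choice of conjugator, and your plan fails at exactly the point you flag. The repair is either to replace that projection by one depending only on $\xi(0)$ and $\xi(1)$ (the paper's $\phi^{T_{(p)}}_H$ in Def.~\ref{def: phi^K_H}), which \emph{is} invariant because the canonical conjugators of Lm.~\ref{lem: sp generator} fix the points $0$ and $1$, or, as in the paper's main proof, to sidestep equivariance altogether by assigning values only to a chosen generator of each cyclic $p$-torsion subgroup, extending multiplicatively, and checking agreement on the overlaps, which Lm.~\ref{lem: comm_tildeK} confines to $T_{(p)}$. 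Your secondary point, that global $p$-torsion only forces $u_i^p=c_i\one$ locally, is a real issue (with $c_i$ automatically a $p$-th root of unity inside $SU(p)$) and must be absorbed into the centre as you suggest, in tandem with the up-to-phase commutation analysis of Lm.~\ref{lm: qudit commutation relations reduce to Paulis}.
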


\begin{proof}[proof (sketch)]
    In a nutshell, this follows since commutation relations between elements in $N^{\otimes n}_{SU(p)}(p)$ are highly restricted: namely, two operators in $N^{\otimes n}_{SU(p)}(p)$ commute if and only if they lie in a subgroup conjugate to $H^{\otimes n}(p)$, and two such subgroups overlap only in $T^{\otimes n}_{(p)}$ (see Fig.~\ref{fig: H intersections}). For details, see App.~\ref{app: proof of main theorem}.
\end{proof}

Thm.~\ref{thm: FCO-2} may be of independent (mathematical) interest. As noted previously in Ref.~\cite{FrembsChungOkay2022}, a LCS with quantum solution in $G\leq\UH$ is classical if and only if there exists a homomorphism $\phi:G\ra\zz_p$ in $p$-torsion abelian subgroups. In fact, in the proof of Thm.~\ref{thm: FCO-2} we construct such a map for $G=N^{\otimes n}_{SU(p)}(p)$. The study of quantum solutions to linear constraint systems therefore corresponds, mathematically, with the study of homomorphisms in $p$-torsion abelian subgroups. Notably, both restrictions enlarge the respective sets of maps: clearly, a homomorphism of abelian subgroups generally does not extend to a full (group) homomorphism (cf. \cite{Semrl2008}); moreover, the restriction to $p$-torsion subgroups allows for further leeway, \mf{e.g. the map $\phi^N_H$ defined in the proof of Thm.~\ref{thm: FCO-2} (see App.~\ref{app: proof of main theorem}) cannot be lifted to a homormorphism in abelian subgroups (cf. App.~\ref{app: comparison}).}

As a special case, Thm.~\ref{thm: FCO-2} applies to the group $L(p)<N_{SU(p)}(p)$, defined via the split exact sequence\footnote{Recall from above that we write $G_{(p)}$ for the $p$-torsion elements in $G$.}
\begin{equation}\label{ses: L}
    1 \ra T_{(p)} \ra L(p) \ra S_p \ra 1\; .
\end{equation}
This group is a natural generalisation of the Heisenberg-Weyl group $H(p)$ in Eq.~(\ref{ses: H}) for $p$ odd, as well as of the qubit Pauli group for which $\langle X\rangle\cong S_2$, and further lies at the heart of the generalised stabiliser formalism in Ref.~\cite{VanDenNest2011,vanDenNest2013,BermejoVegaVanDenNest2014}. For these reasons, we analyse it in more detail in App.~\ref{app: comparison}.

Here, we focus on the immediate implication of Thm.~\ref{thm: FCO-2} for the question posed at the beginning.

\begin{theorem}\label{thm: no qsol from MBQC}
    A monomial $\zz_p$-MBQC $M$ with output function $o:I\ra\zz_p$ for $p$ odd prime constitutes a quantum solution to its associated LCS $A(M)x=b(o)\mod p$ in Def.~\ref{def: associated LCS} only if it is classical.
\end{theorem}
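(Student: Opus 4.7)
The plan is to combine Lemma~\ref{lm: MBQC + comm = LCS} with Theorem~\ref{thm: FCO-2}. Assume $M$ is a monomial $\zz_p$-MBQC (Def.~\ref{def: monomial Z_p-MBQC}) whose measurement operators constitute a quantum solution to its associated LCS $A(M)x = b(o)\mod p$. By Def.~\ref{def: Z_p-MBQC} each local operator $M_k(\bi)$ is $p$-torsion and by Def.~\ref{def: monomial Z_p-MBQC} it is monomial, so each global operator $M(\bi)=\otimes_k M^{c_k(\bi)}_k(\bi)\in N^{\otimes n}_{U(p)}(p)$ is also $p$-torsion. Lemma~\ref{lm: MBQC + comm = LCS} translates the hypothesis that $M$ constitutes a quantum solution into the statement that $\{M(\bi)\}_{\bi\in I}$ commutes pairwise; together with the tensor-product decomposition (the first $|I|$ rows of $A(M)$) and the parity constraint (its final row) this identifies the family of measurement operators as a quantum solution $\eta:\Gamma\to N^{\otimes n}_{U(p)}(p)$ to the associated LCS.

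To invoke Theorem~\ref{thm: FCO-2} I would then pass from $N^{\otimes n}_{U(p)}(p)$ to $N^{\otimes n}_{SU(p)}(p)$. For any $p$-torsion $U\in N_{U(p)}(p)$ one has $\det(U)^p=1$, so writing $U=\zeta_U V$ with $V\in N_{SU(p)}(p)$ leaves the commutation structure unchanged (scalars are central) and alters only the scalar phases in the constraint equations $\prod_k x_k^{A_{jk}}=\omega^{b_j}\one$. These extra scalars combine, via the multiplicativity of $\det$ and the tensor-product decomposition $M(\bi)=\otimes_k M^{c_k(\bi)}_k(\bi)$, into $p$-th roots of unity row by row, producing an equivalent LCS $A(M)x=b'(o)\mod p$ with $b'(o)\in\zz_p^{|I|+1}$ obtained from $b(o)$ by an affine shift over $\zz_p$. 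Classical solvability is invariant under this shift, so Theorem~\ref{thm: FCO-2} applied to the rescaled solution $\tilde\eta:\Gamma\to N^{\otimes n}_{SU(p)}(p)$ supplies a classical solution to the shifted LCS, and hence to the original.

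The main obstacle is the reduction from $N^{\otimes n}_{U(p)}(p)$ to $N^{\otimes n}_{SU(p)}(p)$: one must choose the scalar factors $\zeta_U$ so that the rescaled generators $V$ continue to satisfy the $p$-torsion condition required by Theorem~\ref{thm: FCO-2}, and verify that the induced shift $b\mapsto b'$ is really $\zz_p$-valued across every row of $A(M)$. This rests on the structural fact that, for $p$-torsion monomial operators, determinants lie in $\langle\omega\rangle$ and multiply consistently through tensor products; in combination with the pairwise commutativity from Lemma~\ref{lm: MBQC + comm = LCS} and Lemma~\ref{lm: commutativity up to phase is state-independent}, this forces the scalar corrections to remain within $\zz_p$. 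Once this bookkeeping is handled---and noting that the commutation-structure argument underlying Theorem~\ref{thm: FCO-2} is manifestly insensitive to central scalars---the statement follows as an immediate corollary of Lemma~\ref{lm: MBQC + comm = LCS} and Theorem~\ref{thm: FCO-2}.
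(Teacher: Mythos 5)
Your overall strategy---unpack the hypothesis via Lm.~\ref{lm: MBQC + comm = LCS} and then invoke Thm.~\ref{thm: FCO-2}---is the same as the paper's, but you are missing the one substantive ingredient the paper adds, and the step you identify as ``the main obstacle'' is handled incorrectly. The missing ingredient: Thm.~\ref{thm: FCO-2} is proved by building the value assignment \emph{locally}, via Lm.~\ref{lem: comm_tildeK}, which only applies to pairs of local operators whose commutator lies in $\langle\omega\one\rangle$. Commutativity of the global tensor products $M(\bi)$ a priori only tells you that the local factors commute up to \emph{arbitrary} $U(1)$ phases multiplying to $1$. The paper closes this gap with Lm.~\ref{lm: qudit commutation relations reduce to Paulis} (App.~\ref{app: proof 2}): because each local factor is $p$-torsion, the local commutator phases are forced to be $p$-th roots of unity summing to zero mod $p$. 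You cite Lm.~\ref{lm: commutativity up to phase is state-independent} instead, but that lemma does something different (it upgrades state-dependent to state-independent commutation for operators already known to commute up to a scalar); it does not deliver the local $p$-th-root-of-unity restriction that the reduction to Thm.~\ref{thm: FCO-2} actually requires.

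Your treatment of the $N_{U(p)}\to N_{SU(p)}$ passage does not go through as written. If $U=\Pi_\sigma S_\xi$ is monomial with $U^p=\one$ and $\sigma\neq\id$, then $\sigma$ is a $p$-cycle (as $p$ is prime) and $U^p=\det(S_\xi)\one$ by Lm.~\ref{lem: xi,sigma to power n}, so $\det(S_\xi)=1$ automatically and \emph{no rescaling is needed}. The only problematic case is diagonal $U=S_\xi$ with all entries $p$-th roots of unity but $\det(S_\xi)=\omega^j\neq 1$; there your rescaling cannot work, because $V=\zeta^{-1}U$ satisfies $V^p=\zeta^{-p}\one$, so demanding $V^p=\one$ forces $\zeta^p=1$, which leaves $\det(V)=\det(U)\neq 1$ unchanged, while demanding $\det(V)=1$ forces $\zeta^p=\omega^j$ and destroys $p$-torsion (and makes your shifted vector $b'$ valued in $\tfrac{1}{p^2}\zz$ rather than $\zz_p$). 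So the ``bookkeeping'' you defer is not bookkeeping: the scalar factors $\zeta_U$ you need cannot simultaneously preserve $p$-torsion and land in $SU(p)$. The correct route is the paper's: observe that (up to the diagonal subtlety, which the paper's construction of $v_N$ on $T_{(p)}$-type subgroups absorbs) the operators already sit in $N^{\otimes n}_{SU(p)}(p)$, establish the local $p$-th-root-of-unity commutation via Lm.~\ref{lm: qudit commutation relations reduce to Paulis}, and then apply Thm.~\ref{thm: FCO-2}.
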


\begin{proof}
    Local measurement operators in monomial $\zz_p$-MBQC are elements in $N_{SU(p)}(p)<N_{U(p)}(p)$. A priori, operators may thus commute up to an arbitrary phase in local tensor factors. However, as we prove in Lm.~\ref{lm: qudit commutation relations reduce to Paulis} in App.~\ref{app: proof 2}, since for (global) measurement operators $p$-torsion holds locally, their (local) phase commutation relations are restricted to $p$-th roots of unity.  Consequently, the result follows from Thm.~\ref{thm: FCO-2}.
\end{proof}

Thm.~\ref{thm: no qsol from MBQC} proves the assertion in (\ref{eq: mismatch state-(in)dep contextuality}): unlike for qubits, tensor products of local measurement operators in $\zz_p$-MBQC (see Def.~\ref{def: Z_p-MBQC}) do not suffice to construct state-independent proofs of contextuality. It generalises the limited applicability of Thm.~\ref{thm: FCO} to $lp$-MBQC of the form in Ref.~\cite{FrembsRobertsCampbellBartlett2022}, by lifting the restrictions (i) on phase gates in the Clifford hierarchy and (ii) to linear side-processing. More generally, Thm.~\ref{thm: FCO-2} rules out quantum solutions in $N^{\otimes n}_{SU(p)}(p)$ to any classically unsatisfiable LCS, not only those of the form in Def.~\ref{def: associated LCS}.

\section{Conclusion}\label{sec: conclusion}

Taking the close correspondence between quantum solutions to (binary) linear constraint systems LCS (over $\zz_2$) and measurement-based quantum computation (MBQC) with qubits in the form of the Mermin-Peres square  \cite{Mermin1990,Peres1991} and Mermin's star \cite{Mermin1993} nonlocal games as our motivation, in this paper we analysed the question whether, more generally, state-dependently contextual MBQC with qudits of dimension $d>2$ gives rise to quantum solutions of a LCS naturally associated to it (see Def.~\ref{def: associated LCS}). We show that this is indeed the case under the restriction to the Pauli group (see Thm.~\ref{thm: LCS = MBQC for Paulis}), which for odd dimension, however, is known not to contain quantum solutions to LCS (over $\zz_d$) that are not also classically satisfiable \cite{QassimWallman2020}. We then lift the restriction to the Pauli group by considering more general monomial unitaries as measurement operators in MBQC (see Def.~\ref{def: monomial Z_p-MBQC}). This setting is both sufficiently general to comprise many instances of state-dependently contextual MBQC (Thm.~\ref{thm: AB qudit computation}, see also Ref.~\cite{FrembsRobertsCampbellBartlett2022}), and naturally embeds within the class of monomial matrices previously studied in Ref.~\cite{VanDenNest2011,vanDenNest2013}. Yet, even under this extension of the qudit Pauli group (and the resulting Clifford-type) hierarchy to normaliser subgroups of the special unitary group, our main result, Thm.~\ref{thm: no qsol from MBQC}, states that it, too, admits no quantum solutions to LCS over $\zz_d$ for $d$ odd prime that are not also classically solvable.

In summary, our work both generalises previous results \cite{QassimWallman2020,FrembsChungOkay2022}, and severely sharpens the known differences between qubit and qudit (state-independent) contextuality. While this leaves open the existence problem of LCS over $\zz_d$ for $d$ odd which admit quantum but no classical solutions, it highlights that to search for such examples one needs to study inherently state-independent proofs of quantum contextuality \cite{KochenSpecker1967}. A natural starting point in this direction is the algebraic (state-independent) reformulation of Kochen-Specker contextuality in Ref.~\cite{Frembs2024,Frembs2025}. \mf{We will discuss this together with other approaches in forthcoming work.}\\

\textbf{Acknowledgements.}
The second author acknowledges support from the Air Force Office of Scientific Research (AFOSR) under award number FA9550-24-1-0257 and the Digital Horizon Europe project FoQaCiA, GA no. 101070558.

\bibliographystyle{plain}
\bibliography{bibliography}

\appendix

\section{Proof of Thm.~\ref{thm: FCO-2}}\label{app: proof of main theorem}

The generalisation of Thm.~\ref{thm: FCO} in Thm.~\ref{thm: FCO-2} involves two steps: (i) allowing for arbitrary subgroups $Q\leq T(p)$ (generalised phase gates, not restricted to the Clifford hierarchy), and (ii) extending the abelian group generated by shifts (Pauli-X) to the full permutation group $S_p$. Both steps essentially follow from Lm.~\ref{lem: comm_tildeK} below, which generalises Lm.~4 in Ref.~\cite{FrembsChungOkay2022}. We will further discuss the relation with our previous results in App.~\ref{app: comparison}.\\

\textbf{Notation.} Let $T=T(p)$, $H=H(p)$, $K=K(p)$ and $N=N_{SU(p)}(p)$. We denote elements in $N$ by $(\xi,\sigma)$ with $\xi\in T$ and $\sigma\in S_p$, and often use their explicit matrix representation, where $\xi\in T$ corresponds to the diagonal operator $S_\xi=\mathrm{diag}\mf{(\xi(0),\cdots,\xi(p-1))}$ and $\sigma\in S_p$ to the permutation matrix acting by conjugation, that is, by
$$\sigma\cdot S_\xi=\begin{pmatrix}\xi\circ \sigma(0) & &0\\ & \ddots & \\ 0& & \xi\circ\sigma(p-1)\end{pmatrix}\; .$$
More generally, for $M=(\xi,\sigma)\in N$ and $\tau\in S_p$, we will write
$\tau\cdot M=(0,\tau)(\xi,\sigma)(0,\tau^{-1})=(\tau\cdot\xi,\tau\sigma\tau^{-1})$, hence, the group operation is given by $(\xi,\sigma)(\chi,\tau)=(\xi+\sigma\cdot\chi,\sigma\tau)$. We denote the identity element in $S_p$ by $\id$.

\begin{obv}{\label{obv_comm}}
    Let $M,M'\in N$ and $\tau\in S_p$, then $[\tau\cdot M,\tau\cdot M']=\tau\cdot[M,M']$.
\end{obv}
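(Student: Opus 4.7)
The plan is to recognise that the action $\tau\cdot(-)$ on $N$ is nothing but inner conjugation by the group element $(0,\tau)\in N$, and that inner conjugation (in fact any group homomorphism) preserves commutators. Concretely, writing $c_\tau:N\to N$ for the map $X\mapsto (0,\tau)X(0,\tau)^{-1}$, the equality $\tau\cdot M=c_\tau(M)$ holds by definition of the action introduced just above the observation, and $c_\tau$ is a group automorphism of $N$ because $(0,\tau)\in N$.

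The key step is then the general identity: if $\phi:G\to G$ is any group homomorphism and $[a,b]:=aba^{-1}b^{-1}$, then
\begin{equation*}
    \phi([a,b])=\phi(a)\phi(b)\phi(a)^{-1}\phi(b)^{-1}=[\phi(a),\phi(b)]\; ,
\end{equation*}
which follows from multiplicativity of $\phi$ and $\phi(X^{-1})=\phi(X)^{-1}$. Applying this with $\phi=c_\tau$, $a=M$, $b=M'$ yields
\begin{equation*}
    [\tau\cdot M,\tau\cdot M']=c_\tau(M)c_\tau(M')c_\tau(M)^{-1}c_\tau(M')^{-1}=c_\tau([M,M'])=\tau\cdot[M,M']\; ,
\end{equation*}
which is exactly the claim.

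There is no real obstacle here: the observation is purely formal and would be a one-line remark in any group-theoretic text. The only care needed is notational — namely verifying that the explicit semidirect product formula $(\xi,\sigma)(\chi,\rho)=(\xi+\sigma\cdot\chi,\sigma\rho)$ given just above the observation implies $(0,\tau)(\xi,\sigma)(0,\tau^{-1})=(\tau\cdot\xi,\tau\sigma\tau^{-1})$, so that the action $\tau\cdot M$ is literally inner conjugation rather than just a formally similar twist. Once this identification is made, the equivariance of the commutator is automatic and needs no further computation.
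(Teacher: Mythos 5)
Your proof is correct and is exactly the (unstated) justification the paper relies on: the Observation is given without proof precisely because the action $\tau\cdot(-)$ is defined as conjugation by $(0,\tau)$, and conjugation preserves commutators. The only nitpick is your justification that $c_\tau$ is an automorphism ``because $(0,\tau)\in N$'': for an odd permutation the matrix $\Pi_\tau$ has determinant $-1$ and so need not itself lie in $N_{SU(p)}(p)$, but conjugation by $\Pi_\tau$ still normalises $N$ (it preserves monomiality and the determinant), so the argument goes through unchanged.
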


\begin{lemma}{\label{lem: sp generator}}
    Let $G\leq S_p$, where $G\cong \Z_p$. Then we can pick $g=(0\,1\,a_2\,\cdots\,a_{p-1})$, where $a_i\neq a_j$ and $a_i\in\{2,...,p-1\}$ to be the generator of $G$. In addition, let $\tau=(a_2\,\,2)(a_3\,\,3)\cdots(a_{p-1}\,\,p-1)$, then we have $\tau g \tau^{-1}=(0\,1\,\cdots\,p-1)$.
\end{lemma}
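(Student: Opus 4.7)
The plan is to establish the lemma in three steps: normalise the generator of $G$, write it in the stated canonical cycle form, and then verify the conjugation via the usual formula for conjugating cycles. To start, I note that $G\leq S_p$ with $G\cong\zz_p$ means that $G$ is generated by an element of order $p$. Since the order of an element in $S_p$ equals the least common multiple of its cycle lengths and we are permuting only $p$ symbols, an element of prime order $p$ must be a single $p$-cycle. Hence every generator of $G$ is a $p$-cycle.

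Next, I would normalise the choice of generator so that it sends $0$ to $1$. If $g_0$ is any generator, then $g_0$ acts transitively on $\{0,1,\ldots,p-1\}$, so some $k\in\{1,\ldots,p-1\}$ satisfies $g_0^k(0)=1$; because $p$ is prime, $\gcd(k,p)=1$, so $g:=g_0^k$ is again a generator of $G$. Writing $g$ in cycle notation starting at $0$ then yields $g=(0\,1\,a_2\,\cdots\,a_{p-1})$, and $(a_2,\ldots,a_{p-1})$ is automatically a permutation of $(2,\ldots,p-1)$ because $g$ is a $p$-cycle and therefore visits every element of $\{0,\ldots,p-1\}$ exactly once. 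This is precisely the canonical form asserted in the first sentence of the lemma, with $a_i\neq a_j$ and $a_i\in\{2,\ldots,p-1\}$.

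Finally, I would invoke the standard cycle-conjugation identity $\tau(b_0\,b_1\,\cdots\,b_{p-1})\tau^{-1}=(\tau(b_0)\,\tau(b_1)\,\cdots\,\tau(b_{p-1}))$. Any permutation $\tau$ that fixes $0$ and $1$ and satisfies $\tau(a_i)=i$ for $i\in\{2,\ldots,p-1\}$ therefore conjugates $g$ to $(0\,1\,2\,\cdots\,p-1)$; such a $\tau$ exists and is unique, since these conditions prescribe $\tau$ on all of $\{0,\ldots,p-1\}$ and the map $a_i\mapsto i$ is a bijection of $\{2,\ldots,p-1\}$. The main obstacle I anticipate is verifying that the explicit product $(a_2\,2)(a_3\,3)\cdots(a_{p-1}\,p-1)$ written in the statement actually realises this $\tau$: the listed transpositions generally have overlapping supports, so their product is composition-order dependent, and one needs a careful book-keeping (most naturally by induction on $i$, tracking which element sits in position $i$ after the first $i-1$ factors have been applied) to match the concrete expression with the abstractly defined conjugator. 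Once that identification is made, the cycle-conjugation identity immediately delivers $\tau g\tau^{-1}=(0\,1\,\cdots\,p-1)$, completing the proof.
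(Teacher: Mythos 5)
Your first two steps are correct and essentially match the paper's route (the paper gets "every generator of $G$ is a $p$-cycle" from Sylow's theorem and conjugacy with $\langle X\rangle$, whereas you observe it directly from the order/cycle-type argument, which is a bit more elementary; the normalisation $g:=g_0^k$ with $g_0^k(0)=1$ is the same in both). Your reduction of the conjugation claim to "any $\tau$ fixing $0,1$ with $\tau(a_i)=i$ works" is also correct, and such a $\tau$ indeed exists and is unique.

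However, the step you defer — "verifying that the explicit product $(a_2\,2)(a_3\,3)\cdots(a_{p-1}\,p-1)$ actually realises this $\tau$" — is a genuine gap, and no amount of bookkeeping will close it, because the identification is false in general. Take $p=5$ and $g=(0\,1\,3\,4\,2)$, so $a_2=3$, $a_3=4$, $a_4=2$. The required conjugator is the inverse of $i\mapsto a_i$, namely the $3$-cycle $(2\,4\,3)$, but the displayed product $(3\,2)(4\,3)(2\,4)$ evaluates (in either composition convention) to the transposition $(3\,4)$, and indeed $(3\,4)\,g\,(3\,4)=(0\,1\,4\,3\,2)\neq(0\,1\,2\,3\,4)$. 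The transpositions have overlapping supports precisely because $i\mapsto a_i$ need not be an involution, and the product then fails to send each $a_i$ to $i$. The paper's own proof is equally terse at this point and suffers from the same defect; the robust repair — which is all that Def.~\ref{defn: tildeK_sp} and the rest of the argument actually use — is to \emph{define} $\tau$ as the unique permutation fixing $0$ and $1$ with $\tau(a_i)=i$ for $i\in\{2,\dots,p-1\}$ (i.e.\ the inverse of $i\mapsto a_i$ extended by the identity), rather than as the displayed transposition product. With that definition your appeal to the cycle-conjugation identity goes through verbatim.
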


\begin{proof}
    Since $G$ is a Sylow $p$ subgroup of $S_p$, by \cite[Sylow's Theorem]{dummit2004abstract}, there exists $\sigma\in S_p$ such that $G=\sigma \langle X\rangle\sigma^{-1}$. By \cite[Proposition 10]{dummit2004abstract}, 
    we have $G=\langle (\sigma(0)\,\,\cdots\,\,\sigma(p-1))\rangle$. Observe that there exists $k\in\{1,...,p-1\}$ such that $(\sigma(0)\cdots\sigma(p-1))^k=(0\,\,1\,\,a_2\cdots \,\,a_{p-1})$, where $a_i\in\{2,...,p-1\}$. Thus, we have $G=\langle (0\,1\,a_2\,\cdots\,a_{p-1})\rangle$. Let $\tau=(a_2\,\,2)(a_3\,\,3)\cdots(a_{p-1}\,\,p-1)$, using \cite[Proposition 10]{dummit2004abstract} again, 
    we find $\tau (0\,1\,a_2\cdots a_{p-1})\tau^{-1}=(0\,1\,\cdots\,p-1)$.
\end{proof}

Note that there are $(p-1)!$ distinct elements of order $p$ (`$p$-cycles') in $S_p$. Since every $\zz_p$-subgroup (isomorphic to $\Z_p$) in $S_p$ contains $p-1$ elements, there are $\frac{(p-1)!}{p-1}=(p-2)!$ distinct $\Z_p$-subgroups in $S_p$. With Lm.~\ref{lem: sp generator}, we can associate to every one of these conjugate subgroups a unique conjugating permutation.

\begin{definition}{\label{defn: tildeK_sp}}
    Let $G_1,...,G_{(p-2)!}$ denote the distinct $\zz_p$-subgroups of $S_p$, that is, $G_i\cong \Z_p$. With Lm.~\ref{lem: sp generator}, fix generators $g_i$ of $G_i$ by $g_1=(0\,1\,...\,p-1)$ and $g_i=(0\,1\,a^{(i)}_2\,...\,a^{(i)}_{p-1})$ where $a^{(i)}_j\neq a^{(i)}_k$ and $a^{(i)}_k\in\{2,...,p-1\}$ for all $i\in \{2,...,(p-2)!\}$ and $j,k\in\{2,...,p-1\}$. Moreover, let $\tau_i=(a^{(i)}_3\,\,3)(a^{(i)}_4\,\,4)\cdots(a^{(i)}_p\,\,p)$ such that $\tau_i g_i \tau^{-1}_i=g_1$.
    
    Let $proj: N\to S_p$ be the projection map in Eq.~(\ref{ses: N}), and identify $(0\,\,1\,\,2\cdots\,\,p-1)$ with $proj(X)$, where $X$ is the Pauli $X$ operator. Then define the map $\phi^N_{S_p}:N\to S_p$ by
    $$
    \phi^N_{S_p}(M) = \begin{cases}
      \tau_i & \text{if $proj(M)\in G_i$ and $proj(M)\neq\id$} \\
      \id &  \text{otherwise}
    \end{cases}\; .
    $$
\end{definition}

Using the map $\phi^N_{S_p}$, we can reduce commutation relations in $N$ to those in $K$.

\begin{obv}{\label{obv:tildeK to K}}
    \begin{enumerate}
        \item Let $M\in N$, then $\phi^N_{S_p}(M)\cdot M\in K$
        \item Let $M,M'\in N$ such that $M\not\in T$ and $M'\in T$, then $\phi^N_{S_p}(MM')=\phi^N_{S_p}(M)$.
    \end{enumerate}
\end{obv}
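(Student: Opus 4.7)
The proof of both claims rests directly on the explicit semidirect product structure of $N$ with the multiplication law $(\xi,\sigma)(\chi,\tau)=(\xi+\sigma\cdot\chi,\sigma\tau)$ and conjugation action $\tau\cdot(\xi,\sigma)=(\tau\cdot\xi,\tau\sigma\tau^{-1})$, together with the explicit piecewise definition of $\phi^N_{S_p}$ in Def.~\ref{defn: tildeK_sp}.

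For Claim 1, I would split by the permutation part $\sigma:=\mathrm{proj}(M)$. If $\sigma=\id$, then $\phi^N_{S_p}(M)=\id$ and $\phi^N_{S_p}(M)\cdot M=M\in T\subset K$. If instead $\sigma\in G_i\setminus\{\id\}$ for some $i$, then $\phi^N_{S_p}(M)=\tau_i$ and the conjugation formula gives $\tau_i\cdot M=(\tau_i\cdot\xi,\tau_i\sigma\tau_i^{-1})$. Writing $\sigma=g_i^k$ for some $k$ (since $\sigma\in\langle g_i\rangle=G_i$), Lem.~\ref{lem: sp generator} yields $\tau_i\sigma\tau_i^{-1}=g_1^k$. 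Since $g_1$ is identified with $\mathrm{proj}(X)$, the element $\tau_i\cdot M$ is then recognized in the matrix representation as $S_{\tau_i\cdot\xi}\,X^k$, which lies in $K$ by Def.~\ref{def: K-group}. The only ingredient worth checking is that $\tau_i\cdot\xi$ remains in $T$: this is immediate because $\tau_i$ acts on a diagonal matrix by permuting its diagonal entries, which leaves the product of the entries---and hence $\det(S_\xi)=1$---invariant. In the $p$-torsion context relevant to the subsequent proof of Thm.~\ref{thm: FCO-2}, $M^p=\one$ forces $\sigma^p=\id$, so $\sigma$ lies in some Sylow $p$-subgroup $G_i$ (or equals $\id$), and the two cases above exhaust the situation.

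For Claim 2, the assumption $M'\in T$ means $M'=(\chi,\id)$, and a single application of the group law gives
\begin{equation*}
    MM'=(\xi,\sigma)(\chi,\id)=(\xi+\sigma\cdot\chi,\sigma)\; ,
\end{equation*}
so that $\mathrm{proj}(MM')=\sigma=\mathrm{proj}(M)$. Since $\phi^N_{S_p}$ depends only on the projection to $S_p$ (returning $\tau_i$ whenever this projection lies in $G_i\setminus\{\id\}$, and $\id$ otherwise), the equality of projections immediately yields $\phi^N_{S_p}(MM')=\phi^N_{S_p}(M)$. Strictly speaking the hypothesis $M\notin T$ is not used in the argument itself; it is included to flag the only case of downstream interest (in the trivial case $\sigma=\id$ both sides are $\id$ anyway).

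Overall I do not anticipate any real obstacle: both claims are essentially bookkeeping with the semidirect product law, and the nontrivial conjugation content ($\tau_i g_i\tau_i^{-1}=g_1$) has already been isolated in Lem.~\ref{lem: sp generator}. The only place where care is needed is clarifying that Claim 1 should be read in the $p$-torsion context---otherwise $\sigma$ could lie outside every $G_i$, in which case $\phi^N_{S_p}(M)\cdot M=M$ need not belong to $K$. I would add a brief comment making this dependence explicit, so that the subsequent use of Obs.~\ref{obv:tildeK to K} in reducing commutation relations on $N$ to those on $K$ inherits the $p$-torsion hypothesis of Thm.~\ref{thm: FCO-2} cleanly.
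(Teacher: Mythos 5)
Your proof is correct, and it fills in what the paper leaves implicit: Obs.~\ref{obv:tildeK to K} is stated without proof, and the intended argument is exactly the semidirect-product bookkeeping you carry out (conjugation sends $(\xi,\sigma)=(\xi,g_i^k)$ to $(\tau_i\cdot\xi,g_1^k)=S_{\tau_i\cdot\xi}X^k$ via Lm.~\ref{lem: sp generator}, with the determinant preserved under permutation of diagonal entries, and claim 2 following because $\phi^N_{S_p}$ only sees the $S_p$-projection, which is unchanged by right multiplication with a diagonal element). Your caveat on claim 1 is also well taken: as literally stated it fails whenever $\mathrm{proj}(M)\neq\id$ has order not dividing $p$ (e.g.\ a transposition for $p\geq 3$), since then $\phi^N_{S_p}(M)=\id$ and $M\notin K$; the statement is only used downstream (Lm.~\ref{lem: comm_tildeK}, Thm.~\ref{thm: FCO-2}) under the hypothesis $M^p=\one$, which forces $\mathrm{proj}(M)$ to be the identity or a $p$-cycle, i.e.\ to lie in some $G_i$, exactly as you note. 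Making that hypothesis explicit in the observation, as you suggest, would be a genuine (if minor) improvement to the paper's phrasing.
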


With this, we study commutation relations up to phase in $N$, generalising our previous Lm.~4 in Ref.~\cite{FrembsChungOkay2022}. 

\begin{lemma}\label{lem: comm_tildeK}
    Let $M=(\xi,\sigma),M'=(\xi',\sigma')\in N$ with $M^p,M'^p=\one$ and $[M,M']\in \langle\omega\one\rangle=Z(N)$ for $\omega=e^{\frac{2\pi i}{p}}$.
    \begin{enumerate}
    \item If $\sigma=\sigma'=\id$, then $M,M'\in T$.
    \item If $\sigma=\id$ and $\sigma'\neq\id$, then $\phi^N_{S_p}(M')\cdot M\in H\cap T_{(p)}$. In particular, if $[M,M']=\one$, then $M\in Z(N)$.
    \item If $\sigma,\sigma'\neq\id$ then $\phi^N_{S_p}(M)=\phi^N_{S_p}(M')$ and there exists $a\in\Z_p$ and $(\chi,\id)\in H\cap T_{(p)}$ such that
    $$\left(\phi^N_{S_p}(M')\cdot M'\right)=\left((\phi^N_{S_p}(M)\cdot M)(\chi,\id)\right)^a$$ In particular, if $[M,M']=\one$, then $(\chi,\id)\in Z(N)$.
    \end{enumerate}
\end{lemma}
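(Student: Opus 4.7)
The plan is to reduce both nontrivial cases to computations in $K$, where Lm.~4 of Ref.~\cite{FrembsChungOkay2022} already handles the corresponding statement. The reduction exploits the covariance of commutators under conjugation (Obv.~\ref{obv_comm}) together with the map $\phi^N_{S_p}$ of Def.~\ref{defn: tildeK_sp}: conjugating by $\tau=\phi^N_{S_p}(M')$ sends the nontrivial permutation parts to powers of the standard shift $g_1=(0\,1\,\cdots\,p-1)$, which places the resulting elements inside $K$ by Obv.~\ref{obv:tildeK to K}, while the hypothesis $[M,M']\in\langle\omega\one\rangle$ is preserved.

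Case 1 is immediate from the definition of $T$. For Case 2, a direct computation in the semidirect product yields $[M,M']=(\xi-\sigma'\cdot\xi,\id)$, so the hypothesis forces $\xi(q)-\xi(\sigma'(q))$ to equal a single constant $c\in\zz_p$ (written additively, so that $c=0$ corresponds to $[M,M']=\one$). After conjugation by $\tau$, which by Lm.~\ref{lem: sp generator} sends $\sigma'$ to some $g_1^m$ with $m\in\zz_p^*$, this relation becomes $\tilde\xi(q)-\tilde\xi(q+m)=c$; iterating along the full $g_1$-orbit forces $\tilde\xi$ to be affine in $q$, i.e., a Heisenberg-Weyl diagonal element. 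Combined with the preserved $p$-torsion $M^p=\one$ (equivalently $\xi\in T_{(p)}$ in this case), this yields $\phi^N_{S_p}(M')\cdot M\in H\cap T_{(p)}$. The ``in particular'' clause follows since $c=0$ additionally makes $\tilde\xi$ constant, so $M$ is a scalar and hence lies in $Z(N)$.

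Case 3 is the main step. Since elements of $Z(N)$ have trivial permutation component, the hypothesis $[M,M']\in\langle\omega\one\rangle$ forces $\sigma\sigma'=\sigma'\sigma$ in $S_p$; because the centraliser of a $p$-cycle in $S_p$ is its own order-$p$ cyclic subgroup, both $\sigma,\sigma'$ must lie in the same $\zz_p$-subgroup $G_i\leq S_p$, giving $\phi^N_{S_p}(M)=\phi^N_{S_p}(M')=:\tau$ directly. Conjugation by $\tau$ places $\tilde M=(\tilde\xi,g_1^{m_1})$ and $\tilde M'=(\tilde\xi',g_1^{m_2})$ in $K$; the choice $a:=m_2 m_1^{-1}\in\zz_p^*$ aligns the permutation parts of $\tilde M^a$ and $\tilde M'$, and the diagonal discrepancy then determines the asserted correction $(\chi,\id)$. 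The main obstacle will be controlling this correction---tracking the telescoping sum $\sum_{j=0}^{a-1} g_1^{jm_1}\cdot\tilde\xi$ arising from expanding $\tilde M^a$ and verifying that the residual $\chi$ is both affine and $p$-torsion rather than merely $p^k$-torsion for some $k>1$. This is precisely the structural content that carries over from Lm.~4 of Ref.~\cite{FrembsChungOkay2022}, and it is where the hypothesis $M^p=M'^p=\one$ is essential (note that in $N$, unlike in the diagonal case, $p$-torsion of $M$ only imposes $\sum_{j=0}^{p-1}\sigma^j\cdot\xi=0$, so the torsion of $\chi$ does not come for free). The final ``in particular'' statement, $(\chi,\id)\in Z(N)$ when $[M,M']=\one$, is inherited from the corresponding clause of that lemma.
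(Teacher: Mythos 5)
Your proposal is correct and follows essentially the same route as the paper: conjugate by $\tau=\phi^N_{S_p}(\cdot)$ to land in $K$ (Obv.~\ref{obv:tildeK to K}, Obv.~\ref{obv_comm}), observe in case 3 that $\langle\sigma,\sigma'\rangle\cong\zz_p$ so the conjugating permutations coincide, and defer the structural content of the commutation relations in $K$ to Lm.~4 of Ref.~\cite{FrembsChungOkay2022} --- exactly as the paper does. Your explicit telescoping computation in case 2 (showing $\tilde\xi$ must be affine) is a self-contained substitute for the citation used there, but does not change the overall argument.
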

\begin{proof}
    Case 1 is trivial.
    
    For case 2, let $\tau=\phi^N_{S_p}(M')$. By Observation \ref{obv:tildeK to K}, $\tau\cdot M\in T$ and $\tau\cdot M'\in K$. On the other hand, we have
    $$[\tau\cdot M,\tau\cdot M']=\tau\cdot[M,M']\in Z(K)=Z(N)$$
    Hence, by \cite[Lm.~4]{FrembsChungOkay2022},\footnote{While not stated explicitly, \cite[Lm.~4]{FrembsChungOkay2022} is easily seen to apply to $K$, not only to $K_Q$.} $\tau\cdot M\in H\cap T_{(p)}=\langle\omega\one\rangle\times\hat{\Z}_p$, \mf{where $\hat{\Z}_p=\{S_\xi\in T(p)\suchthat \xi(q)=\omega^{aq}, a\in \Z_p\}<T_{(p)}$ denotes the Pontryagin dual of $\zz_p$}. In particular, if $[M,M']=\one$, then $[\tau\cdot M,\tau\cdot M']=\one$, hence, \cite[Lm.~4]{FrembsChungOkay2022} asserts that $\tau\cdot M\in Z(N)$ and thus $M\in Z(N)$. (Analogously, for $\sigma\neq\id$, $\sigma'=\id$.)

    For case 3, since $[M,M']\in Z(N)$, we have $\sigma\sigma'=\sigma'\sigma$. The subgroup $\langle \sigma,\sigma'\rangle\leq S_p$ is thus isomorphic either to $\Z_p$ or $\Z_p\times \Z_p$. Yet, the latter is impossible, since $S_p$ (of order $p!$) contains no subgroup of order $p^2$. We conclude that $\langle\sigma,\sigma'\rangle\cong\Z_p$ which implies $\sigma'=\sigma^a$ for some $a\in\zz_p$ and thus $\phi^N_{S_p}(M)=\phi^N_{S_p}(M')=:\tau$. By Observation \ref{obv_comm}, $[\tau\cdot M,\tau\cdot M']=\tau\cdot[M,M']\in Z(N)$, hence, the result follows again from \cite[Lemma 4]{FrembsChungOkay2022}.
\end{proof}

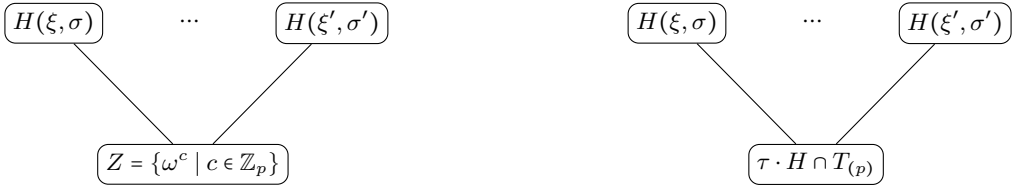
\begin{figure}[t]
\centering

\begin{tikzpicture}[
  every node/.style={font=\small},
  grp/.style={draw, rounded corners, inner sep=3pt, align=center},
  dots/.style={inner sep=0pt},
  arr/.style={->, line width=0.4pt}
]

\begin{scope}
  \node[grp] (H1) {$H(\xi,\sigma)$};
  \node[dots, right=10mm of H1] (hdots) {$\cdots$};
  \node[grp, right=10mm of hdots] (H2) {$H(\xi',\sigma')$};

  \node[grp, below=16mm of $(H1)!0.5!(H2)$] (Z)
    {$Z=\{\omega^c\mid c\in\mathbb{Z}_p\}$};

  \draw (Z) -- (H1);
  \draw (Z) -- (H2);
\end{scope}

\begin{scope}[xshift=8.2cm]
  \node[grp] (K1) {$H(\xi,\sigma)$};
  \node[dots, right=10mm of K1] (kdots) {$\cdots$};
  \node[grp, right=10mm of kdots] (K2) {$H(\xi',\sigma')$};

  \node[grp, below=16mm of $(K1)!0.5!(K2)$] (Tp)
    {$\tau\cdot H\cap T_{(p)}$};

  \draw (Tp) -- (K1);
  \draw (Tp) -- (K2);
\end{scope}

\end{tikzpicture}

\caption{Two types of intersections between conjugate Heisenberg-Weyl subgroups $H\cong H(\xi,\sigma),H(\xi',\sigma')\leq N$ (see Lm.~\ref{lem: comm_tildeK}), where $\xi,\xi'\in T$ and $\sigma,\sigma'\in S_p$ (with notation as in proof of Thm.~\ref{thm: FCO-2}). If $(\xi',\sigma')\neq(\xi\chi,\sigma)^a$ for all $a\in\zz_p$ and  $(\chi,\id)\in\tau\cdot H\cap T_{(p)}\cong\{\omega^cZ^b\mid b,c\in\zz_p\}$ (with $Z$ the Pauli-Z operator in $H$) and $\tau=\phi^N_{S_p}(\xi,\sigma)$, the intersection is the center (left); if instead $(\xi',\sigma')\neq(\xi\chi,\sigma)^a$ (for some $a\in\zz_p$ and $\chi\in\tau\cdot H\cap T_{(p)}$ with $\tau=\phi^N_{S_p}(\xi,\sigma)$), the intersection additionally contains the subgroup $\tau\cdot\langle Z\rangle$, conjugated by $\tau$ (right).}
\label{fig: H intersections}
\end{figure}

Lm.~\ref{lem: comm_tildeK} says that commutation relations up to phase between elements in $N$ restrict to the various conjugate subgroups of $H$ in $N$, which are of the form $H\cong H(\xi,\sigma):=\langle(\xi,\sigma),(\chi,\id)\rangle$ with $(\chi,\id)\in T_{(p)}$ such that $\tau\cdot(\chi,\id)\in H$ for $\tau=\phi^N_{S_p}(\xi,\sigma)$ as defined in Def.~\ref{defn: tildeK_sp}.\footnote{This essentially decomposes elements in $N$ into diagonal elements in $T_{(p)}$ and `generalised permutations' of the form $(\xi,\sigma)$.} Here, $H$ corresponds with $\xi=1$, that is, $S_\xi=\diag(1,\cdots,1)$ and $\sigma=(0\cdots p-1)=proj(X)$ (represented by the Pauli-X operator). Moreover, Lm.~\ref{lem: comm_tildeK} shows that the intersection between different $H$-subgroups is highly restricted, to subgroups of $T_{(p)}$, as depicted in Fig.~\ref{fig: H intersections}.

\begin{proof}[Proof of Thm.~\ref{thm: FCO-2}]
    By Thm.~2 in Ref.~\cite{QassimWallman2020}, there exists a map $v_{H^{\otimes n}}:H^{\otimes n}\ra\zz_p$ that is a homomorphism in abelian subgroups.\footnote{Note that every element in $H^{\otimes n}$ has $p$-torsion.} We will extend $v_{H^{\otimes n}}$ to a homomorphism in $p$-torsion abelian subgroups $v_{N^{\otimes n}}:N^{\otimes n}\ra\zz_p$. 
    
    To this end, note that by Lm.~\ref{lem: comm_tildeK}, the $p$-torsion abelian subgroups in $N$ are, first, $T_{(p)}<T$ and, second, (non-diagonal) subgroups of the form $A(\xi,\sigma)=\langle(\xi,\sigma)\rangle$, where $\sigma\neq\id$, $\sigma^p=\id$ and $\xi\in T$. These are contained in the respective isomorphic copies $H\cong H(\xi,\sigma)<N$, and we have
    \begin{align}\label{eq: H-subgroup decomposition}
        N_{(p)}
        =\bigcup_{\substack{\id\neq\sigma\in S_p\\ \sigma^p=\id,\xi\in T}} H(\xi,\sigma)
        =\bigcup_{\xi\in T_{(p)}}A(\xi,\id)\cup\bigcup_{\substack{\id\neq\sigma\in S_p\\ \sigma^p=\id,
        \xi\in T}} A(\xi,\sigma)\; .
    \end{align}
    By Lm.~\ref{lem: comm_tildeK}, $H(\xi,\sigma)=H(\xi',\sigma')$ if and only if $(\xi',\sigma')=(\xi,\sigma)^a(\chi,\id)$ for $a\in\zz_p$ (and $(\chi,\id)\in T_{(p)}$ with $\tau\cdot(\chi,\id)\in H$ as above, see also Fig.~\ref{fig: H intersections}). For any $p$-torsion abelian subgroup $H\not>A(\xi,\sigma)<N$ (with either $\sigma\neq\id$, $\sigma^p=\id$ or $A(\xi,\sigma)=A(\chi,\id)$ for $\chi\in T_{(p)}$ by Eq.~(\ref{eq: H-subgroup decomposition})), we may thus pick a generator $(\xi,\sigma)$, and define\footnote{We will use the shorthand $v_N(\xi,\sigma):=v_N((\xi,\sigma))$ to avoid clutter.}
    \begin{align}\label{eq: def v_N}
        v_N(\xi,\sigma)\
        :=\ \begin{cases}
            \ v_H(\phi^{N(p)}_{S_p}(\xi,\sigma)\cdot(\xi,\sigma))  & \quad\mathrm{if}\ \sigma\neq\id,\sigma^p=\id\ \mathrm{and}\ \phi^{N(p)}_{S_p}(\xi,\sigma)\cdot(\xi,\id)\in H \\
            \ v_H(\phi^{N(p)}_{S_p}(\xi,\sigma)\cdot(1,\sigma)) & \quad\mathrm{if}\ \sigma\neq\id,\sigma^p=\id\ \mathrm{and}\ \phi^{N(p)}_{S_p}(\xi,\sigma)\cdot(\xi,\id)\notin H \\
            \ v_H(\tau\cdot(\xi,\id)) & \quad\mathrm{if}\ \sigma=\id,(\xi,\id)\in T_{(p)}\ \mathrm{and}\ \exists\ \tau\in\mathrm{Im}(\phi^N_K)\ \mathrm{s.t.}\ \tau\cdot(\xi,\id)\in H \\
            \ 1 & \quad\mathrm{if}\ \sigma=\id,(\xi,\id)\in T_{(p)}\ \mathrm{and}\ \not\exists\tau\in\mathrm{Im}(\phi^N_K)\ \mathrm{s.t.}\ \tau\cdot(\xi,\id)\in H
        \end{cases}
    \end{align}
    Note that $\tau$ is necessarily unique if its exists, as follows from Def.~\ref{defn: tildeK_sp} and Lm.~\ref{lem: comm_tildeK} (2). Next, we extend $v_N$ multiplicatively to $A(\xi,\sigma)$, simply by demanding $v_N((\xi,\sigma)^a):=(v_N(\xi,\sigma))^a$ for all $a\in\zz_p$. Clearly, this is consistent with $v_N|_H=v_H$, in particular, with $v_N|_{\hat{\zz}_p}=v_H|_{\hat{\zz}_p}$ in $H(\xi,\sigma)$, \mf{moreover with $T_{(p)}$, since elements that commute up to phase by Lm.~\ref{lem: comm_tildeK} are always conjugated by the same permutation.}
    By Eq.~(\ref{eq: H-subgroup decomposition}), defines $v_N$ on all of $N_{(p)}$. Finally, since, by Lm.~\ref{lem: comm_tildeK}, all commutation relations up to phase arise in subgroups $H(\xi,\sigma)$, $v_{N^{\otimes n}}:N^{\otimes n}_{(p)}\ra\zz_p$ defined by $v_{N^{\otimes n}}:=\prod_{k=1}^n v_{N_k}$ is a homomorphism in $p$-torsion abelian subgroups as desired.
\end{proof}

$v_N$ simply copies $v_H$ to every isomorphic copy $H(\xi,\sigma)$ of $H$ in $N$. This mapping is rather arbitrary, especially the second case of Eq.~(\ref{eq: def v_N}) can be defined in various inequivalent ways without spoiling the result. Notably, $v_N$ is thus more loosely defined than our previous map $\phi:(K^{\otimes n}_Q)_{(p)}\ra H^{\otimes n}$ in Ref.~\cite{FrembsChungOkay2022}. In App.~\ref{app: comparison}, we further compare these two constructions and discuss to what extent the properties of $\phi$ carry over to $N$.
\section{Commutation up to phase under local $p$-torsion}\label{app: proof 2}

The measurement operators in $\zz_p$-MBQC are $p$-torsion unitaries in every tensor factor. If two such (global) operators commute, this implies that their (local) tensor factors commutate up to a $p$-th root of unity.

\begin{lemma}\label{lm: qudit commutation relations reduce to Paulis}
    Let $u=\otimes_{k=1}^n u_k, u'=\otimes_{k=1}^n u'_k \in\UH^{\otimes n}$ such that $u_k^d=u'^d_k=\one$ for all $k\in\{1,\cdots,n\}$ and $d$ odd.\footnote{To avoid clutter, we suppress the site label for identity elements, writing $\one_k=\one$ for all $k\in[n]$, and similarly for subsets $I\subset[n]$.} Then $[u,u']=0$ if and only if $u_ku'_k=\omega^{l_k}u'_ku_k$ with $l_k\in\zz_d$ for all $k\in\{1,\cdots,n\}$ and such that $\sum_k l_k=0\mod d$.
\end{lemma}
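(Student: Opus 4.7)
The backward direction is an immediate computation: assuming $u_k u_k' = \omega^{l_k} u_k' u_k$ in each factor, we have
\begin{equation*}
    u u' \;=\; \bigotimes_{k=1}^n u_k u_k' \;=\; \bigotimes_{k=1}^n \omega^{l_k} u_k' u_k \;=\; \omega^{\sum_k l_k}\, u' u\;,
\end{equation*}
so the condition $\sum_k l_k \equiv 0 \bmod d$ gives $[u,u']=0$. The forward direction is where all the work lies; I plan to proceed in two steps.

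\emph{Step 1 (locality of commutation up to phase).} Rewriting $uu'=u'u$ in tensor-factorised form yields
\begin{equation*}
    (u_1 u_1') \otimes \cdots \otimes (u_n u_n') \;=\; (u_1' u_1) \otimes \cdots \otimes (u_n' u_n)\; .
\end{equation*}
Since each $u_k,u_k'\in\cU(\cH_k)$ is invertible, both sides are nonzero elementary tensors. I will invoke the standard uniqueness of elementary tensor decompositions (up to overall scalar rebalancing): if $\bigotimes_k A_k = \bigotimes_k B_k$ with all factors nonzero, then there exist nonzero scalars $\lambda_k\in\C^*$ with $A_k = \lambda_k B_k$ and $\prod_k \lambda_k = 1$. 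Applied to our setting, this produces scalars $\lambda_k\in\C^*$ such that $u_k u_k' = \lambda_k\, u_k' u_k$ for each $k$ with $\prod_k \lambda_k = 1$.

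\emph{Step 2 (phases are $d$-th roots of unity).} It remains to show that each $\lambda_k$ lies in $\langle\omega\rangle$, whence $\lambda_k=\omega^{l_k}$ with $l_k\in\zz_d$ and the constraint $\prod_k \lambda_k=1$ becomes $\sum_k l_k \equiv 0\bmod d$. The relation $u_k u_k' = \lambda_k u_k' u_k$ can be iterated by repeated left-multiplication by $u_k$: inductively,
\begin{equation*}
    u_k^j u_k' \;=\; \lambda_k^j\, u_k' u_k^j \qquad\text{for all } j\geq 0\;.
\end{equation*}
Setting $j=d$ and using $u_k^d=\one$ (and invertibility of $u_k'$) yields $\lambda_k^d=1$. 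This is the main, and only subtle, point: the $d$-torsion assumption on the local factors is precisely what forces the a priori arbitrary phases $\lambda_k$ to be $d$-th roots of unity.

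\emph{Main obstacle.} No step is especially difficult; the only non-trivial point is the tensor-uniqueness lemma in Step 1, but it is classical and applies verbatim because all operators involved are invertible, hence nonzero. The $d$-odd hypothesis plays no explicit role (the argument works for any $d$); it is used in the application of the lemma within the paper, not in the lemma itself.
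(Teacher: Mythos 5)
Your proof is correct, and it takes a genuinely different --- and arguably cleaner --- route than the paper's. The paper decomposes the commutator of a tensor product into local commutators and anticommutators, $[u_1\otimes u_2,u_1'\otimes u_2']=[u_1,u_1']\otimes\{u_2,u_2'\}+\{u_1,u_1'\}\otimes[u_2,u_2']$, and must then rule out vanishing anticommutators; this is precisely where the hypothesis that $d$ is odd enters ($\{u,u'\}=0$ would force $\one=u'u^du'^{-1}=-u^d=-\one$), and it is why the even case is handled separately in Lm.~\ref{lm: qubit commutation relations reduce to Paulis} with extra case analysis over successive bipartitions. You instead read $uu'=u'u$ as an equality of two nonzero elementary tensors $\otimes_k(u_ku_k')=\otimes_k(u_k'u_k)$, invoke uniqueness of elementary tensor decompositions up to scalars multiplying to $1$ (applicable because unitaries are invertible, hence nonzero), and then pin each scalar to a $d$-th root of unity by iterating $u_k^ju_k'=\lambda_k^ju_k'u_k^j$ and setting $j=d$. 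Both steps are sound, and your observation that the argument is uniform in $d$ is right: it proves the odd and even cases simultaneously, and iterating on the primed factor instead also yields the paper's follow-up remark that mixed torsion $u_k^d=u_k'^{d'}=\one$ forces $\lambda_k^{\gcd(d,d')}=1$. What the paper's route buys in exchange is explicit structural information about which local (anti)commutators can vanish; for the lemma as stated, your argument is shorter and strictly more general.
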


\begin{proof}
    We first consider the bipartite case ($n=2$), and decompose the commutator of tensor products as tensor products of (anti-)commutators (here, $[u,u']=\frac{1}{2}(uu'-u'u)$ and $\{u,u'\}=\frac{1}{2}(uu'+u'u)$) via the formula,
    \begin{align}\label{eq: commutator factorisation}
        [u_1\otimes u_2,u'_1\otimes u'_2]=[u_1,u'_1]\otimes\{u_2,u'_2\}+\{u_1,u'_1\}\otimes[u_2,u'_2]\; .
    \end{align}
    Clearly, $[u_1,u'_1]=0=[u_2,u'_2]$ (with $l_1=0=l_2$) implies $[u_1\otimes u_2,u'_1\otimes u'_2]=0$; moreover, (omitting indices) the case $\{u,u'\}=0$ is excluded, since this is equivalent to $u'uu'^{-1}=-u$ and thus $\mathbbm{1}=u'u^du'^{-1}=(u'uu'^{-1})^d=-u^d=-\mathbbm{1}$, where we used that $u^d=\mathbbm{1}$ and that $d$ is odd. We may thus assume $[u_1,u'_1],\{u_1,u'_1\}\neq 0\neq [u_2,u'_2],\{u_2,u'_2\}$, which leaves the case $[u_1,u'_1]\otimes\{u_2,u'_2\}=-\{u_1,u'_1\}\otimes[u_2,u'_2]$, that is, $[u_k,u'_k]=\alpha_k\{u_k,u'_k\}$ with $k=1,2$ and $\alpha_1=-\alpha_2\in U(1)$. It follows that (omitting indices) $uu'=[u,u']+\{u,u'\}=(\alpha+1)\frac{1}{2}(uu'+u'u)$, which is only possible if $uu'\propto u'u$ (since $0\notin\UH$ and thus $u'u\neq 0$). Let thus $u'u=\lambda uu'$, then $uu'=\frac{1}{2}(\alpha+1)(\lambda+1)uu'$, hence, $\lambda=\frac{2}{\alpha+1}-1$. But since $\mathbbm{1}=u'u^du'^{-1}=(u'uu'^{-1})^d=\lambda^d u^d=\lambda^d\mathbbm{1}$, this implies $\lambda^d=(\frac{2}{\alpha+1}-1)^d=1$. We may thus set $\lambda_k=\frac{2}{\alpha_k+1}-1=e^{\frac{2\pi i}{d}l_k}=\omega^{l_k}$, that is, $\alpha_k=\frac{1-\omega^{l_k}}{1+\omega^{l_k}}$ for some $l_k\in\zz_d$ and $k=1,2$. With $\alpha_1=-\alpha_2$, we thus find $-(1-\omega^{l_1})(1+\omega^{l_2})=(1-\omega^{l_2})(1+\omega^{l_1})$ which simplifies to $1=\omega^{l_1+l_2}$, hence, $l_1+l_2=0\mod d$.

    Finally, since $u^d_k=\one$ for all $k\in[n]$ also implies that $(\otimes_{k\in I\subset[n]}u_k)^d=\one$, we can apply the above argument to every bi-partition. The result thus follows by decomposing the $n$-fold tensor product into local factors.

    The converse follows immediately for $l_1=0=l_2$, and for any pair $l_k,l'_k\neq 0$ since $u_ku'_k=\omega^{l_k}u'_ku_k$ implies both $u'u\propto uu'$ and $[u,u']\propto\{u,u'\}$ with the proportionality constants fixed by $l_k,l'_k$ as above.
\end{proof}

If we relax the assumption on $d$-torsion in Lm.~\ref{lm: qudit commutation relations reduce to Paulis} \mf{to $u^d=u'^{d'}=\mathbbm{1}$ for $d,d'$ odd}, then $[u,u']=0$ if and only if $uu'=\omega^lu'u$ for $l\in\zz_{g}$, $\omega=e^{\frac{2\pi i}{g}}$ and $g=\mathrm{gcd}(d,d')$. In particular, if $\mathrm{gcd}(d,d')=1$ then necessarily $[u,u']=0$.

Lm.~\ref{lm: qudit commutation relations reduce to Paulis} can further be adapted such as to apply to systems of even dimension. We consider the case $d=2$.

\begin{lemma}\label{lm: qubit commutation relations reduce to Paulis}
    Let $u=\otimes_{k=1}^n u_k, u'=\otimes_{k=1}^n u'_k \in\UH^{\otimes n}$ such that $u_k^2=u'^2_k=\mathbbm{1}$ for all $k\in\{1,\cdots,n\}$. Then $[u,u']=0$ if and only if $u_ku'_k=(-1)^{l_k}u'_ku_k$ with $l_k\in\zz_2$ for all $k\in\{1,\cdots,n\}$ and such that $\sum_k l_k=0\mod 2$.
\end{lemma}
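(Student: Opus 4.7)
The plan is to follow the structure of Lm.~\ref{lm: qudit commutation relations reduce to Paulis} but to account for one crucial change when $d=2$: the anticommuting case $\{u_k,u'_k\}=0$ can no longer be excluded. Recall that for $d$ odd, this case was ruled out via the chain $\mathbbm{1} = u' u^d u'^{-1} = (u' u u'^{-1})^d = (-1)^d \mathbbm{1} = -\mathbbm{1}$, which collapses to a tautology for $d=2$. Consequently, the local commutation phases $\mu_k$ defined by $u_k u'_k = \mu_k u'_k u_k$ should now be allowed to take values in $\{\pm 1\}\cong\zz_2$ rather than be forced to equal $1$, and all that must be established is that their product is constrained to equal $1$.

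Rather than rerun the bipartite (anti-)commutator expansion of Eq.~(\ref{eq: commutator factorisation}) and iterate, I would exploit the fact that $u_k^2=\mathbbm{1}$ gives $u_k^{-1}=u_k$, so the commutation hypothesis $uu'=u'u$ rearranges to $u u' u^{-1} = u'$, namely
\begin{equation*}
    \otimes_{k=1}^n (u_k u'_k u_k) \;=\; \otimes_{k=1}^n u'_k\; .
\end{equation*}
Since all tensor factors are nonzero (being unitary), the standard uniqueness of simple tensor decompositions up to scalar --- if $\otimes_k A_k=\otimes_k B_k$ with nonzero factors, then $B_k = \lambda_k A_k$ for scalars with $\prod_k\lambda_k=1$ --- yields $u_k u'_k u_k = \mu_k u'_k$ with $\prod_k\mu_k=1$. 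Equivalently, $u_k u'_k = \mu_k u'_k u_k$ for all $k\in\{1,\ldots,n\}$.

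To conclude, I would pin down the admissible values of $\mu_k$ using the involutive constraints: the identity
\begin{equation*}
    u'_k \;=\; u_k^2\, u'_k \;=\; \mu_k u_k u'_k u_k \;=\; \mu_k^2 u'_k u_k^2 \;=\; \mu_k^2 u'_k
\end{equation*}
forces $\mu_k^2=1$, hence $\mu_k=(-1)^{l_k}$ with $l_k\in\zz_2$, and the product constraint $\prod_k\mu_k=1$ becomes $\sum_k l_k = 0 \mod 2$. The converse is then immediate by multiplying the local relations. The only conceptual point requiring a clean citation is the uniqueness of simple tensor decompositions up to scalar; beyond that, this approach is arguably more transparent than the (anti-)commutator expansion used in the odd case, and in fact works uniformly in any $d$ (recovering Lm.~\ref{lm: qudit commutation relations reduce to Paulis} by replacing $\mu_k^2=1$ with $\mu_k^d=1$ and noting that for $d$ odd the value $\mu_k=-1$ is automatically excluded).
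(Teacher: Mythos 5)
Your proof is correct, but it takes a genuinely different route from the paper. The paper's proof reuses the machinery of Lm.~\ref{lm: qudit commutation relations reduce to Paulis}: it expands the commutator of a bipartite tensor product into commutators and anticommutators via Eq.~(\ref{eq: commutator factorisation}), observes that for $d=2$ the anticommuting case $\{u_k,u'_k\}=0$ (i.e.\ $\lambda_k=-1$) can no longer be excluded, and then tracks how the vanishing of $[u_1\otimes u_2,u'_1\otimes u'_2]$ (and, in the iteration over bipartitions, of $\{u_1\otimes u_2,u'_1\otimes u'_2\}$) forces $\lambda_1\lambda_2=1$ (respectively $-1$), accumulating to $\sum_k l_k=0\bmod 2$. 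You instead rewrite $uu'=u'u$ as $\otimes_k(u_ku'_ku_k^{-1})=\otimes_k u'_k$ and invoke the uniqueness of a nonzero simple tensor up to reciprocal scalars to get $u_ku'_ku_k^{-1}=\mu_ku'_k$ with $\prod_k\mu_k=1$, then pin down $\mu_k^2=1$ from the involutive constraint. Both the scalar-uniqueness fact and the computation $u'_k=u_k^2u'_k=\mu_k^2u'_k$ are sound, and the converse is indeed immediate. Your argument is shorter, avoids the case analysis over which of $[u_k,u'_k]$ and $\{u_k,u'_k\}$ vanishes, and, as you note, works uniformly for any torsion order $d$ (replacing $\mu_k^2=1$ by $\mu_k^d=1$), so it recovers Lm.~\ref{lm: qudit commutation relations reduce to Paulis} as well; what the paper's approach buys in exchange is that it makes the role of anticommutators explicit and stays structurally parallel to the odd-dimensional proof it sits next to. The one external input you rely on, uniqueness of simple tensor decompositions with nonzero factors up to scalars multiplying to one, is standard and applies here since unitaries are nonzero.
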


\begin{proof}
    \mf{Following the proof of Lm.~\ref{lm: qudit commutation relations reduce to Paulis}}, we find $\lambda_k=\pm 1$ for $k=1,2$, which corresponds with $[u_1,u'_1]=0$ and $\{u_1,u'_1\}=0$, respectively. In the first case, $\{u_1,u'_1\}\neq 0$, hence, $[u_2,u'_2]=0$, in the second, $[u_2,u'_2]\neq 0$, hence, $\{u_2,u'_2\}=0$. It follows that $\lambda_1=\lambda_2$, equivalently, $1=\lambda_1\lambda_2=(-1)^{l_1+l_2}$, hence, $l_1+l_2=0\mod 2$ for $l_1,l_2\in\zz_2$.

    Finally, since unlike in the case for $d$ odd in Lm.~\ref{lm: qudit commutation relations reduce to Paulis}, $\{u,u'\}=0$ is a possibility for $l=1$ ($\lambda=-1$), when considering consecutive bipartitions, we thus also need to consider decompositions of anti-commutators
    \begin{align*}
        \{u_1\otimes u_2,u'_1\otimes u'_2\}=[u_1,u'_1]\otimes[u_2,u'_2]+\{u_1,u'_1\}\otimes\{u_2,u'_2\}\; .
    \end{align*}
    Again, we have $\lambda_k=\pm 1$ and by similar reasoning as above, $\lambda_1=-\lambda_2$, equivalently $-1=(-1)^{l_1+l_2}$, that is, $1=l=l_1+l_2\mod 2$. Consequently, upon fully decomposing the commutator, we find $\sum_k l_k=0\mod 2$.
\end{proof}

Note that Lm.~\ref{lm: qubit commutation relations reduce to Paulis} recovers the well-known commutation relations between qubit Pauli operators.
\section{Comparison with previous results}\label{app: comparison}

In Ref.~\cite{FrembsChungOkay2022}, we construct a homomorphism of abelian $p$-torsion subgroups $\phi:K^{\otimes n}_Q\ra H^{\otimes n}$ with $T_{(p)}\leq Q\leq T_{(p^m)}$ and $m\in\mathbb{N}$. As emphasised in the main part, Thm.~\ref{thm: FCO-2} generalises this in two directions: (i) by lifting the restriction on subgroups $Q\leq T$ and (ii) by allowing for general permutations $S_p$. Note, however, that the proof of Thm.~\ref{thm: FCO-2} in App.~\ref{app: proof of main theorem} does not generalise the map $\phi$ directly. In this section, we apply our previous results to obtain a restricted version of Thm.~\ref{thm: FCO-2} for the extension of the groups $K_Q$ to permutations in
\begin{align*}
    1 \ra Q \ra N_Q \ra S_p \ra 1\; .
\end{align*}
To this end, we focus on the group $L$ in Eq.~(\ref{ses: L}), which forms a natural generalisation of the Heisenberg-Weyl and qubit Pauli groups, and may thus be of independent interest. For convenience, we write $K=K_{T_{(p)}}(p)$.\footnote{This should not be confused with our shorthand for the group $K_Q$ in Ref.~\cite{FrembsChungOkay2022}.}

\subsection{Extending the Heisenberg-Weyl group by arbitrary permutations}

In this section, we construct a homomorphism in abelian, $p$-torsion subgroups $\phi^{\otimes L}_{\otimes H}:L^{\otimes n}\ra H^{\otimes n}$ for the group defined in Eq.~(\ref{ses: L}), which extends the Heisenberg-Weyl group by taking into account not only shifts, but arbitrary permutations in $S_p$. To this end, we will define an explicit map that conjugates the various abelian subgroups in $L$ to ones in $H$. On non-diagonal elements this is simply the restriction of the map $\phi^N_H|_L$, defined in Def.~\ref{defn: tildeK_sp}. We thus need to extend it to the elements in $T_{(p)}$. Recall from \cite[Theorem 3]{FrembsChungOkay2022} that $M_\xi\in T_{(p)}$ if and only if \mf{$\xi(q)=\omega^{\sum_{j=0}^{p-1} a_jq^j}$} for $a_j\in\Z_p$.$^{\ref{fn: xi convention}}$ We first define the following maps.

\begin{definition}\label{def: phi^K_H}
    Define $\phi^{T_{(p)}}_H:T_{(p)}\to H$ by $M_\xi\mapsto M_\chi$, \mf{where $\xi(q)=\omega^{\sum_{j=0}^{p-1} a_jq^j}$ and $\chi(q)=\omega^{a_0+(\sum_{j=1}^{p-1}a_j)q}$}. Moreover, define $\phi^K_H:K\to H$ by its action on elements $M=S_\xi X^b\in K$, with $\xi\in T_{(p)}$ and $X$ the permutation matrix represented by the Pauli $X$ operator and corresponding with the permutation $(0,\cdots,p-1)\in S_p$, as follows:
    \begin{align*}
        M=S_\xi X^b&\mapsto \left\lbrace
        \begin{array}{ll}
        \phi^{T_{(p)}}_H(S_\xi)      &   \text{ if $b=0$}\\
        \phi^{T_{(p)}}_H(S_\xi)X     &   \text{ if $b=1$}\\
        \phi^K_H(M^{b_i^{-1}})^{b_i}            &   \text{ if $1<b\leq p-1$}
        \end{array}
        \right.
        \end{align*}
\end{definition}

The map $\phi^K_H$ in Def.~\ref{def: phi^K_H} is very similar to $\phi_1$ in \cite[Def.~2]{FrembsChungOkay2022}. As a result, restricted to $H\cap T$,
the following theorem essentially follows from \cite[Thm.~1]{FrembsChungOkay2022}. For completeness, we state its proof in App.~\ref{app: old lemmata}.

\begin{theorem}{\label{thm:K to H}}
    Let $M,M'\in K$ such that $M^p=M'^p=\one$ and $[M,M']\in Z(K)$, then $\phi^K_H(MM')=\phi^K_H(M)\phi^K_H(M')$.
\end{theorem}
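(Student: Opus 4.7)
The approach is to follow the strategy of Thm.~1 in Ref.~\cite{FrembsChungOkay2022}, which established an analogous multiplicativity result for the map $\phi_1$ on the larger groups $K_Q$. The map $\phi^K_H$ defined in Def.~\ref{def: phi^K_H} specialises the previous construction to the $p$-torsion subgroup $Q = T_{(p)}$: on diagonal elements it extracts from each exponent polynomial $\mf{\sum_{j=0}^{p-1} a_j q^j}$ the constant term $a_0$ together with the sum $\sum_{j=1}^{p-1} a_j$ of the higher-degree coefficients, and the recursive third case ensures the extension to shift parts $b > 1$ is consistent with the abelian structure of the cyclic subgroup generated by an element of shift part $b$.

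The key steps would be organised as a case analysis on the shift parts $b, b'$, where $M = S_\xi X^b$ and $M' = S_{\xi'} X^{b'}$. When $b = b' = 0$, both elements lie in the abelian group $T_{(p)}$, and the claim reduces to additivity of $\phi^{T_{(p)}}_H$ in the exponent polynomials, which is immediate since extracting $a_0$ and $\sum_{j\geq 1} a_j$ are linear functionals of the coefficient vector. When exactly one of $b, b'$ is nonzero, Lm.~\ref{lem: comm_tildeK} (equivalently Lm.~4 of Ref.~\cite{FrembsChungOkay2022}) forces the diagonal factor into $H \cap T_{(p)}$, on which $\phi^{T_{(p)}}_H$ acts as the identity; multiplicativity then reduces to direct computation in $H$. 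When both $b, b' \neq 0$, the recursive case of Def.~\ref{def: phi^K_H} lets us reduce to the subcase $b = b' = 1$ after raising to appropriate powers, and we handle $MM' = S_{\xi + X^b \cdot \xi'} X^{b+b'}$ by re-applying the recursive definition.

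The main obstacle lies in this final case, since the product rule $S_\xi X^b \cdot S_{\xi'} X^{b'} = S_{\xi + X^b \cdot \xi'} X^{b+b'}$ yields an exponent polynomial $\xi(q) + \xi'(q-b)$ whose coefficients are shifted combinations of those of $\xi$ and $\xi'$. Showing that $\phi^{T_{(p)}}_H$ commutes with this operation---that is, that the constant-plus-total-linear projection of the shifted sum equals the product of the individual projections modulo the image in $H$---is the technical heart of the argument. This is precisely where the $p$-torsion constraints $M^p = M'^p = \one$ are used: as in Ref.~\cite{FrembsChungOkay2022}, they impose strong restrictions on the higher-degree coefficients $a_j$ with $j \geq 2$ and force the polynomial-shift contributions to lie in the kernel of the projection, so that the multiplicativity survives. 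Once this is verified, the theorem follows directly, and the reduction to Ref.~\cite{FrembsChungOkay2022} makes most of the technical work a citation rather than a fresh computation.
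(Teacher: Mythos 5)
Your case division ($b=b'=0$; exactly one nonzero; both nonzero) matches the structure of the paper's proof, and the first two cases are handled essentially as in the paper (Case 1 is Prop.~\ref{prop:tp to h, homo, z_hat invariant}; Case 2 combines Lm.~4 of Ref.~\cite{FrembsChungOkay2022} with Lm.~\ref{lem: pull chi out in K} and the fact that $\phi^K_H$ preserves the commutator phase). The problem is your third case, which you yourself flag as the ``technical heart'' but then resolve by an incorrect appeal to $p$-torsion. The claim that the constraints $M^p=M'^p=\one$ ``impose strong restrictions on the higher-degree coefficients $a_j$'' and ``force the polynomial-shift contributions to lie in the kernel of the projection'' is not right: for a diagonal element, $p$-torsion is simply membership in $T_{(p)}$ (Thm.~3 of Ref.~\cite{FrembsChungOkay2022}) and places no restriction at all on the coefficients $a_j$ beyond $a_j\in\zz_p$, and for generic $\xi,\xi'$ the projection $\phi^{T_{(p)}}_H$ does \emph{not} commute with the shifted-sum operation $\xi+\sigma^b\cdot\xi'$. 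Likewise, ``raising to appropriate powers'' does not reduce two elements with different shift parts to a common $b=b'=1$ subcase unless you already know they sit in the same cyclic subgroup.

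The ingredient that actually closes this case is the commutation hypothesis $[M,M']\in Z(K)$, fed through Lm.~4~(3) of Ref.~\cite{FrembsChungOkay2022} (equivalently Lm.~\ref{lem: comm_tildeK}~(3) here): two non-diagonal elements of $K$ that commute up to a central phase must satisfy $M=S_\chi M'^{\,y}$ for some $S_\chi\in\hat{\Z}_p\subset H\cap T$ and $y\in\zz_p$. This rigidity is what makes the product tractable: $MM'=S_\chi(\xi,\sigma)^{yb+b}$ is a power of a single generator times a factor that pulls out of $\phi^K_H$ by Lm.~\ref{lem: pull chi out in K}, and multiplicativity is then immediate from the way the recursive case of Def.~\ref{def: phi^K_H} defines $\phi^K_H$ on powers (this is the content of Lm.~\ref{lem: M,N not in T}). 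Without invoking this structural consequence of the commutation condition, the computation you propose for general $\xi,\xi'$ would not go through, so as written your argument has a genuine gap precisely at the step you identify as essential.
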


\begin{proof}
    See App.~\ref{app: old lemmata}.
\end{proof}

\begin{definition}\label{def: phi^L_H}
    Let $\phi^L_{H}:L\to H$ be the map defined by its action on elements $M=(\xi,\sigma)\in L$ as follows:
    \[
    \phi^L_H(\xi,\sigma)
    =\begin{cases}
        \phi^{T_{(p)}}_H(\tau\cdot (\xi,\id))  &  \text{ if $\sigma=\id$ and $\exists\  \tau\in Im(\phi^N_{S_p})$ such that $\tau\cdot(\xi,\id)\in H$} \\
        \mf{\phi^{T_{(p)}}_H(\xi,\id)} &  \text{ if $\sigma=\id$ and $\not\exists\tau \in Im(\phi^N_{S_p})$ such that $\tau\cdot(\xi,\id)\in H$ }\\
        \phi^K_H(\phi^N_{S_p}(\xi,\sigma)\cdot (\xi,\sigma) )&  \text{ if $\sigma\neq\id$ and $\sigma^p=\id$}\\
        1&  \text{ otherwise}
    \end{cases}
    \]
\end{definition}

The next lemma shows that $\phi^L_H$ is well-defined, that is, acts consistently on elements $T_{(p)}<L$.

\begin{lemma}{\label{lem: action invariant}}
    Let $M\in T_{(p)}$ and $\tau\in Im(\phi^N_{S_p})$. Then the following relations hold:
    \begin{enumerate}
        \item $\phi^{T_{(p)}}_H(\tau\cdot M)=\phi^{T_{(p)}}_H(M)$
        \item $\phi^L_H(M)=\phi^{T_{(p)}}_H(M)=\phi^K_H(M)$
        \item $\phi^L_H(\tau\cdot M)=\phi^L_H( M)$
    \end{enumerate}
\end{lemma}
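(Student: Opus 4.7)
The plan is to prove the three claims in order, with (1) as the substantive step and (2), (3) following by unpacking the definitions.

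The first thing to do is reformulate $\phi^{T_{(p)}}_H$ more geometrically: writing $\xi(q) = \omega^{\sum_{j=0}^{p-1} a_j q^j}$ and $\chi(q) = \omega^{a_0 + (\sum_{j \geq 1} a_j) q}$, direct evaluation gives $\chi(0) = \omega^{a_0} = \xi(0)$ and $\chi(1) = \omega^{\sum_j a_j} = \xi(1)$. Hence $\phi^{T_{(p)}}_H(M_\xi)$ is uniquely characterised as the unique element $M_\chi \in H \cap T_{(p)}$ (a linear phase gate) whose phase function $\chi$ agrees with $\xi$ at the two points $q = 0$ and $q = 1$. In particular, any two elements $M_\xi, M_{\xi'} \in T_{(p)}$ satisfying $\xi(0) = \xi'(0)$ and $\xi(1) = \xi'(1)$ have the same image under $\phi^{T_{(p)}}_H$.

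Claim (1) then reduces to the observation, extracted from Def.~\ref{defn: tildeK_sp}, that every $\tau \in \mathrm{Im}(\phi^N_{S_p})$ fixes $0$ and $1$. Indeed, by Lm.~\ref{lem: sp generator} the generators $g_i$ of the $\zz_p$-subgroups $G_i \leq S_p$ are normalised to the cycle form $(0\, 1\, a_2^{(i)}\, \cdots\, a_{p-1}^{(i)})$ with $a_k^{(i)} \in \{2, \ldots, p-1\}$, and the conjugating permutations $\tau_i$ in Def.~\ref{defn: tildeK_sp} are accordingly products of transpositions supported on $\{2, \ldots, p-1\}$; hence $\tau_i(0) = 0$ and $\tau_i(1) = 1$. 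Using the conjugation convention $\tau \cdot M_\xi = M_{\xi \circ \tau}$, the new phase function $\xi \circ \tau$ agrees with $\xi$ at $q = 0, 1$, so claim (1) follows from the uniqueness observation above.

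Claims (2) and (3) then unfold by definition-chasing. The equality $\phi^{T_{(p)}}_H(M) = \phi^K_H(M)$ on $T_{(p)}$ is immediate from the $b = 0$ clause of Def.~\ref{def: phi^K_H}. The equality $\phi^L_H(M) = \phi^{T_{(p)}}_H(M)$ splits according to the first two clauses of Def.~\ref{def: phi^L_H}: if no compatible $\tau$ exists, it holds by definition; if some $\tau \in \mathrm{Im}(\phi^N_{S_p})$ with $\tau \cdot M \in H$ exists, then $\phi^L_H(M) = \phi^{T_{(p)}}_H(\tau \cdot M) = \phi^{T_{(p)}}_H(M)$ by (1). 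For (3), since $M \in T_{(p)}$ implies $\tau \cdot M \in T_{(p)}$, applying (2) on both sides and (1) in the middle yields $\phi^L_H(\tau \cdot M) = \phi^{T_{(p)}}_H(\tau \cdot M) = \phi^{T_{(p)}}_H(M) = \phi^L_H(M)$.

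The only real obstacle is recognising (1) as a compatibility statement between two independent conventions: the choice of the two evaluation points $q = 0, 1$ used to extract a linear representative in $\phi^{T_{(p)}}_H$, and the normal form $(0\, 1\, a_2^{(i)}\, \cdots)$ chosen for the $\zz_p$-subgroups of $S_p$ in Def.~\ref{defn: tildeK_sp}. Once these two choices are seen to match---both singling out $\{0, 1\} \subset \zz_p$---everything else is formal bookkeeping over the case splits.
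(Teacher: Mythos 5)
Your proposal is correct and follows essentially the same route as the paper's proof: it rewrites $\phi^{T_{(p)}}_H(M_\xi)$ as the linear phase gate determined by $\xi(0)$ and $\xi(1)$, notes these values are preserved under $\tau\in\mathrm{Im}(\phi^N_{S_p})$, and then derives (2) and (3) by the same case analysis of Def.~\ref{def: phi^K_H} and Def.~\ref{def: phi^L_H}. The only difference is that you make explicit the step the paper labels ``clearly,'' namely that the conjugating permutations $\tau_i$ from Lm.~\ref{lem: sp generator} are supported on $\{2,\dots,p-1\}$ and hence fix $0$ and $1$ --- a worthwhile clarification.
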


\begin{proof}
    (1): Let $M=M_\xi=diag(w^{x_0},...,w^{x_{p-1}})\in T_{(p)}$ where \mf{$x_q:=\sum_{j=0}^{p-1} a_jq^j$}. For $q=0$ and $q=1$, we find $x_0=a_0$ and $x_1=a_0+\cdots+a_{p-1}$, hence, $x_1-x_0=a_1+\cdots+a_{p-1}$. By Def.~\ref{def: phi^L_H}, $\phi^{T_{(p)}}_H(M_\xi)=M_{\chi}$ \mf{where $\chi(q)=\omega^{a_0+\left(\sum_{j=1}^{p-1}a_j\right)q}=\omega^{x_0+(x_1-x_0)q}$}. Clearly, $M_\chi$ thus remains unchanged under the action of $\mathrm{Im}(\phi^N_{S_p})$.
    
    (2): Since $M=(\xi,\id)\in T_{(p)}$, this follows immediately from (i) for the first case in Def.~\ref{def: phi^L_H} (that is, if there exists $\tau\in \mathrm{Im}(\phi^N_{S_p})$ with $\tau\cdot(\xi,\sigma)\in H$), and trivially in the second case. \mf{Moreover, by Def.~\ref{def: phi^K_H}, $\phi^K_H(M)=\phi^{T_{(p)}}_H(M)$.}
    
    (3): Combining (1) and (2), we have $\phi^L_H(\tau\cdot M)=\phi^{T_{(p)}}_H(\tau\cdot M)=\phi^{T_{(p)}}_H(M)=\phi^L_H(M)$.
\end{proof}

\begin{rmk}
    $\phi^{T_{(p)}}_H$ in Def.~\ref{def: phi^K_H} replaces our previous `projection maps' $P$ and $R$ in \cite[Def.~2]{FrembsChungOkay2022}. Indeed, if we use $P$ (or $R$), Lm.~\ref{lem: action invariant} (1) will not hold. For example, consider $S_{\xi_1},S_{\xi_2}\in T_{(5)}$ with $\xi_1(q)=\omega^{q+2q^2+3q^3}$ and $\xi_2(q)=\omega^{3q+4q^2+4q^3}$, that is, $S_{\xi_1}=diag(1,\omega,\omega^4,\omega^2,\omega^3)$ and $S_{\xi_2}=diag(1,\omega,\omega^4,\omega^3,\omega^2)$. Observe that $(4\, 5)\cdot S_{\xi_1}=S_{\xi_2}$ where $(4\, 5)\in Im(\phi^{\tilde{K}}_{S_5})$. By definition of $P$, we compute $P(\xi_1)(q)=\omega^q$, yet $P(\xi_2)(q)=\omega^{3q}$.
\end{rmk}

\begin{proposition}{\label{prop:tp to h, homo, z_hat invariant}}
    $\phi^{T_{(p)}}_H$ is a homomorphism and $\phi^{T_{(p)}}_H(M)=M$ for all $M\in H\cap T_{(p)}$.
\end{proposition}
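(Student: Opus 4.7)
The plan is to verify both assertions directly from the explicit formula for $\phi^{T_{(p)}}_H$ given in Def.~\ref{def: phi^K_H}. The key observation is that the map is built out of two $\zz_p$-linear functionals of the polynomial exponent expansion of $\xi$, namely the constant term $a_0$ and the sum $\sum_{j=1}^{p-1} a_j$ of the higher-degree coefficients, which are then re-assembled into a diagonal matrix of the form $\omega^{a_0 + (\sum_{j\geq 1} a_j)q}$ lying in $H\cap T_{(p)}$.

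For the homomorphism property, I would first recall that multiplication in $T_{(p)}$ corresponds to pointwise multiplication of the characters $\xi:\zz_p\to U(1)$, which in turn lifts to coordinate-wise addition in $\zz_p^p$ of the exponent coefficients $(a_0,\dots,a_{p-1})$. Here one uses that, for $p$ prime, the monomials $\{1,q,\dots,q^{p-1}\}$ span the space of functions $\zz_p\to\zz_p$, so that each $\xi$ with $\xi^p=1$ admits a unique such polynomial expansion (this is \cite[Thm.~3]{FrembsChungOkay2022}). Applied to a product $M_\xi M_{\xi'}$ with coefficient vector $(a_j+a'_j)_{j}$, the image $\phi^{T_{(p)}}_H(M_\xi M_{\xi'})$ is diagonal with $(q,q)$-entry $\omega^{(a_0+a'_0)+(\sum_{j=1}^{p-1}(a_j+a'_j))q}$. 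Additivity of both extracted functionals then factorises this as $\phi^{T_{(p)}}_H(M_\xi)\,\phi^{T_{(p)}}_H(M_{\xi'})$.

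For the identity on $H\cap T_{(p)}$, I would invoke the explicit description of $H\cap T$ from the split exact sequence~(\ref{ses: H}), i.e. $H\cap T=\{S_\xi\mid \xi(q)=\omega^{a+bq},\ a,b\in\zz_p\}$, which already sits inside $T_{(p)}$ since $S_\xi^p=\one$. Each such $M=S_\xi$ has polynomial expansion supported on degrees $j\in\{0,1\}$, so the sum $\sum_{j=1}^{p-1} a_j$ collapses to $a_1$, and $\phi^{T_{(p)}}_H(M)$ becomes $M_\chi$ with $\chi(q)=\omega^{a_0+a_1 q}=\xi(q)$, i.e. the identity.

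I do not expect any real technical obstacle: the content of the proposition amounts to recognising that (i) the two exponent functionals used to define $\phi^{T_{(p)}}_H$ are $\zz_p$-linear on the coefficient lattice $\zz_p^p$, and (ii) that their combination acts as the identity on the degree-$\leq 1$ sublattice corresponding to $H\cap T_{(p)}$. The only subtlety worth flagging is well-definedness of the map itself, which relies on uniqueness of the polynomial expansion of $\xi$ established in \cite[Thm.~3]{FrembsChungOkay2022}; this is precisely where the prime-$p$ hypothesis enters.
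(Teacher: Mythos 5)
Your proof is correct and follows essentially the same route as the paper's: both verify the homomorphism property by noting that multiplication in $T_{(p)}$ adds the coefficient vectors of the polynomial exponent expansions (whose existence and uniqueness come from \cite[Thm.~3]{FrembsChungOkay2022}), so the two linear functionals $a_0$ and $\sum_{j\geq1}a_j$ defining $\phi^{T_{(p)}}_H$ are additive, and both check the identity on $H\cap T_{(p)}$ by observing that such elements have expansion supported in degrees $0$ and $1$. Your remark on well-definedness via uniqueness of the expansion is a worthwhile addition that the paper leaves implicit.
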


\begin{proof}
    Let $M_\xi,M_{\xi'}\in T_{(p)}$ and thus $\xi(q)=\omega^{\sum_{j=0}^{p-1} a_jq^j}$,  $\xi'(q)=\omega^{\sum_{j=0}^{p-1} a'_jq^j}$ for $a_j,a'_j\in\Z_p$. Consequently, we compute $\xi(q)\xi'(q)=\omega^{\sum_{j=0}^{p-1}(a_j+a'_j)q^j}$ and thus $\phi^{T_{(p)}}_H(M_\xi M_{\xi'})=\phi^{T_{(p)}}_H(M_\xi)\phi^{T_{(p)}}_H(M_{\xi'})$. 
    
    If $M_\xi\in H \cap T$, then $\xi(q)=\omega^{a+bq}$, where $a,b\in\Z_p$. By Def.~\ref{def: phi^L_H}, $\phi^{T_{(p)}}_H(M_\xi)=M_\xi$.
\end{proof}

\begin{lemma}{\label{lem: s_chi pull out}}
     Let $M\in L$ and $S_\chi\in H\cap T_{(p)}$, then
     $\phi^L_H(S_\chi M)=S_\chi\phi^L_H(M)$.
\end{lemma}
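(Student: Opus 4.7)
My plan is to prove the lemma by case analysis on $M = (\xi, \sigma) \in L$, mirroring the four branches of Def.~\ref{def: phi^L_H}. The recurring structural observation I would use is that left multiplication by $S_\chi \in T_{(p)}$ preserves the permutation part, $S_\chi M = (\chi + \xi, \sigma)$, so $M$ and $S_\chi M$ always fall into the same branch; in particular, whenever the third branch applies, the same $\tau := \phi^N_{S_p}(M) = \phi^N_{S_p}(S_\chi M)$ is used to evaluate $\phi^L_H$ on both. In the case $\sigma = \id$, both $M$ and $S_\chi M$ lie in $T_{(p)}$, and by Lm.~\ref{lem: action invariant} (2) it suffices to show $\phi^{T_{(p)}}_H(S_\chi M) = S_\chi \phi^{T_{(p)}}_H(M)$; this is immediate from Prop.~\ref{prop:tp to h, homo, z_hat invariant}, which provides both the homomorphism property of $\phi^{T_{(p)}}_H$ and $\phi^{T_{(p)}}_H(S_\chi) = S_\chi$. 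The last branch of the definition ($\sigma^p \neq \id$) lies outside the $p$-torsion setting relevant here, and the identity is vacuous.

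The substantive case is $\sigma \neq \id$ with $\sigma^p = \id$. The plan is to compute
\[
    \tau \cdot (S_\chi M) = (\tau \cdot S_\chi)(\tau \cdot M) = S_{\tau \cdot \chi}\cdot(\tau \cdot M)\; ,
\]
noting $S_{\tau \cdot \chi} \in T_{(p)} \subset K$ and $\tau \cdot M \in K$ by Obv.~\ref{obv:tildeK to K}, and then to split $\phi^K_H$ across this product via Thm.~\ref{thm:K to H}. Once the factorisation $\phi^K_H(S_{\tau \cdot \chi}(\tau \cdot M)) = \phi^K_H(S_{\tau \cdot \chi})\phi^K_H(\tau \cdot M)$ is in hand, Lm.~\ref{lem: action invariant} (1)-(2) and Prop.~\ref{prop:tp to h, homo, z_hat invariant} collapse $\phi^K_H(S_{\tau \cdot \chi}) = \phi^{T_{(p)}}_H(\tau \cdot S_\chi) = \phi^{T_{(p)}}_H(S_\chi) = S_\chi$, yielding the desired identity $\phi^L_H(S_\chi M) = S_\chi \phi^L_H(M)$.

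The hard part will be verifying the hypotheses of Thm.~\ref{thm:K to H}, specifically the commutativity condition $[\tau \cdot S_\chi, \tau \cdot M] \in Z(K)$. By Obv.~\ref{obv_comm} this reduces to $[S_\chi, M] \in Z(N)$, i.e.\ that $S_\chi$ and $M$ commute up to a central phase. This is precisely the intrinsic condition under which $S_\chi$ and $M$ sit inside a common abelian $p$-torsion subgroup of $L$, which is the setting in which the lemma is designed to apply as a stepping stone to constructing $\phi^L_H$ as a homomorphism in $p$-torsion abelian subgroups. The $p$-torsion requirement of Thm.~\ref{thm:K to H} for the two factors then follows from $S_\chi^p = \one$ and the torsion assumption on $M$ within such a subgroup.
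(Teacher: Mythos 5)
Your overall route is the same as the paper's: split on whether $M\in T_{(p)}$, handle the diagonal case via Prop.~\ref{prop:tp to h, homo, z_hat invariant} and Lm.~\ref{lem: action invariant}, and in the non-diagonal case conjugate by $\tau=\phi^N_{S_p}(M)=\phi^N_{S_p}(S_\chi M)$, factor $\phi^K_H$ across $(\tau\cdot S_\chi)(\tau\cdot M)$ via Thm.~\ref{thm:K to H}, and collapse $\phi^{T_{(p)}}_H(\tau\cdot S_\chi)=S_\chi$. The diagonal case is complete and correct.

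The gap sits exactly where you flag it, and you do not close it. Thm.~\ref{thm:K to H} requires $[\tau\cdot S_\chi,\tau\cdot M]\in Z(K)$, equivalently $[S_\chi,M]\in Z(N)$ by Obv.~\ref{obv_comm}; but the lemma is stated for arbitrary $M=(\xi,\sigma)\in L$ and $S_\chi\in H\cap T_{(p)}$ with no commutation hypothesis, and the condition genuinely fails in general: $[S_\chi,(\xi,\sigma)]=S_{\chi-\sigma\cdot\chi}$ has $q$-th entry $\omega^{b(q-\sigma(q))}$ for $\chi(q)=\omega^{a+bq}$, which is central only when $\sigma$ is a power of the $p$-cycle represented by $X$ (or $b=0$). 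Asserting that the hypothesis holds because this is \emph{the setting in which the lemma is designed to apply} imports an assumption absent from the statement; as written you prove only a conditional version. Nor can you fall back on Lm.~\ref{lem: pull chi out in K}: that lemma needs the diagonal factor to lie in $H\cap T$, and $\tau\cdot S_\chi=S_{\chi\circ\tau}$ is affine only if $\tau=\id$ or $b=0$. (The paper's own one-line proof invokes Thm.~\ref{thm:K to H} at the same step without checking this hypothesis, so you have put your finger on a real soft spot; to repair it one must either add the hypothesis $[S_\chi,M]\in Z(N)$ --- which covers the places the lemma is actually used, namely $S_\chi\in Z(L)$ a scalar in the well-definedness argument for $\phi^{L^{\otimes n}}_{H^{\otimes n}}$, and $M\in K$ in Lm.~\ref{lem: M,N not in T} --- or prove the factorisation $\phi^K_H(S_{\chi\circ\tau}N)=\phi^{T_{(p)}}_H(S_{\chi\circ\tau})\phi^K_H(N)$ for $N\in K$ directly, which does not follow from the existing lemmas.) Finally, your claim that the last branch of Def.~\ref{def: phi^L_H} is vacuous is not right: for $\sigma^p\neq\id$ both sides are defined and the identity would read $1=S_\chi$, so that case also needs to be excluded or addressed.
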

\begin{proof}
    If $M\in T_{(p)}$ this holds by Prop.~\ref{prop:tp to h, homo, z_hat invariant}. If $M\not\in T_{(p)}$, let $\tau=\phi^N_{S_p}(S_\chi M)$. By Lm.~\ref{lem: action invariant} (3), Prop.~\ref{prop:tp to h, homo, z_hat invariant} and Thm.~\ref{thm:K to H},
    \begin{align*}
        \phi^L_H(S_\chi M)
        =\phi^K_H(\tau\cdot (S_\chi M)))
        &=\phi^K_H((\tau\cdot S_\chi)(\tau\cdot M))\\
        &=\phi^K_H(\tau\cdot S_\chi)\phi^K_H(\tau\cdot M)
        =\phi^{T_{(p)}}_H(\tau\cdot S_\chi)\phi^L_H(M)
        =S_\chi\phi^L_H(M)\; .\qedhere
    \end{align*}
\end{proof}

\begin{theorem}{\label{thm: main L to H}}
    Let $M,M'\in L$ such that $[M,M']\in Z(K)$ and $M^p_1=M^p_2=\one$, then
    $$\phi^L_H(MM')=\phi^L_H(M)\phi^L_H(M')\; .$$
\end{theorem}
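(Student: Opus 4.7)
The plan is to proceed by case analysis on the permutation parts $\sigma,\sigma'$ of $M=(\xi,\sigma)$ and $M'=(\xi',\sigma')$, mirroring the classification of commutation-up-to-phase relations in Lm.~\ref{lem: comm_tildeK}. Because $M^p=M'^p=\one$, each of $\sigma,\sigma'$ is either $\id$ or a $p$-cycle in $S_p$, and in particular satisfies $\sigma^p=\sigma'^p=\id$. This ensures the "otherwise" branch of Def.~\ref{def: phi^L_H} never applies to $M$, $M'$, or $MM'$.

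When $\sigma=\sigma'=\id$, both $M,M'\in T_{(p)}$ and the claim reduces to Prop.~\ref{prop:tp to h, homo, z_hat invariant}. For the mixed case, say $\sigma=\id$ and $\sigma'\neq\id$ (the symmetric case is identical), set $\tau:=\phi^N_{S_p}(M')=\phi^N_{S_p}(MM')$, where the second equality is Observation~\ref{obv:tildeK to K}(2). By Lm.~\ref{lem: comm_tildeK}(2) one has $\tau\cdot M\in H\cap T_{(p)}$, and combining Lm.~\ref{lem: action invariant} with Prop.~\ref{prop:tp to h, homo, z_hat invariant} then gives $\phi^L_H(M)=\tau\cdot M$. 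Unfolding the non-diagonal branch of the definition, $\phi^L_H(MM')=\phi^K_H((\tau\cdot M)(\tau\cdot M'))$, and since $\tau\cdot M\in H\cap T_{(p)}$ and $\tau\cdot M'\in K$ commute up to phase (conjugation preserves this by Observation~\ref{obv_comm}), Thm.~\ref{thm:K to H} factorises the right-hand side as $(\tau\cdot M)\,\phi^K_H(\tau\cdot M')=\phi^L_H(M)\,\phi^L_H(M')$.

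In the remaining case $\sigma,\sigma'\neq\id$, Lm.~\ref{lem: comm_tildeK}(3) supplies a common $\tau=\phi^N_{S_p}(M)=\phi^N_{S_p}(M')$ with $\tau\cdot M,\tau\cdot M'\in K$. If $\sigma\sigma'\neq\id$, then $\phi^N_{S_p}(MM')=\tau$ as well (both permutations lie in the same $\Z_p$-subgroup of $S_p$, to which Def.~\ref{defn: tildeK_sp} assigns a unique conjugating element), so Thm.~\ref{thm:K to H} applied inside $K$ yields the homomorphism property directly. If instead $\sigma\sigma'=\id$, then $MM'\in T_{(p)}$ falls under the diagonal branch of Def.~\ref{def: phi^L_H}: I would argue $\phi^L_H(MM')=\phi^{T_{(p)}}_H(MM')=\phi^{T_{(p)}}_H(\tau\cdot (MM'))$ by Lm.~\ref{lem: action invariant}(1), and the right-hand side equals $\phi^K_H(\tau\cdot (MM'))$ by Lm.~\ref{lem: action invariant}(2), which then splits as $\phi^K_H(\tau\cdot M)\,\phi^K_H(\tau\cdot M')$ via Thm.~\ref{thm:K to H}.

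The main obstacle is precisely the boundary subcase $\sigma\sigma'=\id$, in which the product of two non-diagonal elements lands in $T_{(p)}$ and thus shifts from the "permutation" branch of Def.~\ref{def: phi^L_H} to the "diagonal" branch; one must check that these two branches produce compatible values. This is exactly the role of Lm.~\ref{lem: action invariant}, which guarantees that $\phi^{T_{(p)}}_H$ is invariant under the action of any $\tau\in\mathrm{Im}(\phi^N_{S_p})$ and coincides with $\phi^K_H$ on diagonal elements. A secondary bookkeeping concern is that the selector $\phi^N_{S_p}$ is not a group homomorphism on $N$; one must verify case-by-case that it returns the same conjugating permutation for $M$, $M'$, and $MM'$ whenever their $S_p$-parts lie in a common $\Z_p$-subgroup, which follows from Def.~\ref{defn: tildeK_sp}.
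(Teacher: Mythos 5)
Your proposal is correct and follows essentially the same route as the paper's proof: the identical case split on whether $M$, $M'$ are diagonal, the same use of Lm.~\ref{lem: comm_tildeK} to extract a common conjugating permutation $\tau$, reduction to Thm.~\ref{thm:K to H} inside $K$, and the same handling of the boundary subcase $\sigma\sigma'=\id$ via Lm.~\ref{lem: action invariant}. You also correctly identify the consistency of the diagonal and non-diagonal branches of Def.~\ref{def: phi^L_H} as the main point requiring care, which is precisely the role Lm.~\ref{lem: action invariant} plays in the paper.
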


\begin{proof}
    We have the following cases to consider:
    \begin{enumerate}
    \item $M,M'\in T_{(p)}$
    \item $M\in T_{(p)}$ and $M'\not\in T_{(p)}$ (analogously, $M\not\in T_{(p)}$ and $M'\in T_{(p)}$)
    \item $M,M'\not\in T_{(p)}$
    \end{enumerate}
    (1): Since $M,M'\in T_{(p)}$, also $MM'\in T_{(p)}$. Hence, by Lm.~\ref{lem: action invariant} and Prop.~\ref{prop:tp to h, homo, z_hat invariant}, we find
    \begin{align*}
        \phi^L_H(MM')
        =\phi^{T_{(p)}}_H(MM')
        =\phi^{T_{(p)}}_H(M)\phi^{T_{(p)}}_H(M')
        =\phi^L_H(M)\phi^L_H(M')\; .
    \end{align*}
    (2): Let $\tau=\phi^N_{S_p}(M')$ such that $\tau\cdot M'\in H$. By Lm.~\ref{lem: comm_tildeK}, also $\tau\cdot M\in H\cap T_{(p)}\leq H$. By Lm.~\ref{lem: action invariant} and Thm.~\ref{thm:K to H},
    \begin{align*}
        \phi^L_H(MM')
        &=\phi^K_H(\tau\cdot (MM'))\\
        &=\phi^K_H((\tau\cdot M)(\tau\cdot M'))\\
        &=\phi^K_H(\tau\cdot M)\phi^K_H(\tau\cdot M')\\
        &=\phi^{T_{(p)}}_H(\tau\cdot M)\phi^L_H(M')\\
        &=\phi^{T_{(p)}}_H(M)\phi^L_H(M')\\
        &=\phi^L_H(M)\phi^L_H(M')\; .
    \end{align*}
    (3): By Lm.~\ref{lem: comm_tildeK}, $\phi^N_{S_p}(M)=\phi^N_{S_p}(M')=:\tau$. If $MM'\not\in T_{(p)}$, then also $\tau=\phi^N_{S_p}(MM')$, hence,
    \begin{align*}
        \phi^L_H(MM')
        =\phi^K_H(\tau\cdot(MM'))=\phi^K_H((\tau\cdot M)(\tau\cdot M'))
        =\phi^K_H(\tau\cdot M)\phi^K_H(\tau\cdot M')
        =\phi^L_H(M)\phi^L_H(M')\; .
    \end{align*}
    If $MM'\in T_{(p)}$, then applying Lm.~\ref{lem: action invariant} (1) and Thm.~\ref{thm:K to H} again, we obtain
    \begin{align*}
        \phi^L_H(MN)
        =\phi^L_H(\tau\cdot (MN))
        =\phi^K_H(\tau\cdot (MN))
        &=\phi^K_H((\tau\cdot M)(\tau\cdot N))\\
        &=\phi^K_H(\tau\cdot M)\phi^K_H(\tau\cdot N)
        =\phi^L_H(M)\phi^L_H(N)\; .\qedhere
    \end{align*}
\end{proof}

Finally, we extend $\phi^L_H$ to multiple tensor copies, $\phi^{L^{\otimes n}}_{H^{\otimes n}}:L^{\otimes n}\to H^{\otimes n}$, defined by $\bigotimes_{i=1}^nM_i\mapsto \bigotimes_{i=1}^n\phi^L_H(M_i)$. Note that $\phi^{L^{\otimes n}}_{H^{\otimes n}}$ is well-defined: if $M_1\otimes\cdots\otimes M_n=M'_1\otimes \cdots\otimes M'_n$ and thus $M'_i=c_iM_i$ for $c_i\in Z(L)$, $i\in[n]$ with $\prod_{i=1}^n c_i=1$, then one quickly checks with Lm.\ref{lem: s_chi pull out} that
\begin{align*}
    \phi^{L^{\otimes n}}_{H^{\otimes n}}(M'_1\otimes \cdots \otimes M'_n)
    &=\phi^L_H(c_1M_1)\otimes\cdots\otimes \phi^L_H(c_nM_n)\\
    &=c_1\phi^L_H(M_1)\otimes\cdots\otimes c_n\phi^L_H(M_n)
    =\phi^{L^{\otimes n}}_{H^{\otimes n}}(M_1\otimes\cdots\otimes M_n)\; .\qedhere
\end{align*}

\begin{theorem}{\label{thm: L to H}}
    Let $M,M'\in L^{\otimes n}$ such that $[M,M']=\one$ and $M^p=M'^p=\one$, then
    \begin{align*}
        \phi^{L^{\otimes n}}_{H^{\otimes n}}(MM'
        )
        =\phi^{L^{\otimes n}}_{H^{\otimes n}}(M)\phi^{L^{\otimes n}}_{H^{\otimes n}}(M')\; .
    \end{align*}
\end{theorem}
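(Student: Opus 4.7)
The plan is to reduce the $n$-fold tensor statement to Thm.~\ref{thm: main L to H} applied slot by slot. First, fix any decomposition $M=\bigotimes_{k=1}^n M_k$ and $M'=\bigotimes_{k=1}^n M'_k$ with $M_k,M'_k\in L$; by the well-definedness argument established immediately after Thm.~\ref{thm: main L to H} (via Lm.~\ref{lem: s_chi pull out}), the value of $\phi^{L^{\otimes n}}_{H^{\otimes n}}$ on any element of $L^{\otimes n}$ is independent of the chosen decomposition, so it suffices to prove the multiplicative identity for one such choice.

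The main work is to propagate the single-factor hypotheses of Thm.~\ref{thm: main L to H} to each pair $(M_k,M'_k)$. Writing $M_k=(\xi_k,\sigma_k)$ in the notation of App.~\ref{app: proof of main theorem}, the global hypothesis $M^p=\one$ yields $\bigotimes_k M_k^p=\one$, and hence $M_k^p=c_k\one$ with $\prod_k c_k=1$. The permutation part of $M_k^p$ equals $\sigma_k^p$, so scalarness forces $\sigma_k^p=\id$, and since $p$ is prime $\sigma_k$ has order $1$ or $p$. When $\sigma_k=\id$, $M_k=S_{\xi_k}$ with $\xi_k\in T_{(p)}$, giving $M_k^p=\one$ automatically; when $\sigma_k$ is a $p$-cycle, a short computation in $L$ yields $M_k^p=\det(S_{\xi_k})\,\one$, and this together with $M_k\in L<SU(p)$ and $\mathrm{sgn}(\sigma_k)=1$ (since $p$-cycles in $S_p$ are even for $p$ odd) pins $\det(S_{\xi_k})=1$, so again $M_k^p=\one$. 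Analogously $(M'_k)^p=\one$. Likewise $[M,M']=\one$ implies $M_kM'_k=c_kM'_kM_k$ with central scalars $c_k$ of product $1$, and $L<SU(p)$ forces $c_k^p=1$, so $[M_k,M'_k]\in\langle\omega\one\rangle=Z(K)$.

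With both hypotheses of Thm.~\ref{thm: main L to H} verified for every $k$, its factor-wise conclusion $\phi^L_H(M_kM'_k)=\phi^L_H(M_k)\phi^L_H(M'_k)$ tensors to the desired identity:
\begin{equation*}
    \phi^{L^{\otimes n}}_{H^{\otimes n}}(MM')
    =\bigotimes_{k=1}^n\phi^L_H(M_kM'_k)
    =\bigotimes_{k=1}^n\phi^L_H(M_k)\,\phi^L_H(M'_k)
    =\phi^{L^{\otimes n}}_{H^{\otimes n}}(M)\,\phi^{L^{\otimes n}}_{H^{\otimes n}}(M')\; .
\end{equation*}

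The step I expect to be the main obstacle is precisely the local $p$-torsion reduction in the second paragraph: a priori, $M^p=\one$ only forces $M_k^p=c_k\one$ with $\prod_k c_k=1$, not $M_k^p=\one$ individually, and since $c_k^p=1$ this residual central phase cannot be absorbed by a central rescaling of $M_k$ (central rescalings satisfy $c^p=1$ as well). The plan circumvents this obstacle precisely because the conjunction of $L<SU(p)$ with $p$ odd converts the sign condition on $p$-cycles into the equality $c_k=1$; dropping either assumption (working in $N^{\otimes n}_{U(p)}(p)$ rather than $SU(p)^{\otimes n}$, or taking $p=2$) would instead require a refinement of $\phi^L_H$ that keeps track of residual local phases across tensor factors.
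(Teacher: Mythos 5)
Your proof is correct and follows the same route as the paper's: decompose into tensor factors and apply Thm.~\ref{thm: main L to H} slot by slot. You are in fact more careful than the paper, whose proof only records $[M_i,M'_i]\in Z(L)$ and silently assumes the local $p$-torsion hypothesis $M_i^p=\one$ of Thm.~\ref{thm: main L to H}; your observation that the global hypothesis forces $\sigma_k^p=\id$ and that any element of $L$ whose permutation part has order dividing $p$ is then automatically $p$-torsion (via $\det S_{\xi_k}=1$) closes that small gap.
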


\begin{proof}
    Let $M=\bigotimes_{i=1}^nM_i$ and $M'=\bigotimes_{i=1}^nM'_i$. Since $[M,M']=\one$, we have $[M_i,M'_i]\in Z(L)$. By Thm.~\ref{thm: main L to H},
    \begin{align*}
        \phi^{L^{\otimes n}}_{H^{\otimes n}}(MM')
        &=\bigotimes \phi^{L}_{H}(M_iM'_i)\\
        &=\bigotimes \left(\phi^{L}_{H}(M_i)\phi^{L}_{H}(M'_i)\right)\\
        &=\left(\bigotimes \phi^{L}_{H}(M_i)\right)\left(\bigotimes \phi^{L}_{H}(M'_i)\right)\\
        &=\phi^{L^{\otimes n}}_{H^{\otimes n}}(M)\phi^{L^{\otimes n}}_{H^{\otimes n}}(M')\; .\qedhere
    \end{align*}
\end{proof}

Thm.~\ref{thm: L to H} immediately implies the following special case of Thm.~\ref{thm: FCO-2}.

\begin{corollary}\label{cor: no qsol for L}
    Let $\Gamma$ be the solution group of a LCS over $\zz_p$ for $p$ odd prime. Then the LCS admits a quantum solution $\eta: \Gamma \ra L^{\otimes n}$ if and only if it admits a classical solution $\eta: \Gamma \ra \zz_p$.
\end{corollary}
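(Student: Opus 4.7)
The plan is to combine Thm.~\ref{thm: L to H} with Thm.~2 in Ref.~\cite{QassimWallman2020}: the former provides a homomorphism in $p$-torsion abelian subgroups $\phi^{L^{\otimes n}}_{H^{\otimes n}}:L^{\otimes n}\to H^{\otimes n}$, while the latter asserts that every quantum solution over $\zz_p$ valued in the Heisenberg-Weyl group $H^{\otimes n}$ admits a classical reduction. The converse direction is immediate, as any classical solution $\Gamma\to\zz_p$ lifts to $L^{\otimes n}$ via the central inclusion $a\mapsto \omega^a\one$.

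For the forward direction, let $\eta:\Gamma\to L^{\otimes n}$ be a quantum solution and set $\tilde{\eta}:=\phi^{L^{\otimes n}}_{H^{\otimes n}}\circ\eta$. I would verify that $\tilde{\eta}$ is itself a quantum solution, now valued in $H^{\otimes n}$. Three properties need checking. First, $p$-torsion: a case-by-case inspection of Def.~\ref{def: phi^L_H} shows that every image lies in $H$ with diagonal part a tuple of $p$-th roots of unity, so $\tilde{\eta}(x)^p=\one$. Second, commutativity: if $\eta(x)\eta(x')=\eta(x')\eta(x)$, applying $\phi^{L^{\otimes n}}_{H^{\otimes n}}$ to both sides and using Thm.~\ref{thm: L to H} yields $\tilde{\eta}(x)\tilde{\eta}(x')=\tilde{\eta}(x')\tilde{\eta}(x)$. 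Third, phase constraints: iterating Thm.~\ref{thm: L to H} across each commuting row of the LCS gives
\begin{align*}
    \prod_k \tilde{\eta}(x_k)^{A_{jk}}
    =\phi^{L^{\otimes n}}_{H^{\otimes n}}\!\left(\prod_k \eta(x_k)^{A_{jk}}\right)
    =\phi^{L^{\otimes n}}_{H^{\otimes n}}(\omega^{b_j}\one)
    =\omega^{b_j}\one\; ,
\end{align*}
where the last equality uses that central elements lie in $H^{\otimes n}\cap T^{\otimes n}_{(p)}$ and are fixed by Prop.~\ref{prop:tp to h, homo, z_hat invariant}.

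With $\tilde{\eta}:\Gamma\to H^{\otimes n}$ now a valid Heisenberg-Weyl quantum solution, Thm.~2 of Ref.~\cite{QassimWallman2020} directly yields the desired classical solution $\Gamma\to\zz_p$. The main (and essentially only) obstacle lies in the iterated application of Thm.~\ref{thm: L to H} in the third step: one must check that at each stage of the product over row $j$, the accumulated partial product still commutes with the next factor, so that the homomorphism property applies. This is guaranteed since the operators $\{\eta(x_k)\}_k$ in a row of the constraint matrix pairwise commute by definition of a quantum solution, and any product within a pairwise commuting subset commutes with every element of that subset.
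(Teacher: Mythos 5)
Your proof is correct and follows essentially the same route as the paper: compose the quantum solution with the homomorphism in $p$-torsion abelian subgroups $\phi^{L^{\otimes n}}_{H^{\otimes n}}$ from Thm.~\ref{thm: L to H} to obtain a quantum solution valued in $H^{\otimes n}$, then invoke Thm.~2 of Ref.~\cite{QassimWallman2020}. The only difference is that you verify the transfer of the quantum-solution axioms ($p$-torsion, commutativity, constraint satisfaction) by hand, whereas the paper outsources exactly this step to \cite[Cor.~4]{FrembsChungOkay2022}.
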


\begin{proof}
    Since $\phi^{L^{\otimes n}}_{H^{\otimes n}}$ is a homomorphism on $p$-torsion abelian subgroups by Thm.~\ref{thm: L to H}, a LCS admits a quantum solution $\eta:\Gamma\ra L^{\otimes n}$ if and only if it admits a quantum solution $\eta:\Gamma \ra H^{\otimes n}$ (see \cite[Cor.~4]{FrembsChungOkay2022}). Yet, every quantum solution $\eta:\Gamma\ra H^{\otimes n}$ is classical by \cite[Thm.~2]{QassimWallman2020}.
\end{proof}

\subsection{A homomorphism in $p$-torsion abelian subgroups $\phi^{N_Q^{\otimes n}}_{H^{\otimes n}}:N_Q^{\otimes n}\ra H^{\otimes n}$}

\mf{One may hope to use Thm.~\ref{thm: L to H} towards a proof of Thm.~\ref{thm: FCO-2}. This strategy works under the restriction to $T_{(p)}\leq Q\leq T_{(p^m)}$ for $m\in\mathbb{N}$, yet runs into problems beyond that.} We will write $N_Q\leq N$ for the group in
\begin{align}\label{ses: N_Q}
    1 \ra Q \ra N_Q \ra S_p \ra 1\; .
\end{align}

\begin{definition}\label{def: q homomorphism}
    Let $q:\zz_{p^m}\cong \left\langle exp\left(\frac{2\pi i}{p^m}\right)\right\rangle \to \zz_p\cong \left\langle exp\left(\frac{2\pi i}{p}\right)\right\rangle$ be the homomorphism defined by
    \begin{align*}
        exp\left(\frac{2n\pi i}{p^m}\right)&\mapsto exp\left(\frac{2n\pi i}{p}\right)\; .
    \end{align*}
    Then define the following maps on $N_Q$ in Eq.~(\ref{ses: N_Q}):
    \begin{align*}
        \phi^{N_Q}_{L}:N_Q&\to L &
        (diag(x_1,\cdots,x_p),\sigma)&\mapsto (diag(q(x_1),\cdots,q(x_p)),\sigma) \\
        \phi^{N_Q^{\otimes n}}_{L^{\otimes n}}:N_Q^{\otimes n}&\to L^{\otimes n} &
        \bigotimes_{i=1}^n M_i&\mapsto \bigotimes_{i=1}^n \phi^{N_Q}_{L}(M_i)
    \end{align*}
\end{definition}

\begin{lemma}\label{lm: reduction to L^n}
    $\phi^{N_Q}_{L}$ and $\phi^{N_Q^{\otimes n}}_{L^{\otimes n}}$ are homomorphisms.
\end{lemma}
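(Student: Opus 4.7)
The plan is to reduce both claims to the fact that $q:\zz_{p^m}\to\zz_p$ is a group homomorphism, and that the $S_p$-action on diagonal matrices, being a permutation of the diagonal entries, commutes with any entrywise map. First I would verify that $\phi^{N_Q}_L$ is well-defined, i.e. that the image actually lies in $L$. This requires checking two conditions on $(\mathrm{diag}(q(x_1),\cdots,q(x_p)),\sigma)$: that the diagonal lies in $T_{(p)}$ and that the determinant is $1$. The first holds because $q$ lands in the group of $p$-th roots of unity, so $q(x_i)^p=1$ for each $i$. The second holds because $\det$ of a diagonal is the product of its entries and $q$ is a homomorphism: $\prod_i q(x_i)=q\!\left(\prod_i x_i\right)=q(1)=1$, using $\det(D)=1$ for $(D,\sigma)\in N_Q$.

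Next I would verify the homomorphism property directly from the group law $(\xi,\sigma)(\chi,\tau)=(\xi+\sigma\cdot\chi,\sigma\tau)$ in $N$ (and its restriction to $N_Q$ and $L$). Given $(D_1,\sigma_1),(D_2,\sigma_2)\in N_Q$, their product is $(D_1(\sigma_1\cdot D_2),\sigma_1\sigma_2)$. Applying $\phi^{N_Q}_L$ and using that $q$ is applied entrywise and that $\sigma_1\cdot(-)$ merely permutes entries, I get
\begin{align*}
    \phi^{N_Q}_L\bigl((D_1,\sigma_1)(D_2,\sigma_2)\bigr)
    &=\bigl(q(D_1(\sigma_1\cdot D_2)),\sigma_1\sigma_2\bigr) \\
    &=\bigl(q(D_1)\,(\sigma_1\cdot q(D_2)),\sigma_1\sigma_2\bigr)
    =\phi^{N_Q}_L(D_1,\sigma_1)\,\phi^{N_Q}_L(D_2,\sigma_2),
\end{align*}
where the middle equality uses $q$'s homomorphism property entrywise together with the elementary identity $q(\sigma\cdot D)=\sigma\cdot q(D)$ (both sides permute the diagonal entries of $q(D)$ the same way).

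Finally, for $\phi^{N_Q^{\otimes n}}_{L^{\otimes n}}$ I would note that, because it is defined factorwise on simple tensors and the group operation in $N_Q^{\otimes n}$ respects the tensor decomposition, the tensor product of group homomorphisms is again a group homomorphism; well-definedness on all of $N_Q^{\otimes n}$ (not just simple tensors) can be checked by the same centre-invariance argument as for $\phi^{L^{\otimes n}}_{H^{\otimes n}}$ right after Thm.~\ref{thm: main L to H}. I do not foresee a real obstacle: the content of the lemma is essentially that $q$ is a homomorphism and the $S_p$-action commutes with entrywise maps; the only subtle point is the well-definedness of $q$ itself (different lifts $n,n'\in\zz$ with $n\equiv n'\pmod{p^m}$ must satisfy $n\equiv n'\pmod p$, which is immediate since $p\mid p^m$) and the stability of $Q$ under the $S_p$-action, which is implicit in writing down the split extension~(\ref{ses: N_Q}).
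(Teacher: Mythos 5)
Your proof is correct and follows essentially the same route as the paper's: a direct computation from the group law $(\xi,\sigma)(\chi,\tau)=(\xi+\sigma\cdot\chi,\sigma\tau)$, using that $q$ is a homomorphism applied entrywise and that entrywise maps commute with the permutation action on the diagonal, with the tensor case following factorwise. Your additional checks of well-definedness (image landing in $T_{(p)}$ with determinant one, and well-definedness of $q$ itself) are sound and simply make explicit what the paper leaves implicit.
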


\begin{proof}
    First, consider the map $\phi^{N_Q}_{L}$, and let $M=(diag(m_1,\cdots,m_p),\sigma)$ and $M'=(diag(m'_1,\cdots,m'_p),\tau)$. Then
    $$MM'=(diag(m_1+m'_{\sigma(1)},\cdots,m_p+m'_{\sigma(p)}),\sigma\tau)\; ,$$
    and since $q$ is a homomorphism, we have
    \begin{align*}
        \phi^{N_Q}_{L}(MM')
        &=\left(diag(q(m_1+m'_{\sigma(1)}),\cdots,q(m_p+m'_{\sigma(p)})),\sigma\tau\right)\\
        &=\left(diag(q(m_1)+q(m'_{\sigma(1)}),\cdots,q(m_p)+q(m'_{\sigma(p)})),\sigma\tau\right)\\
        &=\left(diag(q(m_1),\cdots,q(m_p)),\sigma)(diag(q(m'_1),\cdots,q(m'_p)),\tau\right)\\
        &=\phi^{N_Q}_{L}(M)\phi^{N_Q}_{L}(M')\; .
    \end{align*}
    Finally, it is easy to see that $\phi^{N_Q^{\otimes n}}_{L^{\otimes n}}$ is a homomorphism since $\phi^{N_Q}_{L}$ is.
\end{proof}

Lm.~\ref{lm: reduction to L^n} immediately implies the following special case of Thm.~\ref{thm: FCO-2}.

\begin{corollary}
    \mf{If $\eta:\Gamma\to N_Q^{\otimes n}$ is a quantum solution, then $\phi^{N_Q^{\otimes n}}_{L^{\otimes n}}\circ \eta:\Gamma\to L^{\otimes n}$ is a quantum solution.}
\end{corollary}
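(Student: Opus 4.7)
The plan is to check the three defining conditions of a quantum solution---$p$-torsion, row commutativity, and constraint satisfaction---for $\phi^{N_Q^{\otimes n}}_{L^{\otimes n}} \circ \eta$, using that $\phi^{N_Q^{\otimes n}}_{L^{\otimes n}}$ is a group homomorphism by Lm.~\ref{lm: reduction to L^n}. Writing $\Phi := \phi^{N_Q^{\otimes n}}_{L^{\otimes n}}$ for brevity, since homomorphisms transport relations to relations, after routinely checking the first two conditions the only real content is whether $\Phi$ correctly transports the right-hand side $\omega^{b_j}\one$ of the constraint equations.

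First I would verify $p$-torsion, which is immediate: $(\Phi\circ\eta)(x_k)^p = \Phi(\eta(x_k)^p) = \Phi(\one) = \one$. Row commutativity is equally straightforward: whenever $x_k, x_{k'}$ appear in the same row of $A$, $[(\Phi\circ\eta)(x_k),(\Phi\circ\eta)(x_{k'})] = \Phi([\eta(x_k),\eta(x_{k'})]) = \Phi(\one) = \one$. Both steps use only that $\Phi$ is a group homomorphism, requiring no structural input from Def.~\ref{def: q homomorphism}.

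For constraint satisfaction, applying $\Phi$ to $\prod_k \eta(x_k)^{A_{jk}} = \omega^{b_j}\one$ and using the homomorphism property yields $\prod_k (\Phi\circ\eta)(x_k)^{A_{jk}} = \Phi(\omega^{b_j}\one) = \Phi(\omega\one)^{b_j}$. The claim therefore reduces to showing that $\Phi$ fixes the central generator $\omega\one$. Decomposing $\omega\one = (\mathrm{diag}(\omega,\ldots,\omega),\id) \otimes \one^{\otimes (n-1)} \in N_Q^{\otimes n}$ and recalling that $\phi^{N_Q}_L$ acts entrywise through the homomorphism $q:\zz_{p^m}\to\zz_p$ from Def.~\ref{def: q homomorphism}, this reduces further to verifying $q(\omega) = \omega$ under the natural inclusion $\zz_p \hookrightarrow \zz_{p^m}$.

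The main (and essentially only) subtlety is this final identity on the center. It is automatic for $m = 1$, where $\Phi$ is literally the identity on $L$; for $m > 1$ it forces one to fix the inclusion $\zz_p \hookrightarrow \zz_{p^m}$ as a genuine section of the surjection $q$, which is precisely the role played by the restriction $T_{(p)}\leq Q\leq T_{(p^m)}$ highlighted in the preamble to this corollary. Once this compatibility is secured so that $\Phi$ is the identity on $\langle \omega\one\rangle \leq Z(N_Q^{\otimes n})$, the corollary follows at once from Lm.~\ref{lm: reduction to L^n}.
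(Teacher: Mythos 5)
Your torsion and commutativity checks are correct, and your reduction of the whole statement to the behaviour of $\Phi:=\phi^{N_Q^{\otimes n}}_{L^{\otimes n}}$ on the central subgroup $\langle\omega\one\rangle$ is exactly the right move; this is also where the paper's terse proof (via Lm.~\ref{lm: reduction to L^n} and \cite[Cor.~4]{FrembsChungOkay2022}) places the burden. The problem is your final step. With $q$ as given in Def.~\ref{def: q homomorphism} one has $q(z)=z^{p^{m-1}}$ on $p^m$-th roots of unity, so $q(\omega)=q\bigl(e^{2\pi i p^{m-1}/p^m}\bigr)=e^{2\pi i p^{m-1}/p}=1$ for every $m\geq 2$: the central generator is annihilated, not fixed. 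Moreover this cannot be repaired by ``fixing the inclusion $\zz_p\hookrightarrow\zz_{p^m}$ as a section of $q$'': the unique order-$p$ subgroup of $\zz_{p^m}$ is $p^{m-1}\zz_{p^m}$, which for $m\geq 2$ is contained in $p\zz_{p^m}$, and the latter is the kernel of \emph{every} surjective homomorphism $\zz_{p^m}\to\zz_p$. Hence no entrywise homomorphism $T_{(p^m)}\to T_{(p)}$ can restrict to the identity on $\langle\omega\one\rangle$, and the restriction $T_{(p)}\leq Q\leq T_{(p^m)}$ does not play the role you assign to it. As written, your argument only yields $\prod_k(\Phi\circ\eta)(x_k)^{A_{jk}}=\Phi(\omega\one)^{b_j}=\one$, i.e.\ a solution of the homogeneous system $Ax=0$, which is vacuous whenever some $b_j\neq 0$.

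So the crucial compatibility is asserted rather than proved, and under the stated definition of $q$ the assertion is false for $m\geq 2$; this is a genuine gap. To close it one must either restrict to $m=1$ (where $\Phi$ is the identity and the corollary is trivial), or replace $q$ by a map that is only required to be a homomorphism on $p$-torsion abelian subgroups and that fixes $T_{(p)}$ pointwise --- which is what the maps $\phi^{T_{(p)}}_H$ and $v_N$ elsewhere in the paper achieve at the cost of being defined case by case rather than entrywise on the diagonal torus. Note also that the paper's own appeal to \cite[Cor.~4]{FrembsChungOkay2022} implicitly requires $\Phi$ to act as the identity on the centre, so the difficulty you ran into is intrinsic to this construction and not an artefact of your route; but your proposal does not resolve it.
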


\begin{proof}
    This follows from Lm.~\ref{lm: reduction to L^n} and Cor.~\ref{cor: no qsol for L} and \cite[Cor.~4]{FrembsChungOkay2022}.
\end{proof}

Comparing with the proof of Thm.~\ref{thm: FCO-2} in App.~\ref{app: proof of main theorem}, the homomorphism $q:T_{(p^m)}\ra T_{(p)}$ in Def.~\ref{def: q homomorphism} enforces additional constraints on  to choice of generators $(\xi,\sigma)$ in the respective subgroups $A(\xi,\sigma)$ with $S_\xi\in T_{(p^m)}$ for $m\in\mathbb{N}$. While these are natural under the restriction to $T_{(p^m)}$, demanding $\phi^{N_Q}_L$ to be a homomorphism on $Q$ is more than we need. The generalisation to $Q\leq T$ in Thm.~\ref{thm: FCO-2} is achieved by removing these constraints.
\section{Proof of Thm.~\ref{thm:K to H}}\label{app: old lemmata}

In this section, we provide a proof for Thm.~\ref{thm:K to H}, which is based on slight modifications of Lm.~2, Lm.~7 and Lm.~8 from Ref.~\cite{FrembsChungOkay2022}. We re-state them here with (repeated, but adapted) proofs for completeness.

\begin{lemma}{\label{lem: xi,sigma to power n}}
    For every $(\xi,\sigma)\in K$ and $n\in\mathbb{N}$, it holds that $(\xi,\sigma)^n=\left(\sum_{i=0}^{n-1}\sigma^i\cdot\xi,\sigma^n\right)$.
\end{lemma}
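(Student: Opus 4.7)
The statement is a direct consequence of the group law $(\xi,\sigma)(\chi,\tau)=(\xi+\sigma\cdot\chi,\sigma\tau)$ recalled at the start of App.~\ref{app: proof of main theorem}, so the natural approach is induction on $n$.

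For the base case $n=1$, the right-hand side reads $(\sigma^0\cdot\xi,\sigma^1)=(\xi,\sigma)$, which trivially equals $(\xi,\sigma)^1$. For the inductive step, I would assume the identity for $n$ and compute
\begin{align*}
    (\xi,\sigma)^{n+1}
    &=(\xi,\sigma)^n(\xi,\sigma)
    =\left(\sum_{i=0}^{n-1}\sigma^i\cdot\xi,\ \sigma^n\right)(\xi,\sigma)\\
    &=\left(\sum_{i=0}^{n-1}\sigma^i\cdot\xi+\sigma^n\cdot\xi,\ \sigma^{n+1}\right)
    =\left(\sum_{i=0}^{n}\sigma^i\cdot\xi,\ \sigma^{n+1}\right),
\end{align*}
closing the induction. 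The only point to be careful about is that the additive notation for $\xi$ is consistent with the multiplicative notation for the diagonal matrix $S_\xi$, i.e.\ that $S_{\xi+\chi}=S_\xi S_\chi$, and that the $S_p$-action on $\xi$ by coordinate permutation respects this sum, $\sigma\cdot(\xi+\chi)=\sigma\cdot\xi+\sigma\cdot\chi$; both are immediate from the definitions. No obstacle is anticipated, as the statement is a routine unfolding of the semidirect product structure underlying $K$ (and more generally $N$).
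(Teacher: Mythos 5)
Your proof is correct and is essentially identical to the paper's: both proceed by induction on $n$, with the inductive step being a single application of the semidirect-product group law $(\xi,\sigma)(\chi,\tau)=(\xi+\sigma\cdot\chi,\sigma\tau)$. Your additional remark on the consistency of the additive notation is a reasonable sanity check but not needed beyond what the paper already sets up.
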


\begin{proof}
    We proceed by induction on $n$. The case $n=1$ is trivial, and the inductive step $k\ra k+1$ follows from
    \begin{align*}
        \hspace{3cm}
        (\xi,\sigma)^{k+1}
        =\left(\sum_{i=0}^{k-1}\sigma^i\cdot\xi,\sigma^k\right)(\xi,\sigma)
        =\left(\sigma^k\cdot\xi+\sum_{i=0}^{k-1}\sigma^i\cdot\xi,\sigma^{k+1}\right)
        =\left(\sum_{i=0}^k\sigma^i\cdot\xi,\sigma^{k+1}\right)\; .\hspace{3cm}\qedhere
    \end{align*}
\end{proof}

\vspace{.2cm}

\begin{lemma}{\label{lem: pull chi out in K}}
    Let $M=S_\xi X^b=(\xi,\sigma^b)\in K$ and $S_\chi=(\chi,1)\in H\cap T$, then
    \begin{align*}
        \phi^K_H(S_{\chi}M)=\phi^K_H(S_{\chi})\phi^K_H(M)=S_\chi \phi^K_H(M)\; .
    \end{align*}
\end{lemma}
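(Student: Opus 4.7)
The plan is case analysis on the exponent $b$. The base cases $b \in \{0, 1\}$ should follow directly from Prop.~\ref{prop:tp to h, homo, z_hat invariant}; the nontrivial work lies in the range $1 < b \leq p-1$, which I would bootstrap from the $b = 1$ case using the recursive definition $\phi^K_H(M) = \phi^K_H(M^{b^{-1}})^b$ together with the power formula in Lm.~\ref{lem: xi,sigma to power n}.

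For $b = 0$ (resp.~$b = 1$), $S_\chi M$ equals $S_{\chi + \xi}$ (resp.~$S_{\chi + \xi} X$); applying the definition of $\phi^K_H$ and the fact that $\phi^{T_{(p)}}_H$ is a homomorphism fixing $H \cap T$ (Prop.~\ref{prop:tp to h, homo, z_hat invariant}) peels off $\phi^{T_{(p)}}_H(S_\chi) = S_\chi$ as a separate factor, giving the claim in these cases.

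For $1 < b \leq p-1$, set $a = b^{-1} \bmod p$. Lm.~\ref{lem: xi,sigma to power n} yields
\[
M^a = \left(\sum_{i=0}^{a-1} \sigma^{bi}\cdot \xi,\;\sigma\right), \qquad (S_\chi M)^a = S_{\chi'}\,M^a, \quad \chi' := \sum_{i=0}^{a-1} \sigma^{bi}\cdot \chi,
\]
and since $\chi$ is linear (as $S_\chi \in H \cap T$) and the $\sigma^{bi}$-action preserves linearity, $S_{\chi'} \in H \cap T$. Applying the already-settled $b = 1$ case to $(S_\chi M)^a = S_{\chi'} M^a$ gives $\phi^K_H((S_\chi M)^a) = S_{\chi'}\,\phi^K_H(M^a)$. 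Raising both sides to the $b$-th power inside $H$ and invoking Lm.~\ref{lem: xi,sigma to power n} once more produces $\phi^K_H(S_\chi M) = S_{\chi''}\,\phi^K_H(M)$, where
\[
\chi'' \;=\; \sum_{j=0}^{b-1}\sum_{i=0}^{a-1} \sigma^{j+bi}\cdot \chi \;=\; \sum_{k=0}^{ab-1} \sigma^k\cdot \chi,
\]
the double sum collapsing since $(i,j) \mapsto j + bi$ is a bijection onto $\{0,1,\ldots,ab-1\}$.

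The main obstacle---and essentially the only substantive computational step---is verifying $\chi'' = \chi$. Writing $ab = 1 + np$ (from $ab \equiv 1 \bmod p$) and using $\sigma^p = \id$, the sum splits into $n$ full orbit-sums $\sum_{m=0}^{p-1} \sigma^m \cdot \chi$ plus the residual term $\sigma^{np}\cdot \chi = \chi$. For linear $\chi(q) = a_0 + cq$ with $a_0, c \in \Z_p$, a direct computation gives $(\sigma^m \cdot \chi)(q) = (a_0 - cm) + cq$, so each orbit-sum evaluates (additively in $T_{(p)}$) to $p a_0 - c\,p(p-1)/2 + pcq \equiv 0 \bmod p$ for $p$ odd. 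Hence $\chi'' = \chi$, completing the proof. The fact that linearity of $\chi$ is genuinely required here---rather than mere $p$-torsion---is exactly why the hypothesis demands $S_\chi \in H \cap T$ rather than just $S_\chi \in T_{(p)}$.
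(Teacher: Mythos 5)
Your proof is correct and follows essentially the same route as the paper's: both unwind the recursive definition of $\phi^K_H$ via the inverse exponent $a=b^{-1}$, reduce to the $b\leq 1$ cases via Prop.~\ref{prop:tp to h, homo, z_hat invariant}, and hinge on the identity $\sum_{j=0}^{b-1}\sigma^j\cdot\bigl(\sum_{i=0}^{a-1}\sigma^{bi}\cdot\chi\bigr)=\chi$. The only difference is cosmetic: the paper reads this identity off the group law $((\chi,\sigma^b)^a)^b=(\chi,\sigma^b)$, whereas you verify it by reindexing to $\sum_{k=0}^{ab-1}\sigma^k\cdot\chi$ and computing the orbit sums explicitly, which usefully makes visible where oddness of $p$ enters.
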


\begin{proof}
    Let $y\in\mathbb{N}$ such that $\sigma^{yb}=\sigma$, then
    $$(\chi,\sigma^b)=((\chi,\sigma^b)^y)^b=\left(\sum_{i=1}^{y-1}\sigma^{bi}\cdot\chi,\sigma\right)^b=\left(\sum_{j=0}^{b-1}\sigma^j\cdot\left((\sum_{i=0}^{y-1}\sigma^{bi}\cdot\chi\right),\sigma^b\right)\; .$$
    Since $\phi^{T_{(p)}}_H(S_\chi)=S_\chi$ by Prop.~\ref{prop:tp to h, homo, z_hat invariant} and $\sigma\cdot S_\chi\in H\cap T$, we have $\phi^{T_{(p)}}_H(\sigma\cdot\chi)=\sigma\cdot\chi=\sigma\cdot\phi^{T_{(p)}}_H(\chi)$. With Prop.~\ref{prop:tp to h, homo, z_hat invariant} again,
    $$
    \chi=\sum_{j=0}^{b-1}\sigma^j\cdot\left(\sum_{i=0}^{y-1}\sigma^{bi}\cdot\chi\right)=\sum_{j=0}^{b-1}\sigma^j\cdot\phi^{T_{(p)}}_H\left(\sum_{i=0}^{y-1}\sigma^{bi}\cdot\chi\right)
    $$
    The statement now follows from the following computation, which uses Lm.~\ref{lem: xi,sigma to power n} and Prop.~\ref{prop:tp to h, homo, z_hat invariant} repeatedly,
    \begin{align*}
        \phi^K_H(S_\chi M)
        &=\phi^K_H(\chi+\xi,\sigma^b)\\
        &=\phi^K_H(((\chi+\xi,\sigma^b)^y)^b)\\
        &=\phi^K_H\left(\left(\sum_{i=0}^{y-1}\sigma^{bi}\cdot\chi+\sum_{i=0}^{y-1}\sigma^{bi}\cdot\xi,\sigma\right)^b\right)\\
        &=\left(\phi^{T_{(p)}}_H\left(\sum_{i=0}^{y-1}\sigma^{bi}\cdot\chi\right)+\phi^{T_{(p)}}_H\left(\sum_{i=0}^{y-1}\sigma^{bi}\cdot\xi\right),\sigma\right)^b\\
        &=\left(\sum_{j=0}^{b-1}\sigma^j\cdot\phi^{T_{(p)}}_H\left(\sum_{i=0}^{y-1}\sigma^{bi}\cdot\chi\right)+\sum_{j=0}^{b-1}\sigma^j\cdot\phi^{T_{(p)}}_H\left(\sum_{i=0}^{y-1}\sigma^{bi}\cdot\xi\right),\sigma^b\right)\\
        &=\left(\chi+\sum_{j=0}^{b-1}\sigma^j\cdot\phi^{T_{(p)}}_H\left(\sum_{i=0}^{y-1}\sigma^{bi}\cdot\xi\right),\sigma^b\right)\\
        &=(\chi,1)\left(\sum_{j=0}^{b-1}\sigma^j\cdot\phi^{T_{(p)}}_H\left(\sum_{i=0}^{y-1}\sigma^{bi}\cdot\xi\right),\sigma^b\right)\; .
    \end{align*}
    On the other hand, ones equally computes
    \begin{align*}
        \phi^K_H(S_\chi)\phi^K_H(M)
        &=(\chi,1)\phi^K_H(((\xi,\sigma^b)^y)^b)\\
        &=(\chi,1)\phi^K_H\left(\left(\sum_{i=0}^{y-1}\sigma^{bi}\cdot\xi,\sigma\right)^b\right)\\
        &=(\chi,1)\left(\phi^{T_{(p)}}_H\left(\sum_{i=0}^{y-1}\sigma^{bi}\cdot\xi\right),\sigma\right)^b\\
        &=(\chi,1)\left(\sum_{j=0}^{b-1}\sigma^j\cdot\phi^{T_{(p)}}_H\left(\sum_{i=0}^{y-1}\sigma^{bi}\cdot\xi\right),\sigma^b\right)\\
        &=\phi^K_H(S_\chi M)\; .\qedhere
    \end{align*}
\end{proof}

\begin{lemma}{\label{lem: M,N not in T}}
    Let $M,M'\in K$ with $M,M'\not\in T_{(p)}$ and $[M,M']=\omega^c\one$ for $c\in\Z_p$, then $\phi^K_H(MM')=\phi^K_H(M)\phi^K_H(M')$.
\end{lemma}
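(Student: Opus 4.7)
The plan is to adapt the structure of the proof of Lm.~\ref{lem: pull chi out in K} (which handled the case where one factor lies in $H\cap T$) to the case where both factors are genuinely non-diagonal, using the $p$-torsion and commutation-up-to-phase hypotheses to force enough structure on the phase parts that $\phi^{T_{(p)}}_H$ (which keeps only the degree $\leq 1$ part of the polynomial description of a diagonal phase) respects the product. Write $M = (\xi,\sigma^b)$ and $M' = (\xi',\sigma^{b'})$ with $b,b' \in \zz_p^*$ and $\sigma$ the cyclic shift associated with the Pauli $X$. The hypothesis $[M,M']=\omega^c\one$ reads, at the level of diagonal phases, $\xi + \sigma^b \cdot \xi' - \sigma^{b'} \cdot \xi - \xi' = c\cdot\mathbf{1}$, while $M^p = M'^p = \one$ translates by Lm.~\ref{lem: xi,sigma to power n} into $\sum_{i=0}^{p-1} \sigma^{bi}\cdot\xi = 0$ and analogously for $\xi'$.

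First I would reduce $M$ and $M'$ to canonical form via Lm.~\ref{lem: xi,sigma to power n}: set $\tilde\xi = \sum_{i=0}^{b^{-1}-1}\sigma^{bi}\cdot\xi$ so that $M^{b^{-1}} = (\tilde\xi,\sigma)$ and, by the recursive clause of Def.~\ref{def: phi^K_H}, $\phi^K_H(M) = (\phi^{T_{(p)}}_H(\tilde\xi),\sigma)^b$ (and analogously for $M'$). I would then split into two subcases according to whether $b+b'\equiv 0 \pmod p$. In the generic subcase $b+b'\not\equiv 0 \pmod p$, one has $MM' = (\xi+\sigma^b\cdot\xi',\sigma^{b+b'})\notin T_{(p)}$, and $\phi^K_H(MM')$ is obtained by applying $\phi^{T_{(p)}}_H$ to the phase of $(MM')^{(b+b')^{-1}}$ and then taking the $(b+b')$-th power. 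Expanding both sides via Lm.~\ref{lem: xi,sigma to power n} and repeatedly using Lm.~\ref{lem: pull chi out in K} to transport scalar phases outside $\phi^K_H$, the identity reduces to equating $\phi^{T_{(p)}}_H$ of two averaged combinations of shifted copies of $\xi$ and $\xi'$; here the commutation identity above is fed in to show that the offending differences $\phi^{T_{(p)}}_H(\sigma^k\cdot\xi) - \sigma^k\cdot\phi^{T_{(p)}}_H(\xi)$ cancel after summation over the full orbit. In the degenerate subcase $b+b'\equiv 0 \pmod p$, $MM'\in T_{(p)}$ and $\phi^K_H(MM') = \phi^{T_{(p)}}_H(\xi+\sigma^b\cdot\xi')$; I would then pull the scalar phases in $\phi^K_H(M)\phi^K_H(M')$ together using Lm.~\ref{lem: pull chi out in K} and match the two diagonal answers using the homomorphism property of $\phi^{T_{(p)}}_H$ on $T_{(p)}$ from Prop.~\ref{prop:tp to h, homo, z_hat invariant}.

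The main obstacle will be controlling the discrepancy $\phi^{T_{(p)}}_H(\sigma^k\cdot\xi) - \sigma^k\cdot\phi^{T_{(p)}}_H(\xi)$, which is generically nonzero whenever $\xi$ has nontrivial degree $\geq 2$ part, yet must drop out of the relevant alternating sums on both sides. This is where the commutation constraint $\xi+\sigma^b\xi'-\sigma^{b'}\xi-\xi'=c\cdot\mathbf{1}$ enters essentially: it forces the nonlinear parts of $\xi$ and $\xi'$ to be related by a compatible $\sigma$-twist, which is exactly the structural input needed for the two sides to agree after the reductions above. Keeping the bookkeeping of the indices $b,b^{-1},b',b'^{-1},(b+b')^{-1}$ straight, so that the combinatorial identities arising from $\sum_{i=0}^{p-1}\sigma^{bi}\cdot\xi=0$ close up the argument cleanly (as in the computation at the end of the proof of Lm.~\ref{lem: pull chi out in K}), is the technical heart of the proof.
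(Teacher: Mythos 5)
Your setup is correct and your subcase split ($b+b'\equiv 0$ or not mod $p$) matches the shape of the problem, but the proposal stops exactly where the proof has to start. The step you yourself flag as ``the technical heart'' --- showing that the discrepancies $\phi^{T_{(p)}}_H(\sigma^k\cdot\xi)-\sigma^k\cdot\phi^{T_{(p)}}_H(\xi)$ drop out of the comparison --- is asserted, not established, and the mechanism you propose for it (``cancellation after summation over the full orbit'') does not apply as stated: the sums entering $\phi^K_H(M)$, $\phi^K_H(M')$ and $\phi^K_H(MM')$ via Def.~\ref{def: phi^K_H} and Lm.~\ref{lem: xi,sigma to power n} are \emph{partial} sums of $b^{-1}$, $b'^{-1}$ and $(b+b')^{-1}$ shifted copies of the phases, not full $\sigma$-orbits, so the $p$-torsion identity $\sum_{i=0}^{p-1}\sigma^{bi}\cdot\xi=0$ does not close them up; and $\phi^{T_{(p)}}_H$ commutes only with the action of $\mathrm{Im}(\phi^N_{S_p})$ (Lm.~\ref{lem: action invariant}), not with powers of the cyclic shift $\sigma$.

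The missing ingredient is structural rather than combinatorial. The paper's proof invokes Lm.~4(3) of Ref.~\cite{FrembsChungOkay2022} (the $K$-version of case 3 of Lm.~\ref{lem: comm_tildeK}): for $M,M'\in K\setminus T_{(p)}$ with $M^p=M'^p=\one$ and $[M,M']\in Z(K)$, one necessarily has $M=S_\chi M'^{\,y}$ for some $S_\chi\in\hat{\Z}_p$ and $y\in\Z_p$, i.e.\ the commutation-up-to-phase hypothesis forces $M$ and $M'$ into a single cyclic subgroup up to a linear character. Once this is in hand the lemma is a two-line computation: writing $M'=(\xi',\sigma)^{b'}$, both $\phi^K_H(MM')$ and $\phi^K_H(M)\phi^K_H(M')$ reduce to $S_\chi\bigl(\phi^{T_{(p)}}_H(\xi'),\sigma\bigr)^{(y+1)b'}$, using that $\phi^K_H$ is multiplicative on powers of a fixed element by construction and that $S_\chi$ pulls out by Lm.~\ref{lem: pull chi out in K}. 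If you want to avoid citing that result, you must re-derive it inline from your difference equation $\xi+\sigma^b\cdot\xi'-\sigma^{b'}\cdot\xi-\xi'=c\cdot\mathbf{1}$ together with the two $p$-torsion conditions; that derivation is a genuine argument (it is where the hypothesis $[M,M']\in\langle\omega\one\rangle$ is actually consumed), and it is precisely what your sketch defers.
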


\begin{proof}
    By Lm.~\ref{lem: xi,sigma to power n}, without loss of generality, we assume $N=(\xi,\sigma)^b$. By \cite[Lm.~4 (3)]{FrembsChungOkay2022}, there exists $\chi\in\hat{\Z}_p$ and $y\in \Z_p$ such that $M=S_\chi N^y=S_\chi(\xi,\sigma)^{yb}$. By Lm.~\ref{lem: s_chi pull out}, we have
    \begin{align*}
        \phi^K_H(MM')
        =\phi^K_H(S_\chi(\xi,\sigma)^{yb}(\xi,\sigma)^b)
        =\phi^K_H(S_\chi(\xi,\sigma)^{yb+b})
        =S_\chi(\phi^{T_{(p)}}_H(\xi),\sigma)^{yb+b}\; ,
    \end{align*}
    On the other hand, we have
    \begin{align*}
        \hspace{.75cm}
        \phi^K_H(M)\phi^K_H(M')
        =\phi^K_H(S_\chi(\xi,\sigma)^{yb})\phi^K_H((\xi,\sigma)^b)
        =S_\chi(\phi^{T_{(p)}}_H(\xi),\sigma)^{yb}(\phi^{T_{(P)}}_H(\xi),\sigma)^b
        =S_\chi(\phi^{T_{(p)}}_H(\xi),\sigma)^{yb+b}\; .
        \hspace{1cm}\qedhere
    \end{align*}
\end{proof}

\vspace{.2cm}

\begin{proof}[Proof of Thm.~\ref{thm:K to H}]
    There are three cases to consider:
    \begin{itemize}
        \item Case 1: $M,M'\in T_{(p)}$. Follows from Prop.~\ref{prop:tp to h, homo, z_hat invariant}.
        \item Case 2: $M\in T_{(p)}$ and $M'\not\in T_{(p)}$. By \cite[Lm.~4]{FrembsChungOkay2022}, $M\in H\cap T$, hence, the result follows from
        \begin{align*}
            \phi^K_H(MM')
            =\phi^K_H(\omega^\alpha M'M)
            =\omega^\alpha \phi^K_H(M')\phi^K_H(M)
            =\omega^\alpha\omega^{-\alpha}\phi^K_H(M)\phi^K_H(M')
            =\phi^K_H(M)\phi^K_H(M')\; ,
        \end{align*}
        where we used $[M,M']\in Z(K)$ in the first step, Lm.~\ref{lem: pull chi out in K} in the second, and $[M,M']=[\phi^K_H(M),\phi^K_H(M')]$ as follows from \cite[Cor.~3]{FrembsChungOkay2022} in the third. The case $M\not\in T_{(p)}$ and $M'\in T_{(p)}$ is analogous.
        \item Case 3: $M,M'\notin T_{(p)}$. Then result follows from Lm.~\ref{lem: M,N not in T}.\hfill\qedhere
    \end{itemize}
\end{proof}

\end{document}